\documentclass[a4paper,10pt]{article}
\usepackage[margin=1in]{geometry}

\usepackage[authoryear,longnamesfirst]{natbib}

\usepackage{microtype}
\usepackage{graphicx}
\usepackage{subcaption}
\usepackage{tikz}
\usepackage{array,epsfig, fancyhdr,rotating}
\usepackage{algorithm}
\usepackage{algorithmic}
\usepackage{placeins}
\usepackage{multirow}
\usepackage[utf8]{inputenc} 
\usepackage[T1]{fontenc}    
\usepackage{hyperref}       
\usepackage{url}            
\usepackage{booktabs}       
\usepackage{amsfonts}       
\usepackage{nicefrac}       
\usepackage{microtype}      
\usepackage{xcolor}         
\usepackage{amsmath}
\usepackage{amssymb}
\usepackage{mathtools}
\usepackage{amsthm}
\usepackage{enumerate}
\usepackage{mathrsfs}
\usepackage{bbm}
\usepackage{enumitem}

\newcommand{\andd}{\qquad\text{and}\qquad}

\newcommand{\ind}[1]{\mathbbm{1}\left\{#1\right\}}

\newcommand{\rdd}{\mathbb{R}^{d}}
\newcommand{\re}{\mathbb{R}}

\newcommand{\Prr}[1]{\Pr\left(#1\right)}

\newcommand{\norm}[1]{|#1|}

\newcommand{\eqd}{\stackrel{\text{d}}{=}}

\newcommand{\sumn}{\sum_{i=1}^n}

\newcommand{\D}{{\rm D}}

\newcommand{\E}{\mathbb{E}}
\DeclareMathOperator*{\Proj}{ {\rm Proj}}

\DeclareMathOperator*{\argmin}{ {\rm argmin}}
\DeclarePairedDelimiter\ceil{\lceil}{\rceil}
\DeclarePairedDelimiter\floor{\lfloor}{\rfloor}
\DeclarePairedDelimiter\ip{\langle}{\rangle}

\DeclareMathOperator{\var}{Var}
\DeclareMathOperator{\EC}{EC}

\newcommand{\GS}{\mathrm{GS}}

\newcommand{\cD}{\mathcal{D}}
\newcommand{\cG}{\mathcal{G}}

\newcommand{\cA}{\mathcal{A}}
\newcommand{\cN}{\mathcal{N}}
\newcommand{\cF}{\sF}
\newcommand{\cS}{\mathcal{S}}

\newcommand{\cH}{\mathcal{H}}

\newcommand{\sH}{\mathscr{H}}
\newcommand{\sF}{\mathscr{F}}

\newcommand{\VC}{\operatorname{VC}}

\newcommand{\bfx}{\mathbf{x}}
\newcommand{\bfz}{\mathbf{z}}
\newcommand{\bfy}{\mathbf{y}}

\newcommand{\bbN}{\mathbb{N}}

\newenvironment{keywords}{%
  \par\noindent\textbf{Keywords: }\ignorespaces
}{%
  \par
}

\newtheorem{theorem}{Theorem}
\newtheorem{proposition}[theorem]{Proposition}
\newtheorem{lemma}[theorem]{Lemma}

\theoremstyle{definition}
\newtheorem{definition}[theorem]{Definition}

\newtheorem{condition}[theorem]{Condition}
\theoremstyle{remark}
\newtheorem{remark}[theorem]{Remark}

\begin{document}
\let\WriteBookmarks\relax
\def\floatpagepagefraction{1}
\def\textpagefraction{.001}



\title{Computationally tractable robust differentially private mean estimation}



%

\author{Kelly Ramsay}








\maketitle
\begin{abstract}
We develop a new, differentially private mean estimator called the balloon mean. The main features of the balloon mean are that it is computationally tractable and enjoys robustness to outlying observations. It is based on an iterative clipping procedure over expanding Mahalanobis balls, or ``balloons.'' The method satisfies zero-concentrated differential privacy and depends on a small number of interpretable tuning parameters. We provide theoretical guarantees under heavy-tailed and contaminated elliptical models, characterizing its statistical performance and robustness to outliers. Extensive simulations demonstrate that the balloon mean is robust to heavy-tailed and contaminated data, and often outperforms existing differentially private mean estimators in contaminated settings.
\end{abstract}



\begin{keywords}
Differential privacy; Robust mean estimation; Elliptical distributions; Heavy-tailed data; Contamination
\end{keywords}

\section{Introduction}
Differentially private mean estimation has been an active area of research in recent years, enjoying substantial statistical and algorithmic progress, see e.g., \citep{Biswas2020,Kamath2020,Hopkins2021,Huang2021,Brown2023,2021Liub,Yu2024} and the references therein. 
On the other hand, robust mean estimation is a central problem in statistics and theoretical computer science, particularly in high dimensions and under adversarial contamination \citep{Tukey1974,Diakonikolas2016,Cheng2020,Lugosi2021}. 
Recently, several authors have begun exploring robust and private mean estimation \citep{Hopkins2021,2021Liu,2021Liub,kothari2022,Yu2024,Ramsay2025JMLR,Ramsay2025EJS,Brown2025}. 

Despite this growing literature, existing robust and private mean estimators still suffer from limitations. 
Many proposed methods are either computationally heavy, rely on sophisticated convex or sum-of-squares relaxations, or Markov chain Monte Carlo (MCMC). Others are not designed to scale well in high dimensions, or require tuning parameters that can be difficult to interpret or tune in practice. Moreover, much of the robust private literature focuses on \emph{approximate} differential privacy, whereas substantially fewer methods provide robustness guarantees under stronger privacy notions such as zero-concentrated or pure differential privacy. 
These limitations motivate the development of new estimators, and so we introduce a new differentially private mean estimator, called the \textit{balloon mean}, that is computationally efficient, depends on simple, interpretable tuning parameters, satisfies stronger forms of differential privacy, and enjoys robustness to heavy tails and some robustness to adversarial contamination.

The balloon mean can be described as follows. We start with an initial ``balloon'' or ellipse, which is defined by an initial radius, a covariance matrix, known or privately estimated, and a center. 
The data are then projected onto the balloon and a crude estimate of the mean is computed via a noisy mean of the projections, denoted by $\tilde\mu_1$. 
Then, we privately blow up an ellipse, or ``balloon'' centered at $\tilde\mu_1$ until it contains most of the data, precisely. $100\tau$\% of the data. 
We then produce a new private estimate of $\mu$, say $\tilde\mu_2$, by projecting the data onto the new balloon and computing a noisy mean of the projections. 
We iterate several times between ``balloon blowing'' and naive private mean estimation to arrive at our final estimate. 
This procedure is demonstrated in Figure~\ref{fig::algorithm-demonstration}. 

The balloon mean avoids MCMC, gradient-based optimization, and sum-of-squares subroutines, relying instead on simple linear-algebraic operations with a few interpretable tuning parameters, making it computationally appealing. It satisfies zero-concentrated differential privacy, with a natural extension to pure differential privacy if desired, providing strong privacy guarantees. Moreover, taking smaller values of the parameter $\tau$ provides robustness against adversarial contamination. We provide theoretical analysis of the balloon mean, yielding finite-sample results under heavy-tailed and adversarially contaminated elliptically distributed data in the known covariance setting. 

\begin{figure}[t]
    \centering
    \includegraphics[width=0.65\linewidth]{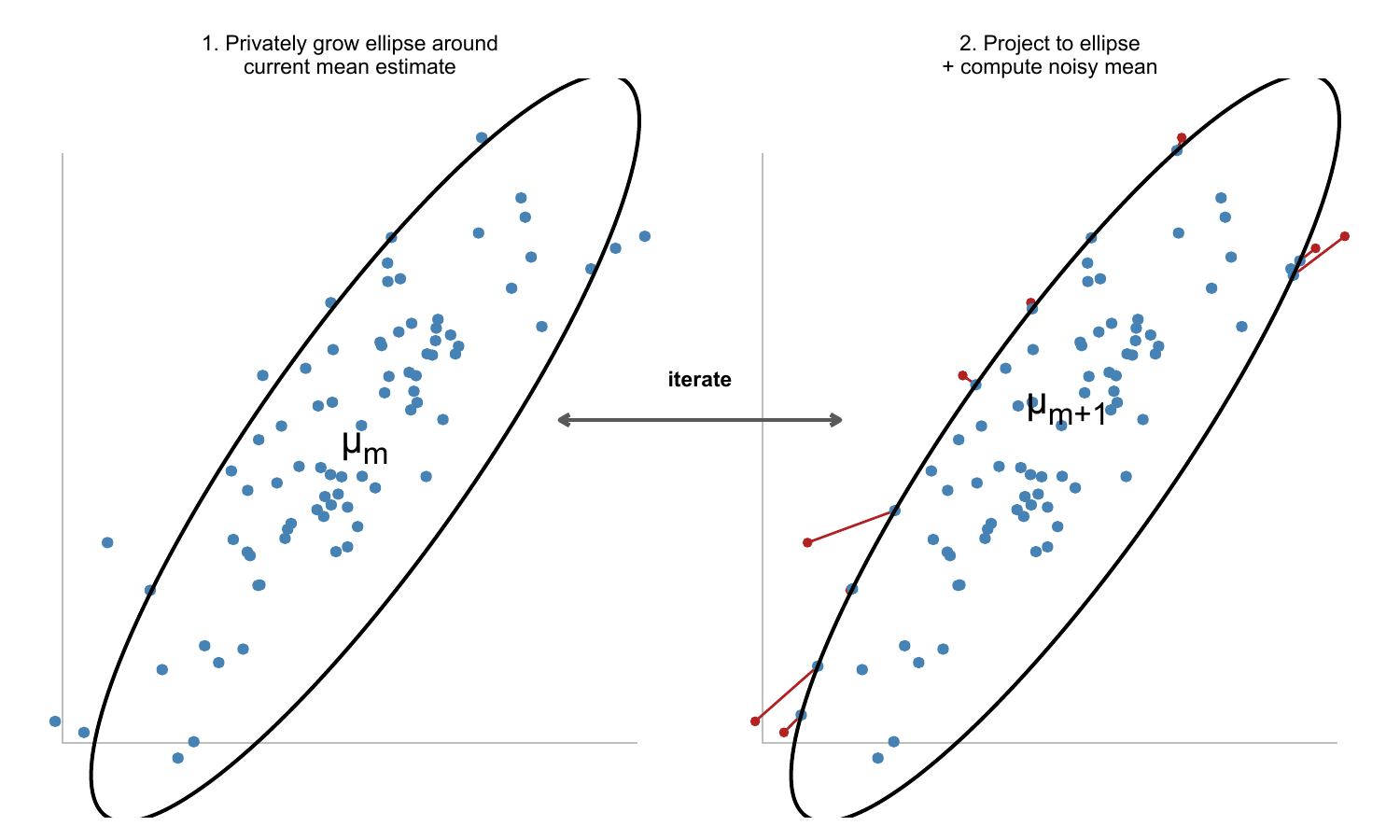}
    \includegraphics[width=0.33\linewidth]{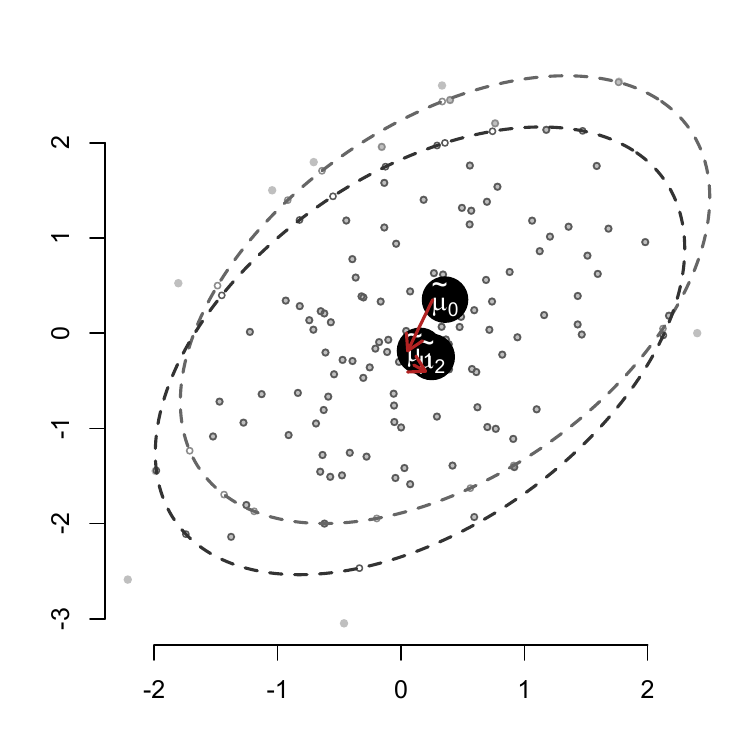}
    \caption{The balloon mean procedure illustrated. Starting from an initial balloon, or ellipse, we iterate between noisy means of projections onto the current balloon, and privately estimating a balloon that contains most of the data. The rightmost panel shows an example of two iterations of the iterated balloon mean algorithm. The dashed ellipses are the private balloons at each iteration, and successive private mean estimates are labeled.}
    \label{fig::algorithm-demonstration}
\end{figure}

\subsection{Related work}
As mentioned earlier, differentially private mean estimation is an active area of research. 
Multivariate mean estimation under approximate differential privacy has been well-studied \citep{2021Liub,2021Liu,Brown2021,Esfandiari2022,Duchi2023,Brown2023,Dagan2024,Brown2025,Ramsay2025EJS}. 
Under stronger privacy guarantees, such as pure or zero-concentrated differential privacy, many authors have considered mean estimators for a sample of subgaussian data \citep{Kamath2018, Bun2019b,Cai2019,Biswas2020}. 
Several works relax these assumptions using moment conditions, contamination models, or both \citep{Hopkins2021,Yu2024,Ramsay2025JMLR}, while maintaining stronger privacy guarantees. 
Some of these estimators rely on procedures that are computationally intensive or do not currently admit efficient implementations \citep{Hopkins2021,Ramsay2025JMLR}. 
To our knowledge, there are few computationally tractable estimators that simultaneously achieve robustness to contamination and strong privacy guarantees. 
\citet{Yu2024} provide a private, robust mean estimator based on differentially private $M$-estimation, in which robustness is achieved by minimizing a Huber loss function, through a privatized gradient descent. Rather than optimizing a robust loss, we iteratively localize the mean by privately identifying Mahalanobis balls containing most of the data and projecting observations onto these regions before averaging. 

The proposed algorithm bears some similarity to some existing algorithms, and we clarify important differences below. 
Our mean can be thought of as a multivariate, iterated private winsorized mean \citep{Ramsay2025}. 
We note that the multivariate setting is substantially more challenging, and requires several non-trivial innovations, algorithmically and theoretically, including iteration, private recentering, and handling correlation between coordinates. 
Another similar work is the \texttt{COINPRESS} mean \citep{Biswas2020}. While \texttt{COINPRESS} also computes noisy means of data clipped to centrally defined balls or ellipsoids, the radii are fixed in advance and follow a shrinking schedule, whereas our method privately and adaptively selects the radius at each iteration via a balloon blowing procedure. Lastly, the work of \citet{Huang2021} is also closely related. They introduce a differentially private mean estimator that also selects the clipping radius privately from the data. In their approach, they preprocess the data, after which the clipping radius is selected via a private quantile of the preprocessed sample norms. From there, they generate a single clipped mean, which yields an instance-optimality guarantee. In contrast, the balloon mean is inherently iterative, repeatedly recentering and projecting onto Mahalanobis balls rather than selecting a single clipping threshold. Beyond these algorithmic differences, \citet{Huang2021} focus on instance-optimality for empirical mean release, while our focus is on robustness and statistical optimality under heavy-tailed and contaminated elliptical models. 
We compare our method to both those of \citet{Biswas2020} and \citet{Huang2021} in Section~\ref{sec::sim}. 


\subsection{Contributions}

Our main contributions are: 
\begin{itemize}
    \item We introduce a novel differentially private mean estimator, the balloon mean. The balloon mean is based on iterated Mahalanobis clipping. The balloon mean is computationally simple, depends on a small number of interpretable parameters, and includes a natural robustness parameter to control resistance to outliers.
    \item We present a theoretical result, under both a heavy-tailed and an adversarially contaminated elliptical distribution model, giving explicit finite-sample error bounds in terms of dimension, sample size, privacy level, and contamination. Our result implies that in the ``heavy-tailed setting'', i.e., when the data have finite second moment, the balloon mean achieves the minimax-optimal rate of convergence up to logarithmic factors.
    \item Through extensive simulations across dimensions, privacy levels, and distributions, we show that the balloon mean is computable in high dimensions, robust to heavy-tailed and contaminated data, and outperforms existing computationally tractable private mean estimators in heavy-tailed and contaminated settings.
\end{itemize}

\section{Background}
\subsection{Problem}
We begin by formalizing the estimation problem. 
We suppose there exists a random sample of $d$-dimensional vectors, $X_1, \ldots, X_n\sim \nu$, where $\nu$ denotes the unknown population distribution or measure. 
We also suppose that we only observe an $\eta$-contaminated version of this sample, denoted $X_1', \ldots, X_n'$. That is $\floor{\eta n}$ observations in the original sample, are replaced by arbitrary values, chosen by some adversary. Importantly, these values can depend on the sample itself.
In our theoretical analysis, we assume that $\nu$ is an elliptical distribution.  
Elliptical distributions can be written in terms of three parameters: $\nu=\EC(\mu,\Sigma,F)$ , where $\mu$ is the mean, $\Sigma$ is the covariance matrix when it exists, or the scatter matrix otherwise, and $F$ is the distribution function of the length of the vector. 
In particular, if, $X\sim \EC(\mu,\Sigma,F)$, then we can write $X= \mu+\ell\Sigma^{1/2} u $ where $u$ is a uniformly random vector on the unit sphere and $\ell$ is a positive random variable which determines the length $X$, with $\ell\sim F$ and $\E\ell^2=d$. For more details on elliptical distributions, see \citep{Fang1990Symmetric}. 
This model is summarized in the following condition.
\begin{condition}\label{cond::F-inc}
Take an independent, identically distributed sample $X_1,\ldots,X_{n}\sim \nu=\EC(\mu,\Sigma,F)$, to be the uncorrupted points. Then
\begin{enumerate}[label=(\alph*)]
    \item Here, $F$ is absolutely continuous and the variance of $F$ is finite, i.e., $\int x^2dF<\infty$. 
    \item The points $X_1',\ldots,X_{n}'$ then differ arbitrarily from the uncorrupted points on at most $\eta n$ points, $0 \leq \eta \leq 1/2$, where the corrupted values may depend on the uncorrupted values. 
\end{enumerate}
\end{condition}
Note that two types of contamination are included in the model put forth in Condition~\ref{cond::F-inc}. First, assuming that only two moments exist constitutes contamination from heavy tails. A stronger form of contamination comes from the ``adversarial corruption'' model, where an $\eta$-proportion of the observations are replaced by adversarial values. Thus, we will evaluate our proposed estimator under the weaker heavy-tailed model ($\eta=0$) and the stronger adversarially corrupted model ($\eta>0$).  

\subsection{Differential privacy}\label{sec::dp}
Next, we introduce the essential concepts from differential privacy \citep{Dwork2006, Dwork2014}. 
Define a dataset of size $n\in \bbN$ to be a set of $n$ vectors in $\rdd$, and let $\cD_n$ be the set of datasets of size $n$. 
We say that a dataset $\bfx_n\in \cD_n$, is adjacent to another dataset $\bfy_n\in \cD_n$ if $\bfx_n$ and $\bfy_n$ differ by exactly one vector. 
Let $\cA_n$ be the set of pairs of adjacent datasets of size $n$. 
Let $G_{\bfx_n}$ be a probability distribution over $\rdd$ with $d\in\bbN$ that depends on $\bfx_n$. 
That is, the distribution $G_{\bfx_n}$ is a function of the dataset $\bfx_n$
If $G_{\bfx_n}$ is absolutely continuous, let $g_{\bfx_n}$ be the associated probability density function. 
For $\alpha>1$, denote the $\alpha$-R\'enyi Divergence between distributions, $F$ and $H$, by $\D_\alpha (F|H)$. 
We now define \emph{zero-concentrated differential privacy} (zCDP) \citep{dwork2016concentrated,Bun2016}. 
\begin{definition}\label{dfn::pdp}
The quantity $\theta\sim G_{\bfx_n}$ is $\rho$-zero-concentrated differentially private ($\rho$-zCDP), $\rho > 0$, if, for all $\alpha\in (1,\infty)$, 
\begin{equation*}
    \sup_{(\bfy_n,\bfz_n)\in \cA_n}\D_\alpha (G_{\bfy_n}|G_{\bfz_n})\leq \alpha\rho.
    \end{equation*}
\end{definition}
\noindent Here, $\theta$ is a $d$-dimensional differentially private quantity and $\rho$ is the \emph{privacy budget}, with smaller values of $\rho$ producing higher levels of privacy. 
The following two properties are useful for developing differentially private estimates \citep{dwork2016concentrated,Bun2016}. 
\begin{proposition}\label{prop::dp} The following hold:
\begin{itemize}
    \item \textbf{Post-processing:} For any function $f$ that does not depend on $\bfx_n$, defined on a subset of $\rdd$, if $\theta$ satisfies $\rho$-zCDP, then $f(\theta)$ satisfies $\rho$-zCDP. 
    \item \textbf{Composition:} If $\theta_1,\theta_2$ satisfy $\rho_1$-zCDP and $\rho_2$-zCDP, respectively, then $(\theta_1,\theta_2)$ satisfies $(\rho_1+\rho_2)$-zCDP.
\end{itemize}
\end{proposition}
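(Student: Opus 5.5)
Both properties follow directly from Definition~\ref{dfn::pdp} together with two standard facts about R\'enyi divergence: the data-processing inequality and a chain rule for product-type (adaptively composed) distributions. I would prove the two bullets separately.

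\textbf{Post-processing.} Fix an adjacent pair $(\bfy_n,\bfz_n)\in\cA_n$ and $\alpha>1$. Write $G_{\bfy_n}\circ f^{-1}$ for the law of $f(\theta)$ when $\theta\sim G_{\bfy_n}$. Because $f$ does not depend on the data, the same Markov kernel (the deterministic map $f$, or a data-independent randomized map) is applied under both datasets. The data-processing inequality for R\'enyi divergence then gives
\[
\D_\alpha(G_{\bfy_n}\circ f^{-1}\,|\,G_{\bfz_n}\circ f^{-1})\le \D_\alpha(G_{\bfy_n}|G_{\bfz_n})\le\alpha\rho .
\]
Taking the supremum over $\cA_n$ finishes this bullet. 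To keep the argument self-contained, I would prove data processing directly. Writing $\D_\alpha(P|Q)=\frac{1}{\alpha-1}\log\int (dP/dQ)^\alpha dQ$, the push-forward densities satisfy $d(Pf^{-1})/d(Qf^{-1})=\E_Q[dP/dQ\mid f]$. Jensen's inequality for the convex map $t\mapsto t^\alpha$ then shows that conditioning can only decrease $\int (dP/dQ)^\alpha dQ$. The only technical point is when $P\not\ll Q$, but there the right-hand side is $+\infty$ and the inequality is trivial.

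\textbf{Composition.} Here I would allow $\theta_2$ to be generated adaptively, with conditional law $G^{(2)}_{\bfx_n}(\cdot\mid\theta_1)$ being $\rho_2$-zCDP for each fixed $\theta_1$. The independent case is a special case. The joint density factorizes as $g^{(1)}_{\bfy}(t_1)\,g^{(2)}_{\bfy}(t_2\mid t_1)$. Then
\[
e^{(\alpha-1)\D_\alpha(G_{\bfy_n}|G_{\bfz_n})}=\int \Big(\tfrac{g^{(1)}_{\bfy}}{g^{(1)}_{\bfz}}\Big)^{\alpha} g^{(1)}_{\bfz}(t_1)\left[\int\Big(\tfrac{g^{(2)}_{\bfy}}{g^{(2)}_{\bfz}}\Big)^{\alpha}g^{(2)}_{\bfz}(t_2\mid t_1)\,dt_2\right]dt_1 .
\]
By the $\rho_2$-zCDP assumption, the inner bracket is at most $e^{(\alpha-1)\alpha\rho_2}$ uniformly in $t_1$. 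Pulling this factor out, what remains is $e^{(\alpha-1)\D_\alpha(G^{(1)}_{\bfy}|G^{(1)}_{\bfz})}\le e^{(\alpha-1)\alpha\rho_1}$. Taking logarithms and dividing by $\alpha-1>0$ gives $\D_\alpha\le\alpha(\rho_1+\rho_2)$ for every $\alpha>1$ and every adjacent pair, which is the claim.

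\textbf{Main obstacle.} The delicate step is measure-theoretic: making the factorization rigorous when the mechanisms lack densities with respect to Lebesgue measure. I would handle this by working with Radon--Nikodym derivatives with respect to a common dominating measure, such as $G_{\bfy}+G_{\bfz}$, and using regular conditional distributions on $\rdd$, which exist because $\rdd$ is Polish. The computation above then goes through verbatim.
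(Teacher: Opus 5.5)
Your argument is correct and is the standard one: the paper gives no proof of its own and simply cites \citet{dwork2016concentrated} and \citet{Bun2016}, where post-processing comes from the data-processing inequality for R\'enyi divergence and composition comes from exactly the factorization of $\int (p/q)^{\alpha} q$ through the conditional law of $\theta_2$ given $\theta_1$ that you write down. Your strengthened hypothesis on $\theta_2$ (its conditional law given $\theta_1$ is $\rho_2$-zCDP for every fixed $\theta_1$) is not just a generalization; it is needed for the bullet to be true at all. For instance, $\theta_1=T(\bfx_n)+\GS(T)Z/\sqrt{2\rho}$ and $\theta_2=T(\bfx_n)-\GS(T)Z/\sqrt{2\rho}$ with the same $Z$ are each $\rho$-zCDP yet jointly reveal $T(\bfx_n)$. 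This adaptive form is also precisely what Proposition~\ref{prop::priv-compu} relies on, since each step of the balloon mean takes earlier private outputs ($\tilde\mu_{m-1}$ and $\tilde R_{m-1}$, or $\tilde\mu_m$) as inputs.
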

Proposition~\ref{prop::dp} first states that functions of differentially private quantities are still private, and second that releasing two differentially private quantities results in a jointly differentially private estimate, with a larger privacy parameter. 

Next, we introduce a simple way to generate a differentially private statistic, the Gaussian mechanism \citep{dwork2016concentrated,Bun2016}. 
For a  statistic $T\colon\cD_n\to \re^p$, define the global $L_2$-sensitivity of $T$ to be 
$$\GS(T)=\sup_{(\bfy_n,\bfz_n)\in \cA_n}\norm{T(\bfy_n)-T(\bfz_n)}.$$ 
Here, $\norm{x}$ denotes the Euclidean norm of the vector $x.$
We have the following proposition.
\begin{proposition}\label{prop::adm}
Let $T\colon\cD_n\to \re^p$ be any statistic with $\GS(T)<\infty$. 
If $Z$ is a standard Gaussian random variable, then 
$$ T(\bfx_n)+\GS(T)Z/\sqrt{2\rho},$$ satisfies $\rho$-zCDP. 
\end{proposition}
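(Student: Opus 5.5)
The plan is to reduce the claim to a computation of the R\'enyi divergence between two Gaussians with the same covariance, and then to bound the mean difference by the global sensitivity. Fix an adjacent pair $(\bfy_n,\bfz_n)\in\cA_n$ and write $\sigma=\GS(T)/\sqrt{2\rho}$. If $\GS(T)=0$, then $T$ is constant on adjacent datasets, so $G_{\bfy_n}=G_{\bfz_n}$ and the divergence vanishes. Otherwise, set $a=T(\bfy_n)$ and $b=T(\bfz_n)$. Then $G_{\bfy_n}=\cN(a,\sigma^2 I_p)$ and $G_{\bfz_n}=\cN(b,\sigma^2 I_p)$, with $Z$ a standard Gaussian vector in $\re^p$.

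The key step is the closed form
\[
\D_\alpha\bigl(\cN(a,\sigma^2 I)\,|\,\cN(b,\sigma^2 I)\bigr)=\frac{\alpha\norm{a-b}^2}{2\sigma^2}.
\]
To obtain it, I would write
\[
\D_\alpha(F|H)=\frac{1}{\alpha-1}\log\int f^\alpha h^{1-\alpha}.
\]
The exponent of the integrand is $-\bigl[\alpha\norm{x-a}^2+(1-\alpha)\norm{x-b}^2\bigr]/(2\sigma^2)$. Complete the square in $x$ around $\alpha a+(1-\alpha)b$. The leftover constant is
\[
\alpha\norm{a}^2+(1-\alpha)\norm{b}^2-\norm{\alpha a+(1-\alpha)b}^2=\alpha(1-\alpha)\norm{a-b}^2 .
\]
The remaining Gaussian integral equals one, so $\int f^\alpha h^{1-\alpha}=\exp\{\alpha(\alpha-1)\norm{a-b}^2/(2\sigma^2)\}$. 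Dividing its logarithm by $\alpha-1$ gives the displayed formula. Since the problem factorizes across coordinates, one could equally do the one-dimensional computation and add over the $p$ coordinates.

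Finally, plug in $\sigma^2=\GS(T)^2/(2\rho)$ and use $\norm{a-b}\le\GS(T)$, which holds by the definition of global sensitivity. This gives $\D_\alpha\le \alpha\GS(T)^2\rho/\GS(T)^2=\alpha\rho$ for every $\alpha>1$. Taking the supremum over adjacent pairs yields exactly the condition in Definition~\ref{dfn::pdp}.

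The only real work is the completing-the-square algebra. The one point that needs care is the degenerate case $\GS(T)=0$, where the added noise is degenerate, and that case was handled separately at the start.
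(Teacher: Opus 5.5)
Your proof is correct: the completing-the-square identity $\alpha\norm{a}^2+(1-\alpha)\norm{b}^2-\norm{\alpha a+(1-\alpha)b}^2=\alpha(1-\alpha)\norm{a-b}^2$ checks out, and you handle the degenerate case $\GS(T)=0$ properly. The paper gives no proof of its own and simply cites \citep{dwork2016concentrated,Bun2016}, where the result is proved by this same closed form $\D_\alpha(\cN(a,\sigma^2 I)|\cN(b,\sigma^2 I))=\alpha\norm{a-b}^2/(2\sigma^2)$ combined with the sensitivity bound, so your argument is essentially the intended one.
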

\noindent Proposition~\ref{prop::adm} illustrates how correctly calibrated noise can be added to a statistic to ensure differential privacy. 

Another popular differentially private mechanism we utilize here is \texttt{AboveThreshold} \citep{Dwork2014}. In particular, we use the zero-concentrated version described by \citet{Ramsay2025}. Given a sequence of real-valued queries, \texttt{AboveThreshold} can privately identify the first query whose value exceeds a given threshold, using a privacy budget that depends only weakly on the number of queries. 
At a high level, the algorithm first ``privatizes'' the threshold by adding noise, and then processes the queries sequentially, adding fresh noise to each query answer before comparing it to the noisy threshold. Once a query is found whose noisy value exceeds the noisy threshold, the algorithm halts and outputs its index. 
In our setting, we consider a sequence of counting queries that measure how many data points fall inside a sequence of ellipses. The sequence of ellipses considered is defined by a grid parameter, to be defined in the following section.

\section{Algorithm description}
We now describe our proposed algorithm, first introducing some critical notation. 
Let $\Proj(x,R,c,A)$ be the projection of $x\in\rdd$ onto the ball of radius $R>0$ centered at $c\in\rdd$, with respect to the Mahalanobis norm $\norm{\cdot}_{A}$. The Mahalanobis norm of $x$ is $\norm{x}_{A}=\norm{ A^{-1/2}x }$. 
In addition, for a dataset $\bfx_n$, let $\Proj(\bfx_n,R,c,A)$ be the set of projections of the elements of $\bfx_n$, i.e., $\{\Proj(x_1,R,c,A),\ldots,\Proj(x_n,R,c,A) \}$. 
Let $B_{R,\Sigma}(c)$ denote the ball with respect to the norm $\norm{\cdot}_\Sigma$ of radius $R>0$ centered at $c\in\rdd$, i.e., $B_{R,\Sigma}(c)=\{\norm{x-c}_\Sigma\leq R\}$. 
We let $M$ be the number of iterations of the algorithm. 
For a given iteration $m\in\{1,\ldots,M\}$, we let $\rho_{\text{mean},m}$ and $\rho_{\mathrm{bal},m}$ denote the privacy budget allotted to the mean update and the balloon update, respectively. 
The total privacy budget is then $\rho=\rho_{\text{mean},M}+\sum_{m=1}^{M-1}(\rho_{\text{mean},m}+\rho_{\mathrm{bal},m})$. Note that the algorithm ends on the mean estimation step: no privacy budget need be assigned to the balloon update for iteration $M$.

The algorithm is initialized with a guess of the mean, $\tilde\mu_0$, and a radius $\tilde R_0$. The initial radius $\tilde R_0$ should be chosen large enough so that the true mean is within $\tilde R_0$ of $\tilde \mu_0$, i.e.,  $\mu\in B_{\tilde R_0,\Sigma}(\tilde \mu_0)$. 
Starting with the initial Mahalanobis ball, $B_{\tilde R_0,\Sigma}(\tilde \mu_0)$, we can now iterate between the mean update step and the balloon update step. We proceed first with the mean update step. The mean update step is simple, we project the observations onto the current Mahalanobis ball. The sample mean of the projections then has global sensitivity $2\tilde R_{m-1}/n$. We can therefore use the Gaussian mechanism to generate a private sample mean of these projected observations. At iteration $m$, the mean estimate is given by
$$\tilde\mu_{m}=n^{-1}\sumn \Proj(X_i',\tilde R_{m-1},\tilde\mu_{m-1},\Sigma)+Z,$$
where $$Z\sim \cN_d\!\left(0, \frac{2\tilde R_{m-1}^2}{n^2\rho_{\mathrm{mean},m}}\times\Sigma \right)\coloneqq\cN_d\!\left(0,\lambda^2_{m,n} \Sigma   \right).$$

The balloon update step proceeds as follows. The center is moved to the new mean estimate, $\tilde\mu_{m}$. We then use \texttt{AboveThreshold} to privately blow up the balloon until it contains approximately $\floor{\tau_m n}$ points, where $0<\tau_m<1$ is an input parameter that can change at each iteration. This is effectively an update to the radius. 
In more detail, let $V,V_1,V_2,\ldots$ be a sequence of independent, standard normal random variables. Denote by $\tilde\nu$ the empirical distribution of the data $X_1',\ldots,X_n'$.  
First, we supply the algorithm with: a grid size parameter $\beta>1$ and a lower bound on the radius $R_{\min}>0$. 
Given these parameters, the algorithm proceeds as follows: 
\begin{enumerate}
    \item Compute the noisy threshold $\hat \tau_m=\tau_m+V/(n\sqrt{\rho_{\mathrm{bal},m}})$, set $i=0$, and set $r_{i}=\beta^i+R_{\min} -1$. Here, $r_i$ is the current radius.
    \item Compute the noisy count of points contained in the current ellipse $B_{r_i,\tilde\mu_m,\Sigma}$. That is, compute $\hat N_i=\#\{j\colon X_j'\in B_{r_i,\tilde\mu_m,\Sigma}\} +V_i/\sqrt{\rho_{\mathrm{bal},m}}$ .
    \item If there are two few points in the current ellipse, i.e., $\hat N_i< \hat\tau_m n $, thenincrease $i$ by one and update $r_{i}=\beta^i+R_{\min} -1$. Otherwise, stop.
    \item If we stop at query $i^*$, then the updated radius is $\tilde R_m =r_{i^*}=\beta^{i^*}+R_{\min} -1$.
\end{enumerate}
The final estimate of the mean is then given by $\tilde \mu_M$. 
Pseudocode is given in Algorithm~\ref{alg:BalloonMean} and Algorithm~\ref{alg:balloonupdate}. 
A visual depiction of the algorithm, as well as an example path of the balloon mean over several iterations are given in Figure~\ref{fig::algorithm-demonstration}.

\begin{algorithm}[t]
\caption{BalloonMean}
\label{alg:BalloonMean}
\begin{algorithmic}[1]
\REQUIRE Data $\bfx_{n}$, covariance $\Sigma$, $\rho_{\mathrm{mean}},\ \rho_{\mathrm{bal}}$, grid $\beta>1$, radii $(R_{\min},\widetilde R_0)$, iterations $M$, center $\widetilde\mu_0$, target $\tau$
\ENSURE Private estimate $\widetilde\mu_M$
\FOR{$m=1,\ldots,M-1$}
\STATE $\bfy_n \gets \Proj\bigl(\bfx_n,B_{\widetilde R_{m-1},\Sigma}(\widetilde \mu_{m-1})\bigr)$
\STATE $\widetilde\mu_m \gets \left[\frac{1}{n}\sum_{y\in \bfy_n}y \;\right]+\; \cN_d\!\left(0,\lambda_{m,n}^2\Sigma\right)$
\STATE $\widetilde R_m \gets \text{BalloonUpdate}\bigl(\bfx_n,\widetilde \mu_{m},\Sigma,R_{\min},\tau_m,\rho_{\mathrm{bal},m},\beta\bigr)$
\ENDFOR
\STATE $\bfy_n \gets \Proj\bigl(\bfx_n,B_{\widetilde R_{m-1},\Sigma}(\widetilde \mu_{m-1})\bigr)$
\STATE $\widetilde\mu_M \gets \left[\frac{1}{n}\sum_{y\in \bfy_n}y \;\right]+\; \cN_d\!\left(0,\lambda_{M,n}^2\Sigma\right)$
\RETURN $\widetilde\mu_M$
\end{algorithmic}
\end{algorithm}

\begin{algorithm}[t]
\caption{BalloonUpdate}
\label{alg:balloonupdate}
\begin{algorithmic}[1]
  \REQUIRE Data $\bfx_{n}$, center $c$,\;
  covariance $\Sigma$,\;
  lower bound $R_{\min}$, grid size $\beta>1$, target $\tau$,\;
  privacy budget $\rho$
\ENSURE Estimated radius $\widetilde R$
\STATE $\hat\tau \gets \tau + \cN\!\left(0,\,(n\sqrt{\rho})^{-2}\right)$
\STATE $\hat N_i \gets 0,\;\; i \gets -1$
\WHILE{$\hat N_i/n < \hat\tau$}
\STATE $i \gets i+1$
  \STATE $N_i \gets \bigl|\{\,j : \norm{x_j-c}_{\Sigma} \le R_{\min} + \beta^i - 1\}\bigr|$
  \STATE $\hat N_i \gets N_i \;+\; \cN\!\left(0,\,\rho^{-1}\right)$
\ENDWHILE
\STATE $\widetilde R \gets R_{\min} + \beta^i - 1$
\RETURN $\widetilde R$
\end{algorithmic}
\end{algorithm}

Before proceeding to our main theoretical results, a few remarks are in order. 
First, the alternating \emph{balloon-and-update} scheme substantially mitigates the influence of poor initial values. We show in the next section that this can be achieved in few ($O(\log n)$) iterations. Furthermore, by choosing $\tau_{M-1} < 1-\eta$, the method automatically excludes up to $\floor{n\eta}$ extreme outliers, yielding some robustness to outliers, if desired. Lastly, we currently assume known covariance, but a robust, differentially private estimate of the covariance can be used in place of $\Sigma$ in practice \citep{kim2025}.

\section{Theory}
In this section, we introduce some theoretical results, mainly a non-asymptotic, high probability bound on the Mahalanobis error between the private mean estimate and the population mean.  
First, we establish privacy and computation time of the balloon mean estimator. 
\begin{proposition}\label{prop::priv-compu}
The balloon mean satisfies $\rho$-zero concentrated differential privacy and can be computed in $O(d^3+nd^2+Mnd+Mn\log n)$ time. 
\end{proposition}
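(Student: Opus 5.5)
The plan is to treat privacy and running time separately, establishing privacy through a per-iteration accounting combined with the composition property of Proposition~\ref{prop::dp}.

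\textbf{Mean updates.} Fix iteration $m$ and condition on the previously released quantities $\tilde\mu_{m-1},\tilde R_{m-1}$. Under this conditioning the map $\bfx_n\mapsto n^{-1}\sumn \Proj(x_i,\tilde R_{m-1},\tilde\mu_{m-1},\Sigma)$ is a fixed statistic. It is easiest to work after the whitening $x\mapsto\Sigma^{-1/2}x$. In whitened coordinates every projected point lies in a Euclidean ball of radius $\tilde R_{m-1}$, so replacing one point changes the average by at most $2\tilde R_{m-1}/n$ in Euclidean norm. Proposition~\ref{prop::adm} then shows that adding $(2\tilde R_{m-1}/n)Z/\sqrt{2\rho_{\mathrm{mean},m}}$ gives $\rho_{\mathrm{mean},m}$-zCDP. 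The variance of this noise is $2\tilde R_{m-1}^2/(n^2\rho_{\mathrm{mean},m})=\lambda_{m,n}^2$. Mapping back by $\Sigma^{1/2}$ (post-processing with the known matrix $\Sigma$) produces exactly the noise $\cN_d(0,\lambda_{m,n}^2\Sigma)$ used in the algorithm.

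\textbf{Balloon updates.} For the balloon step I would invoke the zCDP version of \texttt{AboveThreshold} from \citet{Ramsay2025}. The counting queries $N_i=\#\{j: \norm{x_j-\tilde\mu_m}_\Sigma\le r_i\}$ each have sensitivity $1$, and they are evaluated at the previously released center $\tilde\mu_m$. The threshold noise $V/(n\sqrt{\rho_{\mathrm{bal},m}})$ on $\tau_m$ corresponds to noise $V/\sqrt{\rho_{\mathrm{bal},m}}$ on the count scale $\tau_m n$, and the query noises are $V_i/\sqrt{\rho_{\mathrm{bal},m}}$. These match the calibration in that result, so the released index $i^*$, and hence $\tilde R_m$ by post-processing, is $\rho_{\mathrm{bal},m}$-zCDP conditional on the past.

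\textbf{Composition.} The full sequence $(\tilde\mu_1,\tilde R_1,\ldots,\tilde\mu_M)$ is an adaptive composition of these mechanisms, each private for every fixed value of the previous outputs. Composition (adaptive composition holds for zCDP via the chain rule for R\'enyi divergence, which I would cite or state alongside Proposition~\ref{prop::dp}) gives total budget $\rho_{\mathrm{mean},M}+\sum_{m<M}(\rho_{\mathrm{mean},m}+\rho_{\mathrm{bal},m})=\rho$. I expect the main subtlety to be confirming that the cited \texttt{AboveThreshold} guarantee applies with exactly this noise scaling. I would check this directly against the stated threshold and query noise levels, since everything else is routine.

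\textbf{Running time.} The cost breaks down as follows.
\begin{itemize}
\item Computing $\Sigma^{-1/2}$ (for instance via an eigendecomposition) costs $O(d^3)$.
\item Whitening all points once costs $O(nd^2)$. After that, every Mahalanobis norm and projection is Euclidean in whitened coordinates and costs $O(d)$ per point.
\item Each iteration therefore costs $O(nd)$ for projections and averaging, plus $O(nd)$ to compute the norms $\norm{x_j-\tilde\mu_m}_\Sigma$.
\item Sampling Gaussian noise costs $O(d)$ per draw if it is drawn in whitened coordinates.
\item For the balloon step, I would sort the $n$ norms once per iteration at cost $O(n\log n)$. Each query $N_i$ is then a binary search costing $O(\log n)$. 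The query sequence can be capped at $O(n)$ queries, or one argues the number of queries is controlled, so the search costs at most $O(n\log n)$ per iteration.
\end{itemize}
Summing over $M$ iterations gives $O(d^3+nd^2+Mnd+Mn\log n)$.
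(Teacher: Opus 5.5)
Your privacy argument is the paper's argument. It uses the Gaussian mechanism with sensitivity $2\tilde R_{m-1}/n$ after whitening, the zCDP \texttt{AboveThreshold} guarantee of \citet{Ramsay2025} for the radius, and composition. You also note that composition must be the adaptive version, because $\lambda_{m,n}$ and the query center depend on earlier releases; this correctly sharpens the paper's one-line appeal to Proposition~\ref{prop::dp}.

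The gap is in the running time of the balloon step. The number of queries issued by \texttt{BalloonUpdate} is not $O(n)$, and nothing in your argument bounds it. Reaching radius $\tilde R_m$ takes $1+\log(\tilde R_m-R_{\min}+1)/\log\beta$ queries, which depends on $\beta$ and on the scale of the data rather than on $n$. There is also no deterministic bound at all. Once the balloon contains every point, the count stays at $n$, and the loop halts only when a fresh draw $V_i$ clears the noisy threshold, which can take arbitrarily many tries. Charging $O(\log n)$ per query therefore gives a cost that depends on $\beta$, $R_{\min}$, $\tilde R_0$ and the noise, not $O(n\log n)$. Capping the loop at $O(n)$ queries would preserve privacy, since the capped index is a post-processing of $i^*$, but it changes the estimator. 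For $\beta$ close to $1$, which Condition~\ref{cond::beta} allows, the cap can stop below the radius the statistical analysis needs. So capping does not prove the proposition for the algorithm as stated.

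The missing idea, which the paper's reference to \citet{Durfee2024} supplies, is to avoid evaluating the queries one by one. After sorting the $n$ distances $\norm{x_j-\tilde\mu_m}_\Sigma$ once per iteration, the count $N_i$ is a nondecreasing step function of $i$ with at most $n$ jumps. The grid index of each jump is available in closed form from the sorted distances. Conditional on the threshold noise $V$, the stopping tests within a stretch of constant count are i.i.d.\ Bernoulli trials. So the first success in a stretch, or the absence of one, can be sampled with a single geometric draw truncated at the stretch length; the final, infinite stretch is handled the same way. This reproduces the loop's output distribution exactly, so privacy and the statistical results are unaffected. It costs $O(1)$ per stretch, hence $O(n)$ per update on top of the $O(n\log n)$ sort. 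With this replacement, the rest of your accounting gives $O(d^3+nd^2+Mnd+Mn\log n)$.
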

The proof of Proposition~\ref{prop::priv-compu} can be found in Appendix~\ref{app::proofs}. 
In order for our statistical results to hold, we later require that $n>d$ and we recommend choosing $M\propto\log n$. Under these conditions, the computational cost reported in Proposition~\ref{prop::priv-compu} is dominated by $O(nd^2)$. To achieve this computational cost, the \texttt{BalloonMean} and \texttt{BalloonUpdate} can be implemented more efficiently than the pseudocode given in Algorithms~\ref{alg:BalloonMean} and \ref{alg:balloonupdate}. The whitening and matrix decomposition need only be performed once, and \texttt{BalloonUpdate} can be implemented similarly to the quantile procedure of \citet{Durfee2024}. The presented pseudocode is for ease of exposition. 

Furthermore, concerning the case of very high dimensional estimation, we note that the $O(d^3)$ term in Proposition~\ref{prop::priv-compu} arises only from the one-time inversion and square-root decomposition of the covariance matrix. This cost is not incurred at every iteration. In settings where the covariance is identity, diagonal, sparse, or otherwise structured, the whitening step can be carried out more efficiently, and the dominant cost becomes the nearly linear-in-$n$ iterative part of the algorithm. Conversely, for arbitrary dense covariance matrices, any method that aims to control Mahalanobis error uniformly must either perform whitening or use an equivalent adjustment. Thus, the $O(d^3)$ term should be viewed as the cost of working in a general dense covariance setting rather than as an intrinsic feature of the balloon mean.

We now introduce three conditions, which, in addition to Condition~\ref{cond::F-inc}, are necessary for our main statistical result. 
\begin{condition}\label{cond::mean-ball}
   It holds that $|\mu_0-\mu|_{\Sigma}\leq \gamma \tilde R_0$ for some universal constant $0<\gamma<1-1/\sqrt{3}$. 
\end{condition}
Condition~\ref{cond::mean-ball} requires that our initial balloon captures $\mu$, and that $\mu$ is not arbitrarily close to the boundary of the initial ball $\tilde R_0$. 
Our main result yields that we can take this initial balloon's radius to be very large, making the Condition~\ref{cond::mean-ball} non-restrictive. 
For instance, we take $\tilde R_0\gtrsim \sqrt{n}$.  

Let $\Delta_R=\tilde R_0 -R_{\min}+ 1$ and $\delta \in (0,1)$ be a user-specified failure probability parameter. 
We consider the following choice of $\tau$,
\begin{equation}\label{eqn::tau2}
   \tau^*=1-\left(8\eta+988\frac{\log\left(36\left(\frac{\log \Delta_R}{\log\beta}+ 1\right)\kappa^*\log n/\delta\bigvee e\right)}{n}+16384\frac{d+1}{n} \right),
\end{equation}
and we let $\tau'\coloneqq\tau'(\delta)=1-\tau^*(\delta)$. 
That is, $\tau'$ is the approximate fraction of data that lies outside the balloon. 
In \eqref{eqn::tau2} $\kappa^*=w/\log(10/9)+1/\log n >0$ is a constant, which depends on the universal constant $w>0$ given in Condition~\ref{cond::beta}. 

The next condition places restrictions on the input parameters. For $q\in[0,1]$, $c\in\rdd$, let $\xi_{q,c}=\inf\{R\colon \nu(B_{R,\Sigma}(c)^c)\leq q\}$. In words, $\xi_{q,c}$ is the smallest radius for which a ball centered at $c$ captures at least a $1-q$ fraction of the probability mass of $\nu$. We can refer to $\xi_{q,c}$ as the \textit{minimal radius} at level $q$, centered at $c$. Note that the radius is decreasing in $q$, for fixed $c$.
\begin{condition}\label{cond::beta}
It holds
\begin{enumerate}[label=(\alph*)]
    \item {\small$1<\beta\leq 1+\inf_{|c-\mu|_{\Sigma}\leq 2\tilde R_0}\frac{\xi_{3\tau'/4,c}-\xi_{5\tau'/4,c}}{\xi_{5\tau'/4,c}-R_{\min}+1}\coloneqq 1+b_n,$}
    \item $R_{\min}\leq \inf_{c\in\rdd}\xi_{2\tau',c}/2$ and $  2\sqrt{n}\leq \tilde R_0\leq n^{w}$ for some universal constant $w>1/2$. 
\end{enumerate}
\end{condition}
Condition~\ref{cond::beta} part (a) says that the grid must be fine enough in order for the estimated mean to be close to $\mu$. There must be some distance between the minimal radius at level $3\tau'/4$ and at $5\tau'/4$. This will hold if $F$ is absolutely continuous with connected support. 
The denominator is positive by Condition~\ref{cond::beta} part (b). Condition~\ref{cond::beta} part (b) gives bounds on the minimum and maximum radius input parameters. Notice that we can input a maximum radius increasing in $n$, which implies we do not need an initial ball that is small or centered close to $\mu$. 

Our last condition is a small technicality that concerns ties. 
\begin{condition}\label{cond::distinct}
    At any interation of the algorithm $m\in 1,\ldots,M-1$, the distance between the points $X_1',\ldots,X_n'$ and $\tilde \mu_{1},\ldots, \tilde \mu_{M-1}$ are not exactly equal to any of the grid values $\beta^i+R_{\min}-1$.
\end{condition}
Condition~\ref{cond::distinct} is used to avoid the adversary stacking $\eta n$ points on top of a grid point, which can cause issues in the \texttt{AboveThreshold} algorithm. To achieve this in implementations, one can add a small, continuous random jitter to the chosen $\beta$ value, or equivalently to the queried radii. 
The jitter could be added to the algorithm, and the condition removed, but it complicates the exposition. 

We now present our main result. 
Let $\rho_-=\min_{m\in[M]}\rho_{\mathrm{mean},m}$ and note that we say write $a\lesssim b$ ( $a\gtrsim b$ ) if $a\leq b$ ($a\geq Cb$) for a universal constant $C>0$. 
\begin{theorem}\label{thm::main-result}
Suppose that Condition~\ref{cond::F-inc}, and Conditions~\ref{cond::mean-ball}--\ref{cond::distinct} hold, $\tau_1=\ldots=\tau_M=\tau^*$, $M=\ceil{\kappa^* \log n}$, and $\inf_{m\in[M-1]}\rho_{\mathrm{bal},m} \geq 16/24349$. Then, given $0<c_0  e^{-d/c_1}<\delta/M<1$ and $\rho_->0$ such that, 
\begin{equation}\label{eqn::ncond}
    n\geq c_2(w)\frac{d\log n}{\rho_-\wedge 1}\vee c_3\left(\log\left(c_5w\left(\frac{\log\Delta_R}{\log\beta}+1\right)\log n /\delta\right)\right),
\end{equation}
and $\eta\leq 1/c_4$ for sufficiently large constants $c_0,c_1,c_2(w),c_3,c_4,c_5>0$, then there exists a constant $c_6>0$ such that with probability at least $1-\delta$,
\begin{equation}\label{eqn::main-result}
    |\tilde \mu_M-\mu|_{\Sigma}\lesssim\sqrt{d\eta}+\sqrt{\frac{d\log n}{n(\rho_-\wedge 1)}}
   +\sqrt{\frac{ \log\left(c_6 w\left[\left(\frac{\log\Delta_R}{\log\beta}+ 1\right)\vee \log n\right]/\delta\right)}{n}}    .
\end{equation}
\end{theorem}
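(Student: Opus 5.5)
We work in whitened coordinates: replace $X_i'$ by $\Sigma^{-1/2}(X_i'-\mu)$, so that $\mu=0$, $\Sigma=I$, the clean points are $\ell_i u_i$ with $\E\ell^2=d$, and all Mahalanobis quantities become Euclidean. The proof is an induction over the $M$ iterations on a good event $\mathcal{E}$ of probability at least $1-\delta$. We allocate the failure probability as $\delta/(3M)$ per iteration and split it among the Gaussian noise, the \texttt{AboveThreshold} noise, and uniform empirical-process deviations.

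\textbf{Step 1 (uniform empirical control).} Take a union bound over the grid radii $r_i$, whose number is at most $\log\Delta_R/\log\beta+1$. For Mahalanobis balls, whose VC dimension is at most $d+1$, obtain a uniform relative (Bernstein-type) deviation bound: for every center $c$ and radius $r$, $|\tilde\nu(B_r(c)^c)-\nu(B_r(c)^c)|$ is at most a constant times $\sqrt{\nu(B_r(c)^c)\,t}+t$, with $t=\{(d+1)+\log(\cdot/\delta)\}/n$, plus $\eta$ from corruption. This is where the constants $16384(d+1)/n$ and $988\log(\cdot)/n$ in $\tau^*$ come from.

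Combining this with the \texttt{AboveThreshold} accuracy guarantee, which requires $\rho_{\mathrm{bal},m}$ bounded below to keep the noisy count errors of order $\log(\cdot)/n$, together with Condition~\ref{cond::distinct} and Condition~\ref{cond::beta}(a), shows that whenever $|\tilde\mu_{m}|\le 2\tilde R_0$ the selected radius satisfies
\[
\xi_{5\tau'/4,\tilde\mu_m}\le \tilde R_m\le \xi_{3\tau'/4,\tilde\mu_m}.
\]
The lower bound holds because the count at $\xi_{5\tau'/4}$ is still too small. The upper bound holds because the grid is fine enough that some $r_i$ lands in $[\xi_{5\tau'/4},\xi_{3\tau'/4}]$. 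Condition~\ref{cond::beta}(b) guarantees that the grid starts below these quantiles.

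\textbf{Step 2 (one-step contraction).} Write $\tilde\mu_{m+1}=\bar P_m+Z_m$, where $\bar P_m$ is the mean of the projections onto $B_{\tilde R_m}(\tilde\mu_m)$. The bias is split into three parts.
\begin{enumerate}
\item The population clipped mean $\E\Proj(X,R,c)$: by ellipticity it lies on the segment between $0$ and $c$, and its shift satisfies $|\E\Proj(X,R,c)|\le \gamma'|c|+C\sqrt{d}\,\nu(B_R(c)^c)^{1/2}$-type bounds. While $|c|\gtrsim R$, clipping pulls the center toward $0$ by a constant factor. This is where $\gamma<1-1/\sqrt3$ in Condition~\ref{cond::mean-ball} enters; it is the first-iteration analog, with $R=\tilde R_0$.
\item A uniform deviation of the empirical clipped mean, of order $\tilde R_m\sqrt{(d+\log(\cdot))/n}$ by a vector Bernstein bound for bounded vectors.
\item Corruption, of order $\eta\tilde R_m$.
\end{enumerate}
The noise satisfies $|Z_m|\lesssim \tilde R_m\sqrt{d}/(n\sqrt{\rho_-})$ by Gaussian norm concentration.

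Since $\tilde R_m\le\xi_{3\tau'/4}\lesssim\sqrt{d/\tau'}$ by Chebyshev and $\E\ell^2=d$, each of these terms is of order $\sqrt{d\tau'}$ or smaller once $\tau'\gtrsim \eta+(d+\log)/n$. With a careful second-moment clipping bias bound (the clipping bias is at most $\E[\ell\,\ind{\ell>R}]\le d/R$), this gives
\[
|\tilde\mu_{m+1}|\le \tfrac{9}{10}\,|\tilde\mu_m|+C\,\mathrm{err},
\]
where $\mathrm{err}$ is the right-hand side of \eqref{eqn::main-result}. The first iteration needs separate handling: there $\tilde R_0\ge 2\sqrt n$ is huge, so the noise term $\tilde R_0\sqrt d/(n\sqrt\rho)\lesssim\sqrt{d}/\sqrt{n\rho}$ is fine. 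The factor $1/\log(10/9)$ in $\kappa^*$ is designed for this: after $M=\lceil\kappa^*\log n\rceil$ steps, the initial error $n^{w}$ shrinks to $n^{w}(9/10)^M\le 1/n$.

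\textbf{Step 3 (conclude).} Unroll the recursion to get $|\tilde\mu_M|\lesssim (9/10)^M\tilde R_0+\mathrm{err}$. The $\log n$ in the privacy term comes from the union over $M$ steps. The sample-size condition \eqref{eqn::ncond} ensures $\tau'<1$ and that the contraction dominates.

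The main obstacle is Step 2: a sharp bound for the clipped mean when the center is off by an amount comparable to the radius, uniformly over centers. I would first reduce it to a two-dimensional computation in the span of $c$ and $u$, using the elliptical symmetry, and then control the heavy-tail bias term via the $\tau'$-quantile characterization from Step 1.
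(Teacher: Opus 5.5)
\textbf{Gap: Step 2.} Your bounds for the clipped-mean fluctuation and bias are too crude to give \eqref{eqn::main-result}, and the fluctuation bound does not even tend to zero.

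\emph{Fluctuation term.} You bound it by $\tilde R_m\sqrt{(d+\log(\cdot))/n}$. With your own bound $\tilde R_m\lesssim\sqrt{d/\tau'}$ and $\tau'\asymp\eta+(d+\log(\cdot))/n$ (which is how $\tau^*$ is set), this is of order $\sqrt{d+\log(\cdot)}$. This is not just a loose estimate: under only two moments the balloon radius really can be of order $\sqrt n$ up to logarithmic factors. Your claim that this term is $\lesssim\sqrt{d\tau'}$ would need $\tau'\gtrsim\sqrt{(d+\log(\cdot))/n}$, which is not the regime of the theorem.

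\emph{Even $\sqrt{d\tau'}$ is too large.} It contains $d/\sqrt n$, which exceeds the target $\sqrt{d\log n/(n(\rho_-\wedge1))}$ by a factor $\sqrt{d/\log n}$ when $\eta=0$ and $\rho_-\ge1$. Your bias bound $C\sqrt d\,\nu(B_R(c)^c)^{1/2}$ carries exactly this extra $\sqrt d$. The alternative $d/R$ is of order $\sqrt d$ whenever $R\asymp\sqrt d$, e.g.\ for Gaussian data.

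\emph{Data dependence.} $(\tilde R_m,\tilde\mu_m)$ are computed from the same sample. A vector Bernstein bound is only available for a fixed ball; the paper uses it (Lemma~\ref{lemma::pme}) only in the first iteration, where the ball is deterministic. A generic uniform bound over $\{x\mapsto\Proj(x,R,c)\}$ pays for the envelope $R$ again.

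\textbf{What the paper does instead.} Two ideas are absent from your plan.

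For the bias (Lemma~\ref{lemma::proj_e_va_cs}): in whitened coordinates, invariance under rotations fixing $c$ makes $\E\Proj(X,R,c,\Sigma)$ parallel to $c$. So only the one-dimensional moment $\E\langle Y,u\rangle^2=1$ enters, and
$|\E\Proj(X,R,c,\Sigma)|_\Sigma\le(1+|c|_\Sigma)\sqrt{\Prr{|X-c|_\Sigma>R}}\lesssim(1+|c|_\Sigma)\sqrt{\tau'}$,
because on the good event $\tilde R_m\ge\xi_{2\tau',\tilde\mu_m}$. This gives both a small contraction factor and an additive $\sqrt{\tau'}$ with no $\sqrt d$. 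The contraction comes from the small outside mass, not from a regime $|c|\gtrsim R$.

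For the fluctuation, the paper splits
$\Proj(Y,R,c)=\Proj(Y,(R-|c|)_+,0)+[\Proj(Y,R,c)-\Proj(Y,(R-|c|)_+,0)]$.
\begin{itemize}
\item The first piece is $(R\wedge\ell_i)U_i$. Condition on the $\ell_i$ and use $\var\langle U_i,u\rangle=1/d$. Dudley's bound together with $\sum_i(R_3\wedge\ell_i)^2\lesssim nd$ then gives order $\sqrt{d/n}$ uniformly over $R\le R_3$ and $u$, with only logarithmic dependence on $R_3$.
\item The second piece has envelope $O(|c|)$ rather than $O(R)$. Dyadic shelling over $|c|\in(1/n,n^w]$ with a bracketing-entropy bound (Lemma~\ref{lemma::dyadic-shell}) gives $|c|\sqrt{d\log n/n}$ uniformly. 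This entropy term, not the union over $M$ steps, is the source of the $\log n$ multiplying $d$.
\end{itemize}

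\textbf{First iteration.} Your treatment here is also off. $\tilde R_0\ge2\sqrt n$ is a lower bound, and $\tilde R_0$ may be as large as $n^w$, so the first noise term need not be $\lesssim\sqrt{d/(n\rho)}$. Also, $\kappa^*$ only yields $n^w(9/10)^M\lesssim1$, not $\le1/n$. The paper instead shows $|\tilde\mu_1|_\Sigma\lesssim\tilde R_0\,\epsilon_n\le3\tilde R_0/4$ with $\epsilon_n$ small (Lemma~\ref{lemma::init_mean}). It then uses $(9/10)^{M-1}<1/\tilde R_0$ to turn $\tilde R_0\epsilon_n$ into $\epsilon_n$.
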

The proof of Theorem~\ref{thm::main-result} can be found in Appendix~\ref{app::proofs}. 
Theorem~\ref{thm::main-result} establishes an upper bound on the rate of convergence of the balloon mean.  
A lower bound on the achievable rate can be obtained by combining some non-private and private lower bounds, \citep{Kamath2020,Lugosi2021}, yielding
\begin{equation}\label{eqn::lb}
      |\tilde \mu-\mu|_{\Sigma}\gtrsim\sqrt{\frac{d}{n(\rho_-\wedge 1)}}+\sqrt{\frac{\log(1/\delta)}{n}} +\sqrt{\eta},
\end{equation}
for any private estimator $\tilde \mu$ of $\mu$ and $\nu$ as in Condition~\ref{cond::F-inc}. 
Under the weaker heavy-tailed contamination model, i.e., $\eta=0$, if the tuning parameters are chosen so that the term $\left({\log\Delta_R}/{\log\beta}+ 1\right),$ is at most polynomial in $n$ and $d$, the rate \eqref{eqn::main-result} is minimax optimal up to logarithmic factors \citep{Kamath2020}. 
When $\eta>0$, the contamination contribution scales as $\sqrt{d\eta}$. 
Based on existing lower bounds for adversarial contamination \citep{2021Liub,Lugosi2021}, we expect the lower bound given in \eqref{eqn::lb} to be tight. This suggests that our current contamination dependence in \eqref{eqn::main-result} is suboptimal by a factor of $\sqrt{d}$ for the stronger adversarial contamination model.

We believe this gap is structural. To see this, based on its construction, the balloon mean does not explicitly account for the case where the contaminating points are placed along a single direction, inside the thin shell that contains the bulk of the data in high dimensions, see \citep{Diakonikolas2016}. 
It is well known that simultaneously achieving optimal adversarial contamination dependence, strong privacy guarantees, and computational efficiency is highly challenging, and to our knowledge, no existing algorithm fully satisfies all three. 
Our focus is on developing a method that is simple to implement while still enjoying optimal robustness under the weaker, heavy-tailed model, and characterizing its rate under the stronger, contaminated model. 
Improving the contamination dependence while preserving privacy and computational tractability is an important direction for future work.

Note that the dependence of the upper bound \eqref{eqn::main-result} on the rate of convergence on the algorithmic input parameters $\beta$, $\tilde R_0$ and $R_{\min}$ is logarithmic. This weak dependence indicates that the method is not sensitive to fine-tuning and can be used in practice without extensive or even any tuning of these parameters. 
We corroborate this behavior empirically in Section~\ref{sec::sim}.


\begin{remark}[Choice of $\tau$]
The constants appearing in \eqref{eqn::tau2} are not optimized.  
They arise from worst-case concentration bounds and union arguments used in the analysis, and are therefore very conservative. 
The theory provides guidance on the role and scaling of $\tau_m$, but it is not necessary to take $\tau_1,\ldots,\tau_M$ exactly as in \eqref{eqn::tau2}. 
In practice, $\tau_m$ should be viewed as a robustness tuning parameter that controls the tolerated fraction of extreme observations. It is analogous to the trimming proportion in a trimmed mean. Thus, if approximately 30\% contamination is expected, one may choose $\tau\approx$0.7, with some buffer for the private noise, whereas in clean settings $\tau$ can be chosen close to one (e.g., $\tau$=0.9 as a practical default for moderate sample sizes, according to the empirical study, see Section~\ref{sec::sim}). 
\end{remark}
\section{Empirical Evaluation}\label{sec::sim}
\subsection{Sensitivity to tuning parameters}\label{sec::tuning}
All code from all experiments can be retrieved from the anonymized \href{https://anonymous.4open.science/r/BalloonMean-8822/README.md}{Github Repository}. 
The code supports GPU computation but it is not required to run the experiments. 
We conducted an extensive simulation study examining sensitivity to all tuning parameters across dimensions, sample sizes, and privacy levels. We varied one parameter at a time while fixing the others at default values. We considered two threshold settings in all simulations: a high-$\tau$ variant meant for situations where robustness is not needed and a robust low-$\tau$ variant. 
The default values were given by $\tilde \mu_0=(0,\ldots,0)^\top$, $R_{\min}=1$, $\tilde R_0=50\sqrt d$, $\beta=1.01$, $M=4$, $\tau_{1}=\tau_2=\tau_3=0.9$ for the high-$\tau$ variant, and $\tau_1=0.9$, $\tau_2=0.8$ and $\tau_3=0.6$ for the low-$\tau$ variant. We assessed the effect of $\tilde \mu_0,\tilde R_0,\beta,M$ and $\tau$. Additionally, we looked at the effect of changing $\tau$ at each iteration, i.e., the $\tau$-schedule. 
We did not look at $R_{\min}$, as the analogous parameter in similar works showed little influence on the algorithm \citep{Durfee2024,Ramsay2024,Ramsay2025}. 
We considered four population distributions: (i) a Gaussian model, (ii) a contaminated Gaussian model in which an $\eta=0.1$ fraction of the data was replaced by adversarial outliers obtained by shifting the mean and inflating the covariance, (iii) a multivariate $t$ distribution with three degrees of freedom, and (iv) a non-Gaussian ``banana-shaped'' distribution. In each trial, the true mean was $\mu=(1,\ldots,1)^\top$. 
Here we summarize the main trends while full results and a complete description of simulation parameters are deferred to Appendix~\ref*{app::tuning} and Appendix~\ref*{app::tau-schedule} in the supplementary material.  
Performance was measured using Mahalanobis error with respect to the true mean.

Overall, we found that the balloon mean exhibits low sensitivity to its main tuning parameters. In particular, performance is minimally affected by varying $\tilde \mu_0$, $\tilde R_0$ and the grid size $\beta$. For the number of iterations, in low dimensions, the error appears to stabilize at three iterations ($M=3$), whereas in high dimensions, the error stabilizes at four iterations ($M=4$). This is consistent across generating distribution, sample size $n$, and does not depend on the $\tau$ variant. 
We found that in high-dimensions smaller $\tau$ values $\tau<0.9$ generally perform better. This is consistent across generating distribution and $\tau$ variant. For all dimensions, in the contaminated settings (heavy-tailed and contaminated Gaussian) smaller $\tau$ values perform better, as expected. In clean settings with low and moderate dimensions, the dependence on $\tau$ is weaker and sometimes reverses in very low dimensions. 

As for the $\tau$-schedule, for the high-$\tau$ (non-robust) variant, decreasing $\tau$ across iterations consistently improves performance in contaminated settings, reflecting the increased robustness induced by smaller $\tau$. In Gaussian and banana-shaped settings, decreasing schedules are also preferred in higher dimensions and under stronger privacy, while in other settings the three schedules perform similarly. 
For the low-$\tau$ (robust) variant, decreasing schedules provide only modest additional gains in contaminated settings, and little difference is observed in the clean data models.

\subsection{Comparison to existing estimators}
\begin{figure}[t]
    \centering
    \includegraphics[width=\textwidth]{Results.jpg}
    \caption{Logged Mahalanobis error as a function of the sample size $n$ for mean estimation under four distributions (Gaussian, contaminated Gaussian, $t_3$, and banana-shaped) and dimensions $d \in \{2,8,16,64,128\}$. Solid lines show the mean error over repetitions, with shaded bands indicating a 95\% confidence interval over the simulation runs. We compare the balloon mean with high-$\tau$ and low-$\tau$ settings against several private mean estimators, across privacy levels $\rho \in \{0.01,0.1,1.0\}$ (linetypes). The balloon mean is stable across dimensions and distributions, with the low-$\tau$ variant providing improved robustness under contamination and heavy tails, while remaining competitive in uncontaminated settings.}
    \label{fig:full-results}
\end{figure}

We now compare the balloon mean to some existing estimators satisfying zCDP, and therefore do not include approximate differentially private robust methods, which operate under a weaker different privacy regime.
For each dataset, we computed the non-private sample mean as a baseline, together with five private mean estimators: (i) low-$\tau$ variant of the balloon mean, (ii) high-$\tau$ variant of the balloon mean, (iii) \texttt{COINPRESS} \citep{Biswas2020}, (iv) the private Huber $M$-estimator \citep{Yu2024}, and (v) the instance optimal mean \citep{Huang2021}.
We considered a range of dimensions $d\in\{2,8,16,64,128\}$, sample sizes $n\in\{250,500,1000,2000,5000\}$, and privacy budgets $\rho\in\{0.01,0.1,1\}$. Dimensions in powers of 2 were used to accommodate the instance optimal mean. 
We tested the same population distributions as in Section~\ref{sec::tuning}.
For each combination of parameters, we conducted 200 trials. 

For methods that required it, the initial center was taken to be the origin, and the starting radius was $50\sqrt{d}$. 
We set a grid size of $\beta=1.01$ and we chose $M=4$ with two threshold settings: high-$\tau$ variant $(\tau_{1},\tau_2,\tau_3)=(0.9,0.9,0.9)$ and a low-$\tau$ variant $(\tau_{1},\tau_2,\tau_3)=(0.7,0.6,0.5)$. 
For \texttt{COINPRESS}, we used privacy-splitting weights proportional to $(1/3,\,1/2,\,1)$ for the three iterations, as recommended in the original work. 
The private adaptive Huber estimator was run using $T=\lfloor\log n\rfloor$ iterations \citep{Yu2024}. 
The default parameters $T=10$ and \texttt{prop}=0.25 were taken for the instance optimal mean \citep{Huang2021}. 
\texttt{COINPRESS} and the balloon mean were supplied with the known covariance, the Huber and instance-optimal baselines were run using the implementations recommended by their authors.
All tuning parameters were held fixed across experiments and were not optimized for individual distributions or methods. 
The balloon mean hyperparameters for the balloon mean were selected a priori using simple heuristics, the interpretation of the parameters, and loose guidance from the theoretical analysis, they were not chosen based on the sensitivity study.

Figure~\ref{fig:full-results} presents  the full set of results. 
We see that smaller values of $\tau$ lead to a more robust balloon mean: the low-$\tau$ variant performs better under contamination and heavy-tailed distributions, while the two balloon means behave similarly in the clean Gaussian setting. The Huber estimator exhibits some robustness to heavy tails and contamination in low dimensions, but its performance degrades in higher dimensions. The \texttt{COINPRESS} mean and the instance optimal mean remain reasonably stable under contamination and heavy tails, though they are less robust than the low-$\tau$ variant balloon mean. Overall, the balloon mean performs well relative to the competing methods, particularly in higher-dimensional settings and under stronger privacy constraints, making it competitive with, and even in several settings improving upon, state-of-the-art approaches.

\section{Summary of results and future directions}
The balloon mean occupies a practical middle ground in robust differentially private mean estimation. The empirical results support the performance of the method, and demonstrate it is competitive and sometimes outperforms existing methods. The theoretical results give it statistical credibility, especially in the weaker heavy-tailed contamination model. Furthermore, the balloon mean is not highly sensitive to several of its tuning parameters, while the other tuning parameters have practical interpretations. These findings suggest that the method can be used with a small set of default choices, rather than requiring extensive tuning for each new problem. 

Several limitations should be emphasized. First, the main theoretical results are developed in the known covariance setting. This assumption is useful for isolating the behavior of the mean estimator and demonstrating the merit of the iterative Mahalanobis clipping idea, but in many applications the covariance must also be estimated privately and robustly. Second, the theory relies on elliptical structure, which provides a tractable framework for analyzing Mahalanobis clipping but does not cover all heavy-tailed distributions of practical interest. 
Finally, the balloon mean may perform suboptimally if a cluster of outliers is adversarially placed along one direction in high dimensions. 

Future work should therefore proceed in several directions. A first direction is to generalize the results to the setting of unknown covariance. 
A second direction is to improve its performance if a cluster of outliers is adversarially placed along one direction in high dimensions by incorporating a private filtering step. 
The authors are also currently exploring how modern mean estimation can be used to improve private learning in the private setting. 

Overall, the balloon mean provides a simple and effective approach to robust private mean estimation. Its combination of privacy, robustness, computational tractability, and interpretable tuning makes it a promising tool for high-dimensional private estimation, while also opening several directions for further theoretical and empirical development.

\appendix

\section{Technical Proofs}\label{app::proofs}
For this section, we slightly redefine $\tau$
\begin{equation}\label{eqn::tau}
    \tau_1,\ldots,\tau_M=\tau=1-\left(8\eta+988\frac{\log\left(3\left(\frac{\log \Delta_R}{\log\beta}+ 1\right)M^*/\alpha \bigvee e\right)}{n}+16384\frac{d+1}{n}\right),
\end{equation}
and we let $\tau'\coloneqq\tau'(\alpha,M^*)=1-\tau(\alpha,M^*)$. In Theorem~\ref{thm::main-result} we take $M^*=\ceil{\kappa^*\log n} $ and $\alpha=\delta/12$ as a special case of \eqref{eqn::tau}, matching $\tau^*$ in \eqref{eqn::tau2}.
In addition, we let $h(\beta)=\log \Delta_R/\log\beta,$
for brevity. 

\paragraph{Overview of proof strategy:} The proof is long and contains many lemmas, so it is helpful to give an overview of the strategy and the related lemmas that we need. The proof proceeds by showing that the crude estimate $\tilde\mu_1$ is not ``too bad'' with high probability, but has linear dependence on the input radius (Lemma~\ref{lemma::init_mean}). We then need to prove that over iterations, this dependence is successively reduced. To do that, we show that in any one iteration $m$, we make some small error, say $a_n$, plus some portion of the previous iteration's error, with high probability (see Lemma~\ref{lemma:one-iteration}). In other words, we prove that for some $0<p_n<1$, with high probability,
$\norm{\tilde\mu_m-\mu}_\Sigma\leq a_n+p_n\norm{\tilde\mu_{m-1}-\mu}_\Sigma$ (Lemma~\ref{lemma:one-iteration}). Summing this out over iterations yields a geometric series, yielding $p_n^{M-1}\tilde R_0$, which allows us to resolve this with $\log \tilde R_0$ iterations. 
Most of the effort is concentrated in Lemma~\ref{lemma:one-iteration}, which requires a number of helper lemmas, the two most important being a high probability bound on the random radius $\tilde R_{m}$ (Lemma~\ref{lemma::radius_error}) and an empirical-process bound for the clipped means (Lemma~\ref{lemma::dyadic-shell}).


\begin{proof}[Proof of Theorem~\ref{thm::main-result}]
Assume without loss of generality $\mu=0$ and take $M^*=M$. 
We first bound $|\tilde \mu_1|_\Sigma$. 
Observe that the conditions of Lemma~\ref{lemma::init_mean} are satisfied. 
Let $S(x)=1-F(x)$.  
Hence, applying Lemma~\ref{lemma::init_mean} in combination with Markov's inequality and the fact that $\tilde R_0>2\sqrt{n}\gtrsim \sqrt{n (1-\gamma)^2}$ for $\gamma<1-1/\sqrt{3}$ from Condition~\ref{cond::mean-ball} and Condition~\ref{cond::beta} part (b) yields that with probability at least $1-3\alpha/M^*$, we have that,
\begin{equation}\label{eqn::init-mean}    
  |\tilde\mu_{1}|_{\Sigma}\lesssim   \tilde R_0\left(\frac{\sqrt{d}}{n(\sqrt{\rho_{-}}\wedge 1)} 
    + \sqrt{\frac{\log(2M^*/\alpha)}{n}}+\eta\right),
\end{equation}

We now bound $\norm{\tilde \mu_m}_\Sigma$ for iterations $m\geq 2$. 
We want to apply Lemma~\ref{lemma:one-iteration}, and so we must show that $|\tilde \mu_{1}-\mu|_\Sigma \leq 3\tilde R_0/4$. 

For this, we proceed via induction. 
Now, noting that $\tilde R_0>1$, we have that for large enough constants $C,C'>0$, if $n\geq C\left( \sqrt{d /(\rho_{-}\wedge 1)}\vee \log(M^*/\alpha)   \right)$ and $\eta<C'$, we have that \eqref{eqn::init-mean} yields that $ |\tilde \mu_1|_\Sigma \leq 3\tilde R_0/4$ with probability at least $1-3\alpha/M^*$. 
This condition on $n$ holds by assumption. 
We now show that by induction, $|\tilde \mu_{m}-\mu|_\Sigma \leq 3\tilde R_0/4$. We have already proven the base case, and we move to the induction step. 

We now assume that $|\tilde \mu_{k-1}-\mu|_\Sigma \leq 3\tilde R_0/4$ for some $2\leq k\leq M$. 
Using the induction assumption, we can apply Lemma~\ref{lemma:one-iteration}. 
Doing so, and letting $$A=d\log n\vee  \log(M^*c_6[(h(\beta)+1)\vee \log n]/\alpha),$$ 
for some large universal constant $c_6>0$, yields that with probability at least $1-9\alpha/(2M^*)$ we have
\begin{equation*}
    |\tilde \mu_k|_{\Sigma}\lesssim \left(1+\norm{\tilde \mu_{k-1}}_\Sigma\right)\left( \sqrt{\frac{d}{n\rho_{\mathrm{mean},k}}}+\sqrt{\frac{ A}{n}}     +\sqrt{\eta}\right)+\sqrt{d\eta}.   
\end{equation*}
Recalling that $\rho_-=\min_{k\in[M]}\rho_{\mathrm{mean},k}$, we want to then show that 
\begin{equation*}
    \left(1+\norm{\tilde \mu_{k-1}}_\Sigma\right)\left( \sqrt{\frac{d}{n\rho_-}}+\sqrt{\frac{ A}{n}}     +\sqrt{\eta}\right)+\sqrt{d\eta}\leq 3\tilde R_0/4.
\end{equation*}
Using the conditions on $\eta$ and $d$ and the the induction assumption, this is implied by the inequality
\begin{equation*}
\sqrt{\frac{d}{n\rho_-}}+\sqrt{\frac{ A}{n}}     +\sqrt{\eta}\leq K_1,
\end{equation*}
for a sufficiently small constant $K_1>0$. 
Now, this inequality is directly implied by the conditions on $n$ and $\eta$ given in the theorem, provided that the constants are chosen large enough. 
Therefore, by induction and a union bound, we have that with probability at least $ 1-9\alpha/2$, $ |\tilde\mu_{m}|_{\Sigma}\leq 3\tilde R_0/4$ simultaneously, for all $m\in[M]$. 
Note that we have also shown on the same event, there exists a universal constant $D_1>0$ for all $2\le m\le M$, we have
\begin{equation*}
          |\tilde \mu_m|_{\Sigma}\leq D_1\Bigg(\left(1+\norm{\tilde \mu_{m-1}-\mu}_\Sigma\right)\left(\sqrt{\frac{d}{n\rho_{-}}}+ \sqrt{\frac{A}{n}}
    +\sqrt{\eta}\right)+\sqrt{d\eta}\Bigg). 
\end{equation*}


Next, taking the constants $c_1$ through $c_4$ large enough, we have that
$$\sqrt{\frac{d}{n\rho_{-}}}+\sqrt{\frac{A}{n}}
    +\sqrt{\eta}\leq \frac{9}{10 D_1}.$$  
The previous two inequalities combined, along with the properties of the geometric series, yield that there exists a universal constant $D_1>0$ such that with probability at least $1-9\alpha/2 $
\begin{equation}\label{eqn::unrolled}
    |\tilde \mu_M|_{\Sigma}\leq  D_1\left(\sqrt{\frac{d}{n\rho_-}}
   +\sqrt{\frac{A}{n}}
    +\sqrt{d\eta}\right)+|\tilde \mu_1|_\Sigma \left(\frac{9}{10}\right)^{M-1}. 
\end{equation}

Now, combining the previous bound with \eqref{eqn::init-mean}, and taking $M>\log(\tilde R_0)/\log(10/9)+1$ yields $\left(9/10\right)^{M-1}<1/\tilde R_0$. 
Then, we have that with probability at least $1-9\alpha/2$, 
\begin{equation*}
      |\tilde \mu_M|_{\Sigma}\lesssim
   \sqrt{\frac{d}{n\rho_-}}
   +\sqrt{\frac{A}{n}}
    +\sqrt{d\eta}
  +  \frac{d}{(\tilde R_0-|\mu_0|_{\Sigma})^2}
    +\frac{\sqrt{d}}{n\sqrt{\rho_{-}}} +\eta
    + \sqrt{\frac{\log(2M^*/\alpha)}{n}}.
\end{equation*}
Simplifying, and using Condition~\ref{cond::mean-ball} and Condition~\ref{cond::beta}, we have that with probability at least $1-9\alpha/2$,
\begin{equation*}
     |\tilde \mu_M|_{\Sigma}\lesssim \sqrt{\frac{d\log n}{n(\rho_-\wedge 1)}}
   +\sqrt{\frac{  \log(M^*c_6[(h(\beta)+1)\vee \log n]/\alpha)}{n}}
    +\sqrt{d\eta}. 
\end{equation*}
Taking $M^*=\ceil{\kappa^*\log n}$ where  $\kappa^*\log n>\log(\tilde R_0)/\log(10/9)+1$ holds from the definition of $\kappa^*$ and Condition~\ref{cond::beta}, along with substituting in $\alpha=\delta/12$ yields the final result. 
\end{proof}
\subsection{Contraction and error of a single iteration}
\begin{lemma}\label{lemma:one-iteration} 
Suppose that Condition~\ref{cond::F-inc}, and Condition~\ref{cond::mean-ball}--\ref{cond::distinct} hold, $2\leq m\leq M$, $\rho_{\mathrm{bal},m-1} \geq 16/24349$, $\tau=\tau(\alpha,M^*)$ as in \eqref{eqn::tau}, $|\tilde \mu_{m-1}-\mu|_\Sigma \leq 3\tilde R_0/4$, 
given $c_0  e^{-d/c_1}<\alpha/M^*<2/9$, 
$$n\geq c_2(w)(d\log n\vee 1)\vee c_3\log(3(h(\beta)+1)M^*/\alpha) \andd \eta\leq 1/c_4,$$ 
for sufficiently large constants $c_0,c_1,c_2(w),c_3,c_4>0$, then with probability at least $1-9\alpha/2M^*$ we have that there exists $c_5>0$ such that
\begin{multline*}
      |\tilde \mu_m-\mu|_{\Sigma}\lesssim \\ \left(1+\norm{\tilde \mu_{m-1}-\mu}_\Sigma\right)\left(\sqrt{\frac{d}{n\rho_{\mathrm{mean},m}}}+ \sqrt{\frac{d\log n\vee  \log(M^*c_5[(h(\beta)+1)\vee \log n]/\alpha)}{n}}
    +\sqrt{\eta}\right) +\sqrt{d\eta}.     
\end{multline*}
\end{lemma}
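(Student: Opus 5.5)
Take $\mu=0$ and whiten, replacing $X_i'$ by $\Sigma^{-1/2}X_i'$, so that $|\cdot|_\Sigma$ becomes the Euclidean norm. The clean points are then $X_i=\ell_i u_i$, and the mean step reads
$$\tilde\mu_m=\frac1n\sumn \Proj(X_i',\tilde R_{m-1},\tilde\mu_{m-1},I)+Z,\qquad Z\sim\cN_d(0,\lambda_{m,n}^2 I).$$
The argument has three pieces: a high-probability sandwich for the random radius $\tilde R_{m-1}$, a deterministic bias bound for clipping a spherically symmetric law onto an off-center ball, and uniform concentration of the clipped empirical mean.

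\textbf{Step 1 (radius).} On the \texttt{AboveThreshold} event, all $h(\beta)+1$ noisy counts and the noisy threshold are within $O(\sqrt{\log((h(\beta)+1)M^*/\alpha)})$ of their true values. On this event I would show
$$\xi_{5\tau'/4,\tilde\mu_{m-1}}\le \tilde R_{m-1}\le \xi_{3\tau'/4,\tilde\mu_{m-1}}.$$
This needs three ingredients. First, Gaussian tails for the noise, using $\rho_{\mathrm{bal}}\ge 16/24349$ so that the noise is $O(\sqrt{\log}/n)$ in proportion. Second, a uniform deviation bound over all Euclidean balls (VC dimension $d+1$) for the empirical mass of the clean sample; this explains the $16384(d+1)/n$ and $988\log(\cdot)/n$ terms in $\tau'$. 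Third, the observation that contamination moves the counts by at most $\eta n$, which is where $8\eta$ in $\tau'$ comes from. Condition~\ref{cond::beta}(a) then ensures some grid point lands between the two quantiles, since the hypothesis $|\tilde\mu_{m-1}|\le 3\tilde R_0/4$ places the center in the set over which the infimum is taken. Condition~\ref{cond::beta}(b) keeps $R_{\min}$ below the lower quantile, and Condition~\ref{cond::distinct} removes ties. Because $\tau'\gtrsim d/n$, the elliptical structure gives $\tilde R_{m-1}\gtrsim |\tilde\mu_{m-1}|+\sqrt{d}$ up to constants, and also $\tilde R_{m-1}\lesssim |\tilde\mu_{m-1}|+\xi_{\tau'/2,0}$.

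\textbf{Step 2 (decomposition).} Write $g(x)=\Proj(x,R,c,I)$ with $R=\tilde R_{m-1}$ and $c=\tilde\mu_{m-1}$. Then $\tilde\mu_m$ splits as
$$\tilde\mu_m=\E g(X)+\Big(\frac1n\sumn g(X_i)-\E g(X)\Big)+\frac1n\sum_{i\in\text{bad}}\big(g(X_i')-g(X_i)\big)+Z .$$
I would bound the four terms as follows.
\begin{enumerate}
\item[(i)] \emph{Population bias.} Since $X$ is symmetric, $\E g(X)=\E[g(X)-X]$, which is supported on $\{|X-c|>R\}$, an event of mass at most about $\tau'$. I would bound $|\E(g(X)-X)|$ by $\E(|X-c|-R)_+$ plus a symmetry-cancellation argument. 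The aim is a bound of the form $(1+|c|)\sqrt{\tau'}$, obtained via Cauchy--Schwarz with $\E|X|^2=d$ together with the fact that $R$ is at least the $\tau'$-quantile; the elliptical symmetry is what keeps this from being $\sqrt{d}$ times larger.
\item[(ii)] \emph{Concentration.} Peel over dyadic shells of radius and a net of centers, in the spirit of the dyadic-shell empirical-process lemma. Since $|g(X_i)-c|\le R$, a vector Bernstein bound plus the union bound gives $(1+|c|)\sqrt{(d\log n+\log(\cdot/\alpha))/n}$. The $\log n$ comes from discretizing $c$ and $R$ over polynomial ranges, which is permitted because $\tilde R_0\le n^w$.
\item[(iii)] \emph{Contamination.} This term is at most $2\eta R$, and $R\lesssim |c|+\sqrt{d/\tau'}$ by Markov. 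With $\tau'\gtrsim\eta$ this gives $\eta|c|+\sqrt{d\eta}$.
\item[(iv)] \emph{Privacy noise.} We have $|Z|\lesssim \lambda_{m,n}\sqrt{d}\lesssim R\sqrt{d/(n^2\rho_{\mathrm{mean},m})}$, using the lower bound on $\alpha/M^*$ relative to $e^{-d}$ for the Gaussian norm tail. With $R\lesssim(1+|c|)\sqrt{n}$, which holds because $\tau'\gtrsim 1/n$, this gives $(1+|c|)\sqrt{d/(n\rho_{\mathrm{mean},m})}$.
\end{enumerate}
Finally, absorb $\sqrt{\tau'}\lesssim\sqrt{\eta}+\sqrt{(d+\log)/n}$ and union bound the failure events to get total failure probability at most $9\alpha/2M^*$.

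\textbf{Main obstacle.} I expect the bias term (i) with an off-center ball to be the hardest step. The difficulty is obtaining a contraction factor proportional to $|c|$ times a small quantity, rather than $|c|$ itself. I would first attempt it by decomposing $X$ into its component along $c$ and the orthogonal component. The orthogonal part cancels by symmetry, so only the scalar component along $c$ remains, and it can be handled with the quantile sandwich from Step 1.
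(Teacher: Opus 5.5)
\textbf{The gap is in step (ii).} Your four-term decomposition is the paper's ($I_3$, $I_2$, $I_1$, $I_4$). Your treatment of (i), reducing to the direction of $c$ and using Cauchy--Schwarz with $\E\ip{X,u}^2=1$, is exactly Lemma~\ref{lemma::proj_e_va_cs}. Steps (iii) and (iv) are also right in substance.

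The problem is the concentration step. A vector Bernstein bound with $|g(X_i)-c|\le R$, union-bounded over a net in $(R,c)$, cannot give $(1+|c|)\sqrt{(d\log n+\log(\cdot/\alpha))/n}$. Vector Bernstein is governed by two quantities:
\begin{itemize}
\item the trace variance $\E|g(X)-\E g(X)|^2$, which is of order $d$ in general (for example for Gaussian data), not $1+|c|^2$;
\item the envelope $2R$.
\end{itemize}
You need uniformity over $(R,c)$ because $\tilde R_{m-1}$ and $\tilde\mu_{m-1}$ depend on the sample, so the net has log-cardinality of order $d\log n$. You then get
\[
\sqrt{\frac{d\,(d\log n+\log(1/\alpha))}{n}}+\frac{R\,(d\log n+\log(1/\alpha))}{n}.
\]
The first term is too large by $\sqrt d$ exactly when $|c|\lesssim 1$, which is the regime that determines the final rate. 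The second term is worse. Under only a second-moment assumption, $R$ can be of order $\sqrt{d/\tau'}$, which is up to $\sqrt n$ since $\tau'\gtrsim d/n$. That gives $d\log n/\sqrt n$. The envelope term survives even if you switch to scalar Bernstein over a net of directions. Carried through the recursion, the error floor would be roughly $d\log n/\sqrt n$ rather than $\sqrt{d\log n/n}$.

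\textbf{The missing idea.} Split off the part of the clipping that does not depend on $c$:
\[
\Proj(Y,R,c)=\Proj(Y,(R-|c|)_+,0)+\bigl[\Proj(Y,R,c)-\Proj(Y,(R-|c|)_+,0)\bigr].
\]
\begin{itemize}
\item \emph{Origin-centred piece.} Write $Y_i=\ell_iU_i$ and condition on $\ell$. Then $\frac1n\sum_i(R\wedge\ell_i)\ip{U_i,u}$ has subgaussian increments at scale $n^{-1}d^{-1/2}(\sum_i(R\wedge\ell_i)^2)^{1/2}$. Apply Dudley's inequality over $(R,u)$. Add a Bernstein bound $\sum_i(R_3\wedge\ell_i)^2\lesssim nd$; this is the only place the envelope enters, and it is tamed by $\tau'\gtrsim\log(\cdot)/n$. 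Together these give $\sqrt{d/n}+\sqrt{\log(1/\alpha)/n}$, with only the directional variance entering and no $R$-envelope term.
\item \emph{Remainder.} It has envelope at most $2|c|$, independent of $R$. A Bernstein-type uniform bound over $(R,c,u)$ with entropy $\lesssim d\log n$ (the paper uses van de Geer's Theorem 5.11) gives $|c|\sqrt{(d\log n+\log(\cdot/\alpha))/n}$ once $n\gtrsim d\log n$.
\item \emph{Peeling.} The randomness of $\tilde\mu_{m-1}$ is handled by peeling over dyadic shells of $|c|=|\tilde\mu_{m-1}|_\Sigma$, not of the radius.
\end{itemize}
This step, not the bias term (i), is where the real difficulty lies.

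\textbf{Smaller points.}
\begin{itemize}
\item Your sandwich $\xi_{5\tau'/4}\le\tilde R_{m-1}\le\xi_{3\tau'/4}$ is tighter than the constants in $\tau'$ and Condition~\ref{cond::beta}(a) deliver. Contamination $\eta\le\tau'/8$ and the empirical-process deviation $\tau'/8$ already use up the slack. The paper proves only $\xi_{2\tau'}\le\tilde R_{m-1}\le\xi_{\tau'/4}$, and any constant-factor sandwich suffices.
\item The claim $\tilde R_{m-1}\gtrsim\sqrt d$ is false for heavy-tailed $F$, but you never use it.
\item In (iv) you need $\tau'\gtrsim d/n$, not $1/n$, to get $R\lesssim|c|+\sqrt n$.
\end{itemize}
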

\begin{proof}
Without loss of generality, assume that $\mu=0$, since an iteration of the algorithm is translation invariant (with probability 1). 
Take the constants in the conditions on $n$ and $\eta$ to be large enough such that $0<\tau'<1/2$. 
We have that for $X\sim\EC(0,\Sigma,F)$ independent of the sample, 
\begin{align*}
|\tilde \mu_m|_{\Sigma} &=\left| \sum_{i=1}^n \Proj( X_i',\tilde R_{m-1},\tilde \mu_{m-1},\Sigma)/n+\cN_d\left(0,\lambda_{m,n}^2\Sigma \right)\right|_{\Sigma}\\
&\leq \Bigg| \sum_{i=1}^n [\Proj( X_i',\tilde R_{m-1},\tilde \mu_{m-1}, \Sigma)- \Proj( X_i,\tilde R_{m-1},\tilde \mu_{m-1},\Sigma)]/n\Bigg|_{\Sigma}\\
&\hspace{6em}+ \Bigg|\sum_{i=1}^n [\Proj( X_i,\tilde R_{m-1},\tilde \mu_{m-1}, \Sigma)-\E(\Proj( X,\tilde R_{m-1},\tilde \mu_{m-1}, \Sigma))]/n\Bigg|_{\Sigma}\\
&\hspace{6em} + \Bigg|\E(\Proj(X,\tilde R_{m-1},\tilde \mu_{m-1}, \Sigma))\Bigg|_{\Sigma} + \Bigg|\cN_d\left(0,\lambda_{m,n}^2\Sigma \right)\Bigg|_{\Sigma}\\
&\coloneqq |I_1|_{\Sigma}+|I_2|_{\Sigma}+|I_3|_{\Sigma}+|I_{4}|_{\Sigma}.
\end{align*}
Here, $I_1$ is the difference between the uncontaminated and contaminated clipped sample mean, $I_2$ is the deviation between the uncontaminated clipped sample mean and its corresponding theoretical mean, $I_3$ is the deviation between the theoretical clipped mean and the true mean, and $I_4$ is magnitude of the noise added to the clipped mean. 
The rest of the proof proceeds by bounding each of $I_1$--$I_4$ above with high probability.

\paragraph{Bounding $I_1$:}  
Note that $|\tilde \mu_{m-1}|_{\Sigma}\leq 3\tilde R_0/4$ by assumption. 
Thus, the conditions of Lemma~\ref{lemma::radius_error} are satisfied. 
Let $\tilde \xi_{q,c}=\inf\{R\colon \tilde\nu(B_{R,\Sigma}(c)^c)\leq q\}$.  
Applying Lemma~\ref{lemma::radius_error} yields that 
\begin{equation}\label{eqn::radius-bound}
\xi_{2\tau',\tilde \mu_{m-1}}\leq \tilde \xi_{3\tau'/2,\tilde \mu_{m-1}}\leq \tilde R_{m-1} \leq  \tilde \xi_{\tau'/2,\tilde \mu_{m-1}}\leq   \xi_{\tau'/4,\tilde \mu_{m-1}},
\end{equation}
with probability at least $1-2\alpha/(3M^*)$. 
Call the event on which \eqref{eqn::radius-bound} holds $E_2$ and assume $E_2$ holds for the remainder of the proof. 

Next, because $\tau'<1\implies \tau'/4<1/4<1/2$ and Condition~\ref{cond::F-inc} holds we can apply  Lemma~\ref{lemma::quantile bound}. 
Applying Lemma~\ref{lemma::quantile bound} we have for any $c\in\rdd$,
\begin{equation}\label{eqn::qb}
    \xi_{\tau'/4,c}\lesssim  \sqrt{\frac{d}{\tau' }}+|c|_{\Sigma}.
\end{equation}
Equation \eqref{eqn::radius-bound}, \eqref{eqn::qb} and the fact that $\eta\leq \tau'/8$ yields that
\begin{multline*}
        |I_1|_{\Sigma} =\left|\sum_{i=1}^n \left[\Proj( X_i',\tilde R_{m-1},\tilde \mu_{m-1},\Sigma)- \Proj( X_i,\tilde R_{m-1},\tilde \mu_{m-1}, \Sigma)\right]/n\right|_{\Sigma}\\ \lesssim \eta\tilde R_{m-1} \lesssim \eta \xi_{\tau'/4,\tilde \mu_{m-1}}
       \lesssim \sqrt{d\eta }+\eta\norm{\tilde \mu_{m-1}}_\Sigma.
\end{multline*}
\paragraph{Bounding $I_2$:} 
Let $Y_i=\Sigma^{-1/2}X_i$ and $\hat\mu_{m-1}=\Sigma^{-1/2}\tilde \mu_{m-1}$ and define $\Proj(a,b,c,I)=\Proj(a,b,c)$ for appropriate $a,b,c$. 
Using this notation, simple algebra yields that
\begin{align*}
    |I_2|_{\Sigma}&=\sup_{
|u|=1 }\sum_{i=1}^n \left(\frac{\left\langle \Proj(Y_i,\tilde R_{m-1},\hat\mu_{m-1}),u\right\rangle-\left\langle\E\Proj(Y_i,\tilde R_{m-1},\hat\mu_{m-1}),u\right\rangle}{n}\right).
\end{align*}

Letting $R_1=\inf_{|c|_{\Sigma}\leq\norm{\tilde \mu_{m-1}}_\Sigma} \xi_{2\tau',c}$ and $R_2=\sup_{|c|_{\Sigma}\leq\norm{\tilde \mu_{m-1}}_\Sigma}\xi_{\tau'/4,c}$, we can write
\begin{multline*}
     n|I_2|_{\Sigma}\leq  \left|\sum_{i=1}^n \Proj(Y_i,(\tilde R_{m-1}-|\hat\mu_{m-1}|)_+,0)\right|
  \\ +\sup_{\substack{
|c|\leq\norm{\hat\mu_{m-1}},\\
R_1\le R\le R_2
}}\Big|\sum_{i=1}^n (\Proj(Y_i,R,c)-\Proj(Y_i,(R-|c|)_+,0)  -\E[\Proj(Y_i,R,c)])\Big|
\coloneqq nI_{2,1}+nI_{2,2}.
\end{multline*}
Here, one notes that $\E\Proj(Y_i,(\tilde R_{m-1}-|\hat\mu_{m-1}|)_+,0)=0$ by symmetry. 
Now, we handle the two terms separately. 

Focusing on $I_{2,1}$, we have $\tilde R_{m-1}-|\hat\mu_{m-1}|\leq K\sqrt{d/\tau'}\coloneqq R_3$ for some $K>0$. In addition we have that Condition~\ref{cond::F-inc} implies that each $Y_i=U_i\ell_i$, where $U_i$ is uniform on the unit sphere and $\ell_i$ is a positive scalar. 
Hence,
\begin{align*}
    I_{2,1}&\leq \sup_{ 
0\le R\le R_3
}\frac{1}{n}\left|\sum_{i=1}^n \Proj(Y_i,R,0)\right|=\sup_{ 
0\le R\le R_3
}\sup_{\norm{u}=1}\frac{1}{n}\sum_{i=1}^n (R\wedge \ell_i)\ip{U_i,u}.
\end{align*}

Now, we condition on $\ell=(\ell_1,\ldots,\ell_n)'$. 
First, observe that
\begin{align*}
     \var\left(\frac{1}{n}\sum_{i=1}^n (R\wedge \ell_i)\ip{U_i,u}|\ell\right)=\frac{1}{n^2}\sum_{i=1}^n  (R\wedge \ell_i)^2 \var\left(\ip{U_i,u}|\ell\right) \leq \frac{1}{n^2d}\sum_{i=1}^n  (R_3\wedge \ell_i)^2\coloneqq\sigma^2  .
\end{align*}
We now define the process
$$Z_{R,u}=\frac{1}{n}\sum_{i=1}^n (R\wedge \ell_i)\ip{U_i,u},$$
where we have conditioned on $\ell=(\ell_1,\ldots,\ell_n)'$. 
Next, we use Dudley’s integral inequality, see \citep{versh}, Theorem 8.1.6, combined with the observation that $Z_{R,u}=Z_{R,u}-Z_{0,u}$. 

First, we have that for admissible $s,t$, it holds that
$\norm{Z_{t,u}-Z_{s,v}}_{L_2}\leq 2\sigma.$
Next, we have that the set $T$ in the notation of \citet{versh} is given by $S_1\times [0,R_3],$ where $S_1=\{u\colon \norm{u}=1\}$. 
We must compute the entropy of $T$ with respect to the $L_2$-metric and show that the process has subgaussian increments. 
To this end, observe that letting $r_i=(R\wedge \ell_i)u-(s\wedge \ell_i)v$, we have that using the fact that $U_i$ are i.i.d. uniform on $S_1$,
\begin{multline*}
    \norm{Z_{R,u}-Z_{s,v}}^2_{L_2}=\frac{1}{n^2}\E\left[\left|\sum_{i=1}^n (R\wedge \ell_i)\ip{U_i,u}-(s\wedge \ell_i)\ip{U_i,v}\right|^2|\ell\right]
    =\frac{1}{n^2}\E\left[\left|\sum_{i=1}^n \ip{U_i,r_i}\right|^2|\ell\right]\\
    =\frac{1}{n^2}\sum_{i=1}^n \E\left[\left|\ip{U_i,r_i}\right|^2|\ell\right]
    =\frac{1}{dn^2}\sum_{i=1}^n \norm{r_i}^2.
\end{multline*}
Now, for $u=v$, we have that
$\norm{r_i}^2\leq |R-s|^2, $
and $    \norm{Z_{R,u}-Z_{s,v}}^2_{L_2}\leq |R-s|^2/(dn).$
For $s=R$, we have that $\norm{Z_{R,u}-Z_{t,v}}_{L_2}\lesssim \sigma\norm{u-v}.$ 
Therefore, we can use these facts, and the usual Euclidean covering numbers, as follows.
$$\log N(\epsilon,T,\norm{\cdot}_{L_2})\lesssim d\log(1+2\sigma/\epsilon)+\log(1+R_3/(\epsilon\sqrt{dn})).$$

The entropy integral is bounded as follows, 
$$\int_0^{2\sigma }\sqrt{\log N(\epsilon,T,\norm{\cdot}_{L_2})}\lesssim \sigma\left[\sqrt{d}+\sqrt{\log\left(e+\frac{R_3}{2\sigma \sqrt{dn}}\right)}\right].$$
Now, it remains to show subgaussianity of the increments. To that end, we have that because the $U_i$ are independent and zero mean,
$$\norm{Z_{R,u}-Z_{t,v}}_{\psi_2}\lesssim n^{-1}\left(\sum_{i=1}^n \norm{\ip{U_i,r_i}}_{\psi_2}^2\right)^{1/2}\lesssim n^{-1}d^{-1/2}\left(\sum_{i=1}^n \norm{r_i}^2\right)^{1/2}.$$

We can now apply Dudley’s integral inequality, see \citep{versh}, Theorem 8.1.6. 
We have with probability $1-\alpha/(2M^*)$, it holds that 
$$I_{2,1}\leq \sup_{(R,u)\in T}\norm{Z_{R,u}-Z_{0,u}}\lesssim \sigma\left[\sqrt{d}+\sqrt{\log\left(e+\frac{R_3}{2\sigma \sqrt{dn}}\right)}\right]+\sqrt{\log(4M^*/\alpha)}\sigma.$$
Simplifying, we have that 
\begin{align*}
   I_{2,1}&\lesssim  \sqrt{\sum_{i=1}^n  (R_3\wedge \ell_i)^2}\left[\frac{1}{n}+\sqrt{\frac{\log\left(e+\frac{R_3}{2\sigma \sqrt{dn}}\right)}{n^2d}}+\sqrt{\frac{\log(4M^*/\alpha)}{n^2d}}\right].
\end{align*}
Now, we bound the coefficient $$\sqrt{\sum_{i=1}^n  (R_3\wedge \ell_i)^2},$$
with high probability. 

We have that by definition, ${\sum_{i=1}^n  \E(R_3\wedge \ell_i)^2}\leq n(d\wedge R_3^2)$. 
In addition, note that $(R_3\wedge \ell_i)^4\leq R_3^2(R_3\wedge \ell_i)^2$, and so $\var( (R_3\wedge \ell_i)^2)\leq  R_3^2 \E(R_3\wedge \ell_i)^2 \leq d R_3^2$. 
Applying Bernstein's inequality yields that with probability at least $1-\alpha/(2M^*)$, we have 
$$\sum_{i=1}^n  [(R_3\wedge \ell_i)^2-\E(R_3\wedge \ell_i)^2]\lesssim R_3\sqrt{nd\log(4M^*/\alpha)}+R_3^2\log(4M^*/\alpha).$$
Consequently, with probability at least $1-\alpha/(2M^*)$, we have 
$$\sum_{i=1}^n  (R_3\wedge \ell_i)^2\lesssim n(d\wedge R_3^2)+R_3\sqrt{nd\log(2M^*/\alpha)}+R_3^2\log(4M^*/\alpha).$$

Now, we have that $R_3\lesssim \sqrt{\frac{d}{\tau'}}$, and $\tau'\gtrsim \log(4M^*/\alpha)/n$. 
In this case, we have that $R_3^2\log(4M^*/\alpha)\lesssim dn$, and $R_3\sqrt{nd\log(4M^*/\alpha)}\lesssim dn$. 
Therefore, we have that with probability at least $1-\alpha/(2M^*)$, we have  
$$\sum_{i=1}^n  (R_3\wedge \ell_i)^2\lesssim nd.$$

Thus, 
$$I_{2,1}\lesssim \sqrt{\frac{d}{n}}+\sqrt{\frac{\log\left(e+\frac{R_3}{2\sqrt{d}}\right)}{n}}+\sqrt{\frac{\log(4M^*/\alpha)}{n}}.$$
Simplifying further, using the condition that $n\geq d$ and $\tau'\gtrsim d/n$, we get that 
$$I_{2,1}\lesssim \sqrt{\frac{d}{n}}+\sqrt{\frac{\log\left(e+\frac{1}{2 \sqrt{\tau'}}\right)}{n}}+\sqrt{\frac{\log(4M^*/\alpha)}{n}} \lesssim \sqrt{\frac{d\vee \log (en/d)}{n}}+\sqrt{\frac{\log(4M^*/\alpha)}{n}}.$$

We now bound $I_{2,2}$. To this end, recall
\begin{align*}
    I_{2,2}&\leq \sup_{\substack{
|c|\leq\norm{\hat \mu_{m-1}},\\
R_1\le R\le R_2
}}\Big|\sum_{i=1}^n [\Proj(Y_i,R,c)-\Proj(Y_i,(R-|c|)_+,0)  -\E[\Proj(Y_i,R,c)]]\Big|/n.
\end{align*}
Because $\E\Proj(Y_i,(R-|c|)_+,0) =0$ for all $R,c$, this is a supremum of a centered empirical process. 
It has a random envelope $\propto \norm{\tilde \mu_{m-1}}_\Sigma$. 
To bound this term with high probability, we use a shell argument, combined with a uniform concentration inequality for the supremum of an empirical process, namely Theorem 5.11 of \citet{vanDeGeer2000}. The trick is that the envelope of this class does not depend on the size of the maximal projection radius $R_2$. The application is long, and requires computing covering numbers and an entropy integral, so we prove it in a series of lemmas at the end of this section. 
Using this technique, whose final result is given in Lemma~\ref{lemma::dyadic-shell}, we get that for universal constants $C_0,C_1,C_3>0$ and $C_2(w)>0$, taking $\alpha/(2M^*)>C_0\log ne^{-C_1n}$, which holds by assumption, we have that if
$$n\geq C_2(w)d\log n,$$
which also holds by assumption,
then with probability at least $1-\alpha/(2M^*)$,
$$I_{2,2}\lesssim 1/n+\norm{\tilde \mu_{m-1}}_\Sigma\left(\sqrt{\frac{d\log n}{n}}  
  +\sqrt{\frac{\log(2M^*C_3 \log n/\alpha)}{n}}\right).   $$
Thus, overall, with probability $1-3\alpha/(2M^*)$, we have that
$$I_2\lesssim (1+\norm{\tilde \mu_{m-1}}_\Sigma)\left(\sqrt{\frac{d\log(en)}{n}}  
  +\sqrt{\frac{\log((2C_3\vee 4)M^*\log n /\alpha)}{n}}\right).$$

\paragraph{Bounding $I_3$:} 
Directly from Lemma~\ref{lemma::proj_e_va_cs}, in combination with \eqref{eqn::radius-bound}, we have
\begin{align*}
    |I_3|_{\Sigma}&\leq \left( 1+\norm{\tilde \mu_{m-1}}_\Sigma\right)\sqrt{\Prr{|X-\tilde \mu_{m-1}|_{\Sigma}>\tilde R_{m-1}}}\lesssim \left(1+\norm{\tilde \mu_{m-1}}_\Sigma\right)\sqrt{\tau'}.
\end{align*}
\paragraph{Bounding $I_4$} Now, by standard subgaussian tail bounds, we have that for $u\geq \sqrt{d}$,
\begin{align*}
\Prr{\norm{Z}>u}&=\Prr{\chi^2_d>u^2}\leq 2\exp(-u^2/10),
\end{align*}
where $Z\sim\cN_d(0,I)$. 
This implies that with probability at least $1-2\exp(-d/10)$, it holds that $\norm{Z}\lesssim \sqrt{d}.$
with probability at least $1-2\exp(-d/10)$, we then have that 
\begin{equation*}
    |I_4|_{\Sigma} =|\cN_d\bigl(0,\lambda_{m,n}^2\times \Sigma\bigr)|_{\Sigma}\lesssim \xi_{\tau'/4,\tilde \mu_{m-1}} \sqrt{\frac{d}{n^2\rho_{\mathrm{mean},m}}}
     \lesssim(\sqrt{\frac{d}{\tau' }}+\norm{\tilde \mu_{m-1}}_\Sigma)\sqrt{\frac{d}{n^2\rho_{\mathrm{mean},m}}}.
\end{equation*}

\paragraph{Putting the bounds together:}
Combining the bounds on each of $I_1$--$I_4$, we get that with probability at least $1-5\alpha/2M^* -2\exp(-d/10)>1-9\alpha/2M^* $ that
\begin{multline*}
        |\tilde \mu_m|_{\Sigma}\lesssim \sqrt{d\eta}+  (1+\norm{\tilde \mu_{m-1}}_\Sigma)\left(\sqrt{\frac{d\log(en)}{n}}  
  +\sqrt{\frac{\log((C_3\vee 2)M^*\log n /\alpha)}{n}}\right)
    \\ +\left(1+\norm{\tilde \mu_{m-1}}_\Sigma\right)\sqrt{\tau'} + (\sqrt{\frac{d}{\tau' }} +\norm{\tilde \mu_{m-1}}_\Sigma)\sqrt{\frac{d}{n^2\rho_{\mathrm{mean},m}}}+\norm{\tilde \mu_{m-1}}_\Sigma\eta.
\end{multline*}
Simplifying with the bound $\tau'\geq d/n$, noting that $\sqrt{\tau'}>\eta$, and plugging in \eqref{eqn::tau} yields 
\begin{multline*}
      |\tilde \mu_m|_{\Sigma}\lesssim  \sqrt{d\eta} +\left(1+\norm{\tilde \mu_{m-1}}_\Sigma\right)\times\\  \Bigg( \sqrt{\frac{d}{n\rho_{\mathrm{mean},m}}}+\sqrt{\frac{ d\log(en)\vee \log((3(h(\beta)+1)\vee (C_3\vee 2)\log n )M^*/\alpha)}{n}}
    +\sqrt{\eta}\Bigg).    \qedhere
\end{multline*}
\end{proof}
\subsection{A helpful bound on the population radius}
\begin{lemma}\label{lemma::quantile bound}
For $0<a<1/2$ and $b\in\rdd$ 
we have that
\begin{equation}\label{eqn::quantile-general}
   \xi_{a,b}\leq \sqrt{\frac{d}{a}}+|b-\mu|_{\Sigma}.
\end{equation}
\end{lemma}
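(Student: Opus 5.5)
The plan is to combine Chebyshev's inequality with the triangle inequality for the Mahalanobis norm. We may assume $\mu=0$ after translating. Write $X\sim\nu=\EC(0,\Sigma,F)$, so that $X=\ell\Sigma^{1/2}u$ and $|X|_\Sigma=\ell$ with $\E\ell^2=d$. By the definition of $\xi_{a,b}$, it suffices to exhibit one radius $R$ for which $\nu(B_{R,\Sigma}(b)^c)\le a$; then $\xi_{a,b}\le R$. We take $R=\sqrt{d/a}+|b|_\Sigma$.

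The key containment comes from the triangle inequality. If $|X|_\Sigma\le\sqrt{d/a}$, then
\[
|X-b|_\Sigma\le |X|_\Sigma+|b|_\Sigma\le \sqrt{d/a}+|b|_\Sigma=R.
\]
Hence $B_{R,\Sigma}(b)^c\subseteq\{|X|_\Sigma>\sqrt{d/a}\}$. Markov's inequality applied to $\ell^2$ then gives
\[
\nu\bigl(B_{R,\Sigma}(b)^c\bigr)\le \Pr\bigl(\ell^2> d/a\bigr)\le \frac{\E\ell^2}{d/a}=a.
\]
Taking the infimum in the definition of $\xi_{a,b}$ yields \eqref{eqn::quantile-general}. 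Translating back to general $\mu$ replaces $|b|_\Sigma$ with $|b-\mu|_\Sigma$.

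There is essentially no obstacle; the bound is a one-line Chebyshev argument. The only things to check are:
\begin{itemize}
\item the normalization $\E\ell^2=d$, which holds under Condition~\ref{cond::F-inc}, since the second moment is finite and the scale is absorbed into $\Sigma$;
\item that $\xi$ is defined via an infimum, so a single feasible $R$ is enough.
\end{itemize}
The assumption $a<1/2$ is not needed for this argument.
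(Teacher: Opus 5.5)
Your proof is correct, and it is more direct than the paper's.

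\textbf{The paper's route.} The paper works from the other side. It uses the infimum definition to get the lower bound $\Pr(|X-b|_\Sigma\ge \xi_{a,b})\ge a$ on the tail mass at the quantile radius itself. It then applies the triangle inequality in the form $|X-b|_\Sigma\ge \xi_{a,b}\Rightarrow |X|_\Sigma\ge \xi_{a,b}-|b|_\Sigma$, uses Markov to get $a\le d/(\xi_{a,b}-|b|_\Sigma)^2$, and rearranges. That rearrangement needs $\xi_{a,b}>|b|_\Sigma$. To secure it, the paper adds a separate central-symmetry argument showing $\xi_{q,c}>|c-\mu|_\Sigma$ whenever $q<1/2$: after centering, the balls $B_{|c|_\Sigma,\Sigma}(c)$ and $-B_{|c|_\Sigma,\Sigma}(c)$ meet only at the origin, so each carries mass at most $1/2$. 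This is the only place where the hypothesis $a<1/2$ and the symmetry of $\nu$ are used.

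\textbf{What your route changes.} By exhibiting the feasible radius $R=\sqrt{d/a}+|b-\mu|_\Sigma$ directly, you avoid both the limiting step at the infimum and the case split. Even along the paper's route, the case $\xi_{a,b}\le|b-\mu|_\Sigma$ makes \eqref{eqn::quantile-general} trivial, so the symmetry argument is not actually needed for this bound.

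\textbf{What each approach buys.} Your version shows that $a<1/2$ is superfluous and that only $\E|X-\mu|_\Sigma^2=d$ is used. The bound therefore holds for any distribution with that second moment, elliptical or not. The paper's argument additionally yields the lower bound $\xi_{q,c}>|c-\mu|_\Sigma$ for $q<1/2$. That is of some independent interest but is not required for the lemma as stated.
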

\begin{proof}
First, assume that for $c\in\rdd$, and any $0<q<1/2$ it holds that $\xi_{q,c} > |c-\mu|_{\Sigma}.$ 
Then,
\begin{align*}
    \Prr{\norm{X-b}_{\Sigma}\geq \xi_{a,b}}=\Prr{X\notin B_{\xi_{a,b},\Sigma}(b)}\geq a.
\end{align*}
Now, for $\xi_{a,b}>|b|_{\Sigma}$, applying Markov's inequality yields
\begin{align*}
    a&\leq \Prr{\norm{X-b}_{\Sigma}\geq  \xi_{a,b}}\leq \Prr{\norm{X}_{\Sigma}\geq  \xi_{a,b}-|b|_{\Sigma}}\leq  \frac{d}{(\xi_{a,b}-|b|_{\Sigma})^2}.
\end{align*}
Rearranging yields that
\begin{equation*}\label{eqn::quantile-elliptical}
\xi_{a,b}\leq \sqrt{\frac{d}{a}}+|b|_{\Sigma}. \qedhere
\end{equation*}

It remains to show that for any $c\in\rdd$, and any $0<q<1/2$ it holds that $\xi_{q,c} > |c-\mu|_{\Sigma}.$ 
Without loss of generality, take $\mu=0$ and let $X\sim \nu$. 
Define $\langle x,y\rangle_{\Sigma} = \langle \Sigma^{-1/2}x,\,\Sigma^{-1/2}y\rangle.$ 
If $x\in B_{|c|_{\Sigma},\Sigma}(c)$, then $|x-c|_{\Sigma}^2\leq |c|_{\Sigma}^2$. 
Thus, $ |x|_{\Sigma}^2 + |c|_{\Sigma}^2 - 2\langle x,c\rangle_{\Sigma}
\leq|c|_{\Sigma}^2.$ Canceling $|c|_{\Sigma}^2$ yields
\begin{equation}\label{eqn::innerp-bound}
|x|_{\Sigma}^2 - 2\langle x,c\rangle_{\Sigma} \leq 0,
\qquad\text{so}\qquad
\langle x,c\rangle_{\Sigma} \geq  0.
\end{equation}
Define $- B_{|c|_{\Sigma},\Sigma}(c) = \{-z : z\in  B_{|c|_{\Sigma},\Sigma}(c)\}$.  
If $y\in - B_{|c|_{\Sigma},\Sigma}(c)$, then $y=-x$ for some $x\in  B_{|c|_{\Sigma},\Sigma}(c)$, and immediately $\langle x,c\rangle_{\Sigma} \geq  0\implies \langle y,c\rangle_{\Sigma} \leq  0.$ 
Further, \eqref{eqn::innerp-bound} implies that  $\langle x,c\rangle_{\Sigma} =  0\iff |x|_{\Sigma}=0$. 
 
It follows that $ B_{|c|_{\Sigma},\Sigma}(c)\cap - B_{|c|_{\Sigma},\Sigma}(c) = \{0\}$.  
Central symmetry of $X$, i.e., $X\eqd -X$ implies $\Pr(X\in B_{|c|_{\Sigma},\Sigma}(c)) = \Pr(X\in -B_{|c|_{\Sigma},\Sigma}(c)).$
Now, using $ B_{|c|_{\Sigma},\Sigma}(c)\cap - B_{|c|_{\Sigma},\Sigma}(c) = \{0\}$,
\begin{equation*}
            \Pr(X\in B_{|c|_{\Sigma},\Sigma}(c)\cup -B_{|c|_{\Sigma},\Sigma}(c))
= \Pr(X\in B_{|c|_{\Sigma},\Sigma}(c)) + \Pr(X\in -B_{|c|_{\Sigma},\Sigma}(c))
= 2\,\Pr(X\in B_{|c|_{\Sigma},\Sigma}(c))
\le 1.
\end{equation*}

Thus
\[
\Pr(X\in B_{|c|_{\Sigma},\Sigma}(c)) \le \frac12,
\qquad
\Pr(|X-c|_{\Sigma} \ge |c|_{\Sigma})
= 1 - \Pr(X\in B_{|c|_{\Sigma},\Sigma}(c))
\ge \frac12.
\]
Now, $ \Pr(|X-c|_{\Sigma} > s)$ is clearly nonincreasing in $s$.
From above,
\[
\Pr(|X-c|_{\Sigma} > |c|_{\Sigma}) \ge \frac12 > q.
\]
Then, the smallest radius $s$ for which $\Pr(|X-c|_{\Sigma} > s)< q$ must satisfy $s\ge |c|_{\Sigma}$, or,
$\xi_{q,c} > |c|_{\Sigma}.$
\end{proof}

\subsection{Concentration of the mean for the first iteration}
\begin{lemma}\label{lemma::init_mean}
Assume that $\mu\in B_{\tilde R_0,\Sigma}(\mu_0)$. Then for all $M^*e^{-d/10}<\alpha<M^*/3$, if $n\gtrsim\log(2M^*/\alpha)$ then with probability at least $1-3\alpha/M^*$
\begin{equation*}
\norm{\mu-\tilde \mu_{1}}_{\Sigma}\lesssim  \tilde R_0\Bigg(S\left(\tilde R_0-|\mu_0-\mu|_{\Sigma}\right)
 +\sqrt{\frac{d}{n^2\rho_{\mathrm{mean},1}}}   +\eta+ \sqrt{\frac{\log(2M^*/\alpha)}{n}}\Bigg).
\end{equation*}
\end{lemma}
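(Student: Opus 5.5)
Without loss of generality take $\mu=0$. Write $\tilde\mu_1=n^{-1}\sumn \Proj(X_i',\tilde R_0,\mu_0,\Sigma)+Z$ with $Z\sim\cN_d(0,\lambda_{1,n}^2\Sigma)$. I would mirror the $I_1$--$I_4$ decomposition from the proof of Lemma~\ref{lemma:one-iteration}:
\begin{equation*}
|\tilde\mu_1|_\Sigma\le |I_1|_\Sigma+|I_2|_\Sigma+|I_3|_\Sigma+|I_4|_\Sigma .
\end{equation*}
Here $I_1$ is the contamination effect, $I_2$ the deviation of the clean clipped sample mean from its expectation, $I_3=\E\Proj(X,\tilde R_0,\mu_0,\Sigma)$ the clipping bias, and $I_4=Z$. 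Everything is taken relative to the fixed ball $B_{\tilde R_0,\Sigma}(\mu_0)$. Because this ball is deterministic, no empirical-process argument or radius lemma is needed. The crude $\tilde R_0$ factor is affordable since every projected point lies within $\tilde R_0$ of $\mu_0$, and hence within $2\tilde R_0$ of $\mu=0$.

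\textbf{Contamination and noise.} For $I_1$, at most $\floor{\eta n}$ summands differ, and each difference has $\Sigma$-norm at most $2\tilde R_0$. Hence $|I_1|_\Sigma\le 2\eta\tilde R_0$ deterministically. For $I_4$, $|Z|_\Sigma=\lambda_{1,n}|W|$ with $W\sim\cN_d(0,I)$ and $\lambda_{1,n}=\sqrt2\tilde R_0/(n\sqrt{\rho_{\mathrm{mean},1}})$. The chi-square tail bound used for $I_4$ in Lemma~\ref{lemma:one-iteration} gives $|W|\lesssim\sqrt d$ with probability at least $1-2e^{-d/10}$. This is at least $1-2\alpha/M^*$ because $\alpha>M^*e^{-d/10}$, so $|I_4|_\Sigma\lesssim \tilde R_0\sqrt{d/(n^2\rho_{\mathrm{mean},1})}$.

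\textbf{Sampling deviation.} For $I_2$, whiten via $Y_i=\Sigma^{-1/2}X_i$. The vectors $V_i=\Sigma^{-1/2}\Proj(X_i,\tilde R_0,\mu_0,\Sigma)$ are i.i.d. and lie in a Euclidean ball of radius $\tilde R_0$ around $\Sigma^{-1/2}\mu_0$. The centered variables $V_i-\E V_i$ are therefore bounded in norm by $2\tilde R_0$. A vector Hoeffding/bounded-differences inequality for $|n^{-1}\sum(V_i-\E V_i)|$ then gives, with probability at least $1-\alpha/M^*$,
\begin{equation*}
|I_2|_\Sigma\lesssim \tilde R_0\Bigl(n^{-1/2}+\sqrt{\log(2M^*/\alpha)/n}\Bigr).
\end{equation*}
McDiarmid applies because each coordinate change moves the norm by at most $4\tilde R_0/n$, and the expectation is bounded by $\sqrt{\E|V-\E V|^2/n}\le 2\tilde R_0/\sqrt n$. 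Since $\alpha<M^*/3$, the $n^{-1/2}$ term is absorbed into the logarithmic term. I would not use any variance-sensitive bound, since the target only carries the crude factor $\tilde R_0$.

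\textbf{Clipping bias (main step).} For $I_3$, use $\E X=0$ to write
\begin{equation*}
I_3=\E\bigl[(\Proj(X,\tilde R_0,\mu_0,\Sigma)-X)\,\ind{|X-\mu_0|_\Sigma>\tilde R_0}\bigr].
\end{equation*}
The key is the event in the indicator. If $|X-\mu_0|_\Sigma>\tilde R_0$, then $|X|_\Sigma>\tilde R_0-|\mu_0|_\Sigma$, and this event has probability $S(\tilde R_0-|\mu_0|_\Sigma)$ because $|X|_\Sigma=\ell\sim F$. On the same event the distance from $X$ to its projection is at most $|X|_\Sigma+\tilde R_0+|\mu_0|_\Sigma$. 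This yields
\begin{equation*}
|I_3|_\Sigma\le 2\tilde R_0\, S(\cdot)+\E\bigl[\ell\,\ind{\ell>\tilde R_0-|\mu_0|_\Sigma}\bigr].
\end{equation*}
The obstacle is the tail moment $\E[\ell\,\ind{\ell>t}]$, which is not obviously bounded by $\tilde R_0 S(t)$ for a general $F$. I would first try Cauchy--Schwarz, $\E[\ell\,\ind{\ell>t}]\le\sqrt{d\,S(t)}$, accepting a bound of the form $\tilde R_0 S+\sqrt{dS}$. If the stated form is to be kept exactly, I would instead appeal to Lemma~\ref{lemma::proj_e_va_cs}, which is used for $I_3$ in the iteration lemma. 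Either way, the downstream proof of Theorem~\ref{thm::main-result} applies Markov to $S(\tilde R_0(1-\gamma))\le d/(\tilde R_0^2(1-\gamma)^2)$, which is small when $\tilde R_0\ge2\sqrt n$, so the extra term is harmless.

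A union bound over the events for $I_2$ and $I_4$ then yields the claim with probability at least $1-3\alpha/M^*$. The condition $n\gtrsim\log(2M^*/\alpha)$ is only needed so that these deviation terms are meaningful.
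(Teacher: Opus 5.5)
Your bounds for contamination ($|I_1|_\Sigma\le 2\eta\tilde R_0$), for the Gaussian noise, and for the sampling deviation are fine, and the probability bookkeeping $1-\alpha/M^*-2e^{-d/10}\ge 1-3\alpha/M^*$ is correct. Using McDiarmid instead of the vector Bernstein inequality of Lemma~\ref{lemma::pme} is a legitimate, cruder route that still lands inside $\tilde R_0\sqrt{\log(2M^*/\alpha)/n}$.

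The gap is the clipping bias $I_3$: you never establish the stated term $\tilde R_0\,S(\tilde R_0-|\mu_0-\mu|_\Sigma)$. Cauchy--Schwarz leaves you with $\tilde R_0 S+\sqrt{dS}$. Your fallback fails as well, because Lemma~\ref{lemma::proj_e_va_cs} gives $\sqrt{\Prr{|X-\mu_0|_\Sigma>\tilde R_0}}+|\mu_0|_\Sigma\Prr{|X-\mu_0|_\Sigma>\tilde R_0}$, i.e.\ a $\sqrt{S}$ dependence. For light-tailed $F$, where $S(\tilde R_0-|\mu_0|_\Sigma)$ is far below $1/\tilde R_0^2$, both $\sqrt{dS}$ and $\sqrt{S}$ are much larger than $\tilde R_0 S$. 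So what you prove is strictly weaker than the lemma. Arguing that the extra term is harmless inside Theorem~\ref{thm::main-result} is a statement about the theorem, not a proof of this lemma.

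The missing idea is to use central symmetry on a symmetric set, rather than $\E X=0$ on all of $\rdd$. Subtracting $X$ everywhere forces you to control the unbounded displacement $|\Proj(X)-X|_\Sigma$, and hence the tail moment $\E[\ell\,\ind{\ell>t}]$. Instead, with $\mu=0$, let $A_1=\{x\colon |x|_\Sigma\le(\tilde R_0-|\mu_0|_\Sigma)_+\}$.

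For $x\in A_1$ you have $|x-\mu_0|_\Sigma\le\tilde R_0$, so $\Proj(x,\tilde R_0,\mu_0,\Sigma)=x$. Moreover $\E[X\ind{X\in A_1}]=0$ because $A_1$ is symmetric and $X\eqd -X$. On $A_1^c$, bound the projected point itself rather than its displacement: it lies in $B_{\tilde R_0,\Sigma}(\mu_0)$ and $|\mu_0|_\Sigma\le\tilde R_0$, so $|\Proj(X,\tilde R_0,\mu_0,\Sigma)|_\Sigma\le 2\tilde R_0$. Hence
\[
|I_3|_\Sigma=\bigl|\E[\Proj(X,\tilde R_0,\mu_0,\Sigma)\ind{X\notin A_1}]\bigr|_\Sigma\le 2\tilde R_0\,S(\tilde R_0-|\mu_0|_\Sigma),
\]
which is exactly Lemma~\ref{lemma::proj_e_va2}, the step the paper uses. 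With this substitution, the rest of your argument yields the lemma as stated.
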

\begin{proof}
Assume that $\mu=0$ without loss of generality and for brevity, let $\rho=\rho_{\mathrm{mean},1}$. 
We first bound the noise attributed to the Gaussian mechanism. 
By standard subgaussian tail bounds, we have that for $u\geq \sqrt{d}$,
\begin{align*}
\Prr{\norm{Z}>u}&=\Prr{\chi^2_d>u^2}\leq 2\exp(-u^2/10),
\end{align*}
where $Z\sim\cN_d(0,I)$. 
This implies that with probability at least $1-2\exp(-d/10)$, it holds that $\norm{Z}\lesssim \sqrt{d}.$
with probability at least $1-2\exp(-d/10)$, we then have that 
\begin{align*}
   |\cN_d\bigl(0,\lambda_{1,n}^2\times \Sigma\bigr)|_{\Sigma}\lesssim \tilde R_0 \sqrt{\frac{d}{n^2\rho}}.
\end{align*}
Now let $\hat\mu_1=\frac{1}{n}\sum_{i=1}^n\Proj(X_i,\tilde R_0,\mu_0,\Sigma)$ and $\hat\mu_1'=\frac{1}{n}\sum_{i=1}^n\Proj(X_i',\tilde R_0,\mu_0,\Sigma)$. Observe that $|\hat\mu_1-\hat\mu_1'|_{\Sigma}\leq 2 \eta\tilde R_0$. 
With this in mind, we have that with probability at least $1-\alpha/M^*-2\exp(-d/10)$
\begin{align*}
       | \tilde\mu_1|_{\Sigma} &\lesssim |\E(\Proj(X,\tilde R_0,\mu_0,\Sigma))|_{\Sigma}+|\hat\mu_1'-\E(\Proj(X,\tilde R_0,\mu_0,\Sigma))|_{\Sigma}+\widetilde R_0 \sqrt{\frac{d}{n^2\rho}}\\
         &\lesssim |\E(\Proj(X,\tilde R_0,\mu_0,\Sigma))|_{\Sigma}+2 \eta\tilde R_0+|\hat\mu_1-\E(\Proj(X,\tilde R_0,\mu_0,\Sigma))|_{\Sigma}+\widetilde R_0 \sqrt{\frac{d}{n^2\rho}}\\
        &\lesssim  \tilde R_0S\left(\tilde R_0-|\mu_0|_{\Sigma}\right)  +\eta \tilde R_0+ \sqrt{\frac{\log(2M^*/\alpha)(d\wedge (\tilde R_0)^2)}{n}}+\tilde R_0\frac{\log(2M^*/\alpha)}{n}+\widetilde R_0 \sqrt{\frac{d}{n^2\rho}},
\end{align*}
where the last line applies Lemma~\ref{lemma::proj_e_va2} and Lemma~\ref{lemma::pme}.  
\end{proof}
\subsection{Upper bounds on the expected value of the projections}
\begin{lemma}\label{lemma::proj_e_va_cs}
    For $X\sim \EC(0,\Sigma,F)$, we have that for $R>0$ and $c\in\rdd$, we have that
    \begin{equation*}
         |\E\Proj(X,R,c,\Sigma)|_{\Sigma}\leq \sqrt{\Prr{|X-c|_{\Sigma}> R}}+|c|_{\Sigma}\Prr{|X-c|_{\Sigma}> R} \leq (1+|c|_{\Sigma})\sqrt{\Prr{|X-c|_{\Sigma}> R}}.
    \end{equation*}
\end{lemma}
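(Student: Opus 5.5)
Work in whitened coordinates. Put $Y=\Sigma^{-1/2}X$ and $a=\Sigma^{-1/2}c$, so $Y$ is spherically symmetric about $0$ and $|\cdot|_\Sigma$ becomes the Euclidean norm. The Mahalanobis projection is then the Euclidean projection onto $B_R(a)$:
\[
P(Y)=a+(Y-a)\min\{1,R/|Y-a|\}.
\]
Let $A=\{|Y-a|>R\}$ and $p=\Pr(A)$. Since $\E Y=0$ (the mean exists by Condition~\ref{cond::F-inc}), we can write
\[
\E P(Y)=\E[P(Y)-Y]=\E\bigl[(a-Y)\,(1-R/|Y-a|)\ind{A}\bigr].
\]
This identity is the starting point, because only the clipped region contributes.

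Next I split the integrand. On $A$ we have $(a-Y)(1-R/|Y-a|)=a(1-R/|Y-a|)-Y(1-R/|Y-a|)$. The first piece has norm at most $|a|$, so it contributes at most $|a|\,p=|c|_\Sigma\Pr(|X-c|_\Sigma>R)$. This is exactly the second term of the claimed bound. For the second piece, set $w=(1-R/|Y-a|)\ind{A}\in[0,1]$. I bound $|\E[Yw]|$ by duality: for a unit vector $u$, Cauchy--Schwarz gives
\[
\E[\ip{Y,u}w]\le \sqrt{\E\ip{Y,u}^2}\,\sqrt{\E w^2}\le \sqrt{\E\ip{Y,u}^2}\,\sqrt{p}.
\]
For the elliptical model with $\E\ell^2=d$, we have $\E\ip{Y,u}^2=\E\ell^2/d=1$. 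Hence this piece is at most $\sqrt p$, and the first inequality follows.

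The second inequality is immediate: since $p\le 1$, we have $p\le\sqrt p$, so $|c|_\Sigma p\le |c|_\Sigma\sqrt p$.

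The one point needing care is the normalization $\E\ip{Y,u}^2=1$, i.e.\ that $\Sigma$ is the true covariance under the convention $\E\ell^2=d$. I will verify this using $\E\ip{U,u}^2=1/d$ for $U$ uniform on the sphere. I will also check that $\E Y$ exists so that $\E P(Y)=\E[P(Y)-Y]$ is legitimate; this holds by finite variance.
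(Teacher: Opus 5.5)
Your proof is correct and reaches the same bound by a slightly different route. The paper first uses rotation invariance to show that $\E\Proj(Y,R,a)$ is parallel to $a$: for any orthogonal $Q$ with $Qa=a$, $Q\,\E\Proj(Y,R,a)=\E\Proj(QY,R,a)=\E\Proj(Y,R,a)$. Its norm therefore equals $|\ip{\E\Proj(Y,R,a),u}|$ for the single direction $u=a/|a|$, with $a=0$ treated separately. The paper then bounds $|R/|Y-a|-1|\le 1$ and applies the triangle inequality and Cauchy--Schwarz in that one direction.

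You skip the parallelism step. You split $\E P(Y)=a\,\E w-\E[Yw]$ with the vector triangle inequality and control $\E[Yw]$ by duality, taking the supremum over all unit $u$. This costs nothing because $\E\ip{Y,u}^2=1$ for every unit $u$, not only for $u=a/|a|$.

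Your route is shorter and more general. It uses only $\E Y=0$ and $\E[YY^\top]\preceq I$, so the inequality holds for any distribution with mean $0$ and covariance at most $\Sigma$, not just elliptical ones, and it needs no case split on $c=0$. The paper's route relies on spherical symmetry, but in exchange it shows that the clipping bias is aligned with $c$. That structural fact is not needed for this lemma.
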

\begin{proof}
We first use the fact that for $Y=\Sigma^{-1/2}X$, and $a=\Sigma^{-1/2}c$, we have that $|\E\Proj(X,R,c,\Sigma)|_\Sigma=|\E\Proj(Y,R,a,I)|\coloneqq|\E\Proj(Y,R,a)|$. 
If $a=0$, then by spherical symmetry the expectation is zero, so we assume $a\neq 0$.

Before continuing, we show the following fact:
\begin{center}
    The vector $\E\Proj(Y,R,a)$ is parallel to $a$.
\end{center}
Let $Q$ be any orthogonal matrix satisfying $Qa=a$. We know such matrices exist, since any rotation about $a$ satisfies these conditions. 
Then, simple algebra yields $Q\Proj(Y,R,a)=\Proj(QY,R,a)$. Furthermore, since $Q$ is orthogonal and $Y$ is spherically symmetric, we have that $QY\eqd Y$. 
Therefore, using these two facts, $ Q\E\Proj(Y,R,a)=\E \Proj(QY,R,a)=\E \Proj(Y,R,a)$. 
Therefore, $\E \Proj(Y,R,a)$ is a vector $v$, such that for any $Q$ such that $Qa=a$, $v$ satisfies $Qv=v$. 
It follows that $v$ must run parallel to $a$, or we can find a rotation $Q$ of $v$ such that $Qv\neq v$. 

Let $u=a/\norm{a}$. Then, since $\E\Proj(Y,R,a)$ is parallel to $a$,
\begin{equation}\label{eqn::ipf}
    |\E\Proj(Y,R,a)|=|\ip{\E\Proj(Y,R,a),u}|.
\end{equation}
In addition, given $\E Y=0$, we have that $\E\Proj(Y,R,a)=\E[\Proj(Y,R,a)-Y]$. 
Now, $\Proj(Y,R,a)-Y=0$ if $|Y-a|\leq R$, and simple algebra yields $$\Proj(Y,R,a)-Y=\left(\frac{R}{\norm{Y-a}}-1\right)(Y-a).$$ 
As a result, it holds that 
$$\E\Proj(Y,R,a)=\E\left(\left(\frac{R}{\norm{Y-a}}-1\right)(Y-a)\ind{|Y-a|> R})\right).$$
Combining this with the inner product formula \eqref{eqn::ipf} above, we have that 
$$|\E\Proj(Y,R,a)|=|\ip{\E\left(\left(\frac{R}{\norm{Y-a}}-1\right)(Y-a)\ind{|Y-a|> R})\right),u}|.$$
Now, crucially, 
$$\left|\frac{R}{\norm{Y-a}}-1\right|\leq 1,$$
and so we can write 
$$|\E\Proj(Y,R,a)|\leq \E[|\ip{Y-a,u}|\ind{|Y-a|> R}].$$

It remains to simplify this term. 
Applying the triangle inequality, followed by the Cauchy-Schwarz inequality and the fact that spherical symmetry gives $\E\ip{Y,u}^2=1$, yields that 
\begin{align*}
  \E[|\ip{Y-a,u}|\ind{|Y-a|> R}]&\leq \E|\ind{|Y-a|> R}\ip{Y,u}|+|a|\Prr{|Y-a|> R}\\
    &\leq \sqrt{\E\ip{Y,u}^2}\sqrt{\Prr{|Y-a|> R}}+|a|\Prr{|Y-a|> R}\\
    &\leq \sqrt{\Prr{|Y-a|> R}}+|a|\Prr{|Y-a|> R}.
\end{align*}
This completes the proof. 
\end{proof}
\begin{lemma}\label{lemma::proj_e_va2}
    For $X\sim \EC(0,\Sigma,F)$, $R>0$ and $c\in\rdd$, we have that 
$$|\E(\Proj(X,R,c,\Sigma))|_{\Sigma}\leq(R+|c|_{\Sigma})S\left((R-|c|_{\Sigma})_+\right).$$
\end{lemma}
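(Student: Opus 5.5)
We follow the start of the proof of Lemma~\ref{lemma::proj_e_va_cs}: whiten first. Set $Y=\Sigma^{-1/2}X$ and $a=\Sigma^{-1/2}c$. Then $|\E\Proj(X,R,c,\Sigma)|_\Sigma=|\E\Proj(Y,R,a)|$, and $Y=\ell U$ is spherically symmetric with $\E Y=0$. Also $|Y-a|=|X-c|_\Sigma$ and $|Y|=|X|_\Sigma=\ell$. Because $\E Y=0$, we may write
$$\E\Proj(Y,R,a)=\E\bigl[(\Proj(Y,R,a)-Y)\ind{|Y-a|>R}\bigr],$$
since the integrand vanishes whenever $|Y-a|\le R$. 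The plan is to bound the integrand's norm pointwise by $R+|a|$ and then control the probability of the event it lives on.

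The first step is the deterministic bound. Projection onto the ball $B_{R}(a)$ is $1$-Lipschitz, and $0$ maps to a point at distance at most $R+|a|$ from $0$. Hence $|\Proj(Y,R,a)-Y|$ could be large; the deterministic bound alone is too crude. We therefore use the parallelism fact from the proof of Lemma~\ref{lemma::proj_e_va_cs}: $\E\Proj(Y,R,a)$ is parallel to $a$. We can then control it through its component along a fixed direction, or simply by its norm. The cleaner route is to bound $|\E\Proj(Y,R,a)|\le \E|\Proj(Y,R,a)|\,\ind{\cdot}$ plus the symmetric correction. Concretely, write
$$\E\Proj(Y,R,a)=\E\bigl[\Proj(Y,R,a)\ind{|Y-a|>R}\bigr]+\E\bigl[Y\ind{|Y-a|\le R}\bigr].$$
The first term has norm at most $(R+|a|)\Pr(|Y-a|>R)$, because $\Proj(Y,R,a)\in B_R(a)\subseteq B_{R+|a|}(0)$. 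The second term equals $-\E[Y\ind{|Y-a|>R}]$, which the bound in Lemma~\ref{lemma::proj_e_va_cs} already handles. That gives a different form, however, so instead we keep only the first representation and note $|\Proj(Y,R,a)-Y|\le |Y|-R+|a|$ on the event. This still involves $\ell$. We fall back on the cleanest version: use $\E[\Proj(Y,R,a)]$ directly, together with $\E\Proj(Y,R',0)=0$ for every $R'$ by symmetry.

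The key step is to compare with the centered projection at radius $R'=(R-|a|)_+$. On $\{|Y|\le R'\}$ we have $|Y-a|\le R$, so both projections equal $Y$ and the difference vanishes. Therefore
$$\E\Proj(Y,R,a)=\E\bigl[(\Proj(Y,R,a)-\Proj(Y,R',0))\ind{\ell>R'}\bigr].$$
On $\{\ell>R'\}$ both projected points lie in $B_{R+|a|}(0)$, in fact within $B_R(a)\cup B_{R'}(0)$. If we can argue the difference has norm at most $R+|a|$, the bound follows as
$$|\E\Proj(Y,R,a)|\le (R+|a|)\Pr(\ell>R')=(R+|c|_\Sigma)S((R-|c|_\Sigma)_+).$$

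The main obstacle is obtaining the constant $R+|a|$ rather than $2(R+|a|)$. A crude triangle bound gives $|\Proj(Y,R,a)|+|\Proj(Y,R',0)|\le 2R+|a|-\ldots$, so sharpening is needed. We would first try the parallelism fact. Projecting onto $u=a/|a|$ shows that the relevant component of $\Proj(Y,R,a)$ lies in $[|a|-R,|a|+R]$, while that of $\Proj(Y,R',0)$ lies in $[-R',R']$. Their difference along $u$ is then at most $|a|+R+R'$ in absolute value, which is still too large. We would refine this using symmetry: the negative side cancels in expectation, so only the positive excess $\le R+|a|$ survives. If that fails, we accept a factor of $2$, which suffices for every use of the lemma up to $\lesssim$.
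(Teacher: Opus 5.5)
Your proposal does not prove the lemma as stated. The final step is left conditional (``if we can argue the difference has norm at most $R+|a|$''), and the symmetry refinement is never carried out. The fallback only gives $2(R+|a|)\,S((R-|c|_\Sigma)_+)$. That weaker bound would suffice where the lemma is used (Lemma~\ref{lemma::init_mean}, under $\lesssim$), but it is not the statement.

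The obstacle comes from insisting on a pointwise bound for $|\Proj(Y,R,a)-\Proj(Y,R',0)|$, when the subtracted term has mean zero on its own. On $\{\ell>R'\}$ we have $\Proj(Y,R',0)=R'U$. Since $U$ is independent of $\ell$ with $\E U=0$, it follows that $\E[\Proj(Y,R',0)\ind{\ell>R'}]=R'\,\E U\,\Pr(\ell>R')=0$. Your own identity therefore reduces to $\E\Proj(Y,R,a)=\E[\Proj(Y,R,a)\ind{\ell>R'}]$. The containment $\Proj(Y,R,a)\in B_R(a)\subseteq B_{R+|a|}(0)$, which you already noted, then gives $|\E\Proj(Y,R,a)|\le (R+|a|)\Pr(\ell>R')=(R+|c|_\Sigma)S((R-|c|_\Sigma)_+)$ with constant one.

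This is exactly the paper's proof in slightly different clothing. The paper splits on the centrally symmetric set $A_1=\{x: |x|_\Sigma\le (R-|c|_\Sigma)_+\}$. On $A_1$ the projection is the identity, because $|x-c|_\Sigma\le |x|_\Sigma+|c|_\Sigma\le R$. Hence $\E[\Proj(X,R,c,\Sigma)\ind{X\in A_1}]=\E[X\ind{X\in A_1}]=0$ by symmetry, and the remaining piece is bounded by $(R+|c|_\Sigma)\Pr(X\notin A_1)$.

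Your first decomposition stalled because you split on $\{|Y-a|\le R\}$. That set is not symmetric about the origin, so $\E[Y\ind{|Y-a|\le R}]$ does not vanish. Splitting on the symmetric ball of radius $R'=(R-|a|)_+$ is the one missing idea. The degenerate case $R<|a|$, where $R'=0$, causes no trouble since $\Pr(\ell=0)=0$. The parallelism fact from Lemma~\ref{lemma::proj_e_va_cs} is not needed at all.
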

\begin{proof}
Let $A_1=\{x\colon |x|_{\Sigma}\leq (R-|c|_{\Sigma})_+\}$. 
By symmetry of $X$ about 0, $\E(\Proj(X,R,c,\Sigma)\ind{X\in A_1})=0$. 
Therefore
\begin{align*}
|\E \Proj(X,R,c,\Sigma)|_{\Sigma}&=\Bigg|\E(\Proj(X,R,c,\Sigma)\ind{X\notin A_1})\Bigg|_{\Sigma}\leq (R+|c|_{\Sigma})S\left((R-|c|_{\Sigma})_+\right),
\end{align*}
which completes the proof.
\end{proof}
\subsection{Concentration of the projected observations}
\begin{lemma}\label{lemma::pme}
If $X\sim \EC(\mu,\Sigma,F)$, and, given $R>0$ and $c\in \rdd$, $\mu\in B_{R,\Sigma}(c)$ then for all $0<\alpha <1$
with probability at least $1-\alpha$, it holds that 
\begin{equation*}
        \left|\frac{1}{n}\sum_{i=1}^n [\Proj(X_i,R,c,\Sigma)-\E(\Proj(X_i,R,c,\Sigma))]\right|_{\Sigma}\lesssim\sqrt{\frac{\log(2/\alpha)(d\wedge R^2)}{n}}+R\frac{\log(2/\alpha)}{n}.
\end{equation*}
\end{lemma}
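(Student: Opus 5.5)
We plan to whiten and then apply a vector Bernstein inequality for bounded i.i.d. random vectors. Set $Y_i=\Sigma^{-1/2}(X_i-\mu)$ and $a=\Sigma^{-1/2}(c-\mu)$. Then $|a|\le R$ by the assumption $\mu\in B_{R,\Sigma}(c)$. Because the Mahalanobis projection is equivariant under the affine whitening map, the quantity to bound equals $\bigl|n^{-1}\sum_{i=1}^n (W_i-\E W_i)\bigr|$ in Euclidean norm, where $W_i=\Proj(Y_i,R,a)$. These are i.i.d. vectors, and each takes values in the ball $B_R(a)$. Hence $|W_i-\E W_i|\le 2R$ almost surely, which gives the bounded-summand term $R\log(2/\alpha)/n$.

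Next we bound the variance proxy $\E|W_i-\E W_i|^2\le \E|W_i-b|^2$, valid for any fixed vector $b$. There are two choices of $b$.
\begin{itemize}
\item With $b=a$, we get $|W_i-a|\le R$, so the proxy is at most $R^2$.
\item With $b=0$, we use the fact that the Euclidean projection onto a convex set containing $0$ is $1$-Lipschitz. Since $0\in B_R(a)$ (because $|a|\le R$), this gives $|W_i|=|\Proj(Y_i)-\Proj(0)|\le |Y_i|$. Therefore $\E|W_i|^2\le \E|Y_i|^2=\E\ell^2=d$.
\end{itemize}
Combining the two, the variance proxy is at most $d\wedge R^2$. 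The Lipschitz observation is exactly where the hypothesis $\mu\in B_{R,\Sigma}(c)$ is used.

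Finally, we apply a Hilbert-space Bernstein inequality (Pinelis or Yurinskii form): for i.i.d. centered vectors $Z_i$ with $|Z_i|\le L$ and $\E|Z_i|^2\le\sigma^2$, with probability at least $1-\alpha$,
\[
\Bigl|\frac1n\sum_{i=1}^n Z_i\Bigr|\lesssim \sqrt{\frac{\sigma^2\log(2/\alpha)}{n}}+\frac{L\log(2/\alpha)}{n}.
\]
In the Pinelis form there is no $\sqrt{\E|\cdot|^2}$ leading term separate from the log factor. If one prefers a McDiarmid/Talagrand route instead, the extra term $\sqrt{\sigma^2/n}$ is absorbed because $\log(2/\alpha)\ge\log 2$. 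Plugging in $L=2R$ and $\sigma^2=d\wedge R^2$ yields the stated bound.

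The main obstacle is choosing a dimension-free concentration tool, since a covering argument over directions would cost an extra $\sqrt d$. Pinelis's inequality for martingales in $2$-smooth Banach spaces, or a Bousquet--Talagrand bound applied to $\sup_{|u|=1}\langle\cdot,u\rangle$, avoids this loss.
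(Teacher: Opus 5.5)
Your proposal is correct and follows essentially the same route as the paper: center at $\mu$, use the almost-sure bound $2R$ on the centered projections, bound the second moment by $d$ via the $1$-Lipschitz property of the projection together with $\mu\in B_{R,\Sigma}(c)$ (and by an $R^2$-type bound from the ball), and then apply the Hilbert-space (Pinelis) Bernstein inequality in the form of Smale--Yao. Your choice $b=a$ gives the slightly sharper variance proxy $R^2$ instead of the paper's $4R^2$, which only changes constants.
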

\begin{proof}
Without loss of generality, assume that $\mu=0$. 
First, let $Y_i=\Proj(X_i,R,c,\Sigma)-\E(\Proj(X_i,R,c,\Sigma))$. 
Now, observe that $\E(Y_i)=0$ and $\norm{Y_i}_{\Sigma}^2\leq (\norm{\Proj(X_i,R,c,\Sigma)-c}_{\Sigma}+ \norm{\E(\Proj(X_i,R,c,\Sigma))-c}_{\Sigma})^2\leq 4R^2$. 
In addition, $\Proj$ is 1-Lipschitz in $x$ (see Lemma~\ref{lemma:proj-lip}) and $\mu\in B_{R,\Sigma}(c)$. 
Therefore, we have that $\E\norm{Y_i}_{\Sigma}^2\leq\E\norm{ X_i}_{\Sigma}^2=d.$ 
Using the Bernstein vector inequality, see  Appendix B of \citep{SmaleYao2006}, for all $t>0$, we have that 
\begin{align*}
    \Prr{\norm{\bar Y}_{\Sigma}\geq t}\leq \exp\left(-n\frac{t^2}{ 2(d\wedge (4R^2))+ 4Rt/3}\right).
\end{align*}
which yields the desired result. 
\end{proof}
\subsection{A high probability bound on the radius} 
Here we prove two lemmata that together show that the private radius concentrates. The first is as follows. 
\begin{lemma}\label{lemma::radius_error}
Suppose that Condition~\ref{cond::F-inc}, and Condition~\ref{cond::mean-ball}--\ref{cond::distinct} hold, $|\tilde \mu_{m}|_\Sigma \leq 3\tilde R_0/4$, $\rho_{\mathrm{bal},m} \geq 16/24349$,
and $\tau=\tau(\alpha,M^*)$ as in \eqref{eqn::tau}, 
then given $2e^{-d/10}<\alpha/M^*<3/2$, for $m\in[M-1]$, if $\tau'<1/2$, then we have with probability at least $1-2\alpha/3M^*$, 
\begin{equation*}
\xi_{2\tau',\tilde \mu_{m}}\leq \tilde \xi_{3\tau'/2,\tilde \mu_{m}}\leq \tilde R_m \leq  \tilde \xi_{\tau'/2,\tilde \mu_{m}}\leq   \xi_{\tau'/4,\tilde \mu_{m}}, 
\end{equation*}
and 
\begin{equation*}
    \xi_{2\tau',\tilde \mu_{m}}\leq \tilde \xi_{3\tau'/2,\tilde \mu_{m}}\leq\xi_{5\tau'/4,\tilde \mu_{m}}\leq  \xi_{3\tau'/4,\tilde \mu_{m}}\leq  \tilde \xi_{\tau'/2,\tilde \mu_{m}}\leq   \xi_{\tau'/4,\tilde \mu_{m}}.
\end{equation*}
\end{lemma}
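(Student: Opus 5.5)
The plan is to split the chain into a privacy/noise part (the middle two inequalities, about $\tilde R_m$ versus empirical radii) and a statistical part (the outer inequalities, comparing empirical and population radii). Everything is conditional on $\tilde\mu_m$. Since $\tilde\mu_m$ depends on the data only through previous private outputs, we would like to treat it as fixed. Because it does depend on the sample, we instead obtain the empirical/population comparisons uniformly over centers $c$ with $|c-\mu|_\Sigma\le 2\tilde R_0$ (the hypothesis $|\tilde\mu_m|_\Sigma\le 3\tilde R_0/4$ puts $\tilde\mu_m$ in this set). We also restrict to the finitely many grid radii $r_i=\beta^i+R_{\min}-1$, $i\le h(\beta)+1$, which is where the $\log(3(h(\beta)+1)M^*/\alpha)$ term in \eqref{eqn::tau} enters.

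\textbf{Step 1 (AboveThreshold noise).} With probability at least $1-\alpha/(3M^*)$, we bound $|V|/(n\sqrt{\rho_{\mathrm{bal},m}})$ and $\max_{i\le h(\beta)+1}|V_i|/\sqrt{\rho_{\mathrm{bal},m}}$. A Gaussian tail bound plus a union bound over the $h(\beta)+2$ queries gives a bound of order $\sqrt{\log((h(\beta)+1)M^*/\alpha)/\rho_{\mathrm{bal},m}}$ on the count scale. Using $\rho_{\mathrm{bal},m}\ge 16/24349$ and the $988\log(\cdot)/n$ term in $\tau'$, this total is at most $\tau' n/2$.

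On this event, the algorithm cannot stop at a grid radius containing fewer than $(1-3\tau'/2)n$ points. Hence $\tilde R_m\ge\tilde\xi_{3\tau'/2,\tilde\mu_m}$. It must stop by the first grid radius containing at least $(1-\tau'/2)n$ points. Condition~\ref{cond::distinct} rules out ties at grid points, so counts are strictly monotone in the way needed. We still need that a grid radius actually lies in $[\tilde\xi_{3\tau'/2},\tilde\xi_{\tau'/2}]$, i.e.\ that the grid is fine enough. Condition~\ref{cond::beta}(a) gives $r_{i+1}-r_i=(\beta-1)(r_i-R_{\min}+1)\le \xi_{3\tau'/4}-\xi_{5\tau'/4}$ whenever $r_i\le\xi_{5\tau'/4}$. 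Condition~\ref{cond::beta}(b) ensures $R_{\min}$ lies below the relevant quantiles, so the search starts below them and reaches them before exceeding $\tilde R_0$-scale radii. This yields $\tilde R_m\le\tilde\xi_{\tau'/2,\tilde\mu_m}$.

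\textbf{Step 2 (empirical vs.\ population quantiles).} We show, uniformly over centers $c$ and radii $R$, that
\[
|\tilde\nu(B_{R,\Sigma}(c)^c)-\nu(B_{R,\Sigma}(c)^c)|\le \eta+\text{VC deviation}.
\]
Contamination moves at most $\eta n$ points, and $8\eta\le\tau'$. The class of Mahalanobis balls has VC dimension $O(d)$. We apply a relative (multiplicative) VC/Bernstein deviation bound, e.g.\ Talagrand or Bousquet at mass level $\asymp\tau'$. This gives deviation $\lesssim\sqrt{\tau'(d+\log(M^*/\alpha))/n}+(d+\log(M^*/\alpha))/n$, which is at most $\tau'/4$ thanks to the $16384(d+1)/n$ and log terms in $\tau'$. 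Consequently:
\begin{itemize}
\item any ball whose empirical exterior mass is at most $3\tau'/2$ has population exterior mass at most $2\tau'$, which gives $\xi_{2\tau'}\le\tilde\xi_{3\tau'/2}$;
\item likewise $\tilde\xi_{3\tau'/2}\le\xi_{5\tau'/4}$;
\item by the same argument in the reverse direction, $\xi_{3\tau'/4}\le\tilde\xi_{\tau'/2}\le\xi_{\tau'/4}$.
\end{itemize}
The middle inequality $\xi_{5\tau'/4}\le\xi_{3\tau'/4}$ is monotonicity in $q$. This step holds with probability at least $1-\alpha/(3M^*)$, and a union bound with Step 1 completes the proof.

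\textbf{Main obstacle.} I expect the main obstacle to be Step 2's relative deviation bound with explicit constants matching \eqref{eqn::tau}. The bound must hold uniformly in the data-dependent center and must handle adversarial contamination correlated with the sample. Uniformity over $c$ is what makes the dependence of $\tilde\mu_m$ harmless, so I would first try a ratio-type VC inequality (Vapnik's relative deviation bound) on the class of ellipsoid complements.
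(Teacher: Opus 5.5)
The overall structure matches the paper's proof. There is an event of probability $1-\alpha/(3M^*)$ on which the empirical radii at $\tilde\mu_m$ are sandwiched by population radii, obtained from $\eta\le\tau'/8$ plus a VC/Talagrand bound at variance level $\asymp\tau'$. On that event there is an \texttt{AboveThreshold} analysis costing another $\alpha/(3M^*)$. Your Step~2 is fine. The paper avoids the ratio-type inequality you worry about by using monotonicity in $R$ to reduce to the four fixed radii $\xi_{q\tau',c}$, $q\in\{1/4,5/4,3/2,2\}$, where the variance is at most $2\tau'$. Your lower bound $\tilde R_m\ge\tilde\xi_{3\tau'/2,\tilde\mu_m}$, read off from the uniform noise event, is also correct.

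The gap is the upper bound $\tilde R_m\le\tilde\xi_{\tau'/2,\tilde\mu_m}$. On your noise event, a query is guaranteed to be accepted only when its empirical exterior mass is at most $\tau'/2$, i.e.\ only at grid radii $\ge\tilde\xi_{\tau'/2,\tilde\mu_m}$. Condition~\ref{cond::distinct} rules out a grid point exactly at $\tilde\xi_{\tau'/2,\tilde\mu_m}$, so such radii are strictly larger. A grid radius in $[\tilde\xi_{3\tau'/2},\tilde\xi_{\tau'/2})$ has empirical exterior mass in $(\tau'/2,3\tau'/2]$ and may be rejected. The search then moves to the next grid point, which can overshoot. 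So the fact you extract from Condition~\ref{cond::beta}(a), that some grid point lies in $[\tilde\xi_{3\tau'/2},\tilde\xi_{\tau'/2}]$, does not yield the inequality.

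Here is how it fails. Suppose $\nu$ puts radial mass about $\tau'/2$ in a shell $[\xi_{\tau'},\xi_{\tau'/2}]$ thinner than the local grid spacing. The last grid point below the shell has exterior mass at least $\tau'$, so it is rejected with probability at least $1/2$. The next grid point lies beyond the shell and can exceed $\tilde\xi_{\tau'/2,\tilde\mu_m}$.

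What your noise event does give is this: the last rejected radius had empirical exterior mass $>\tau'/2$, so it lies below $\tilde\xi_{\tau'/2}$. Hence $\tilde R_m-R_{\min}+1\le\beta\,(\tilde\xi_{\tau'/2,\tilde\mu_m}-R_{\min}+1)$, and this needs no grid-fineness condition. This factor-$\beta$ version is what should be stated. It suffices for Lemma~\ref{lemma:one-iteration}, which only uses $\tilde R_{m-1}\lesssim\xi_{\tau'/4,\tilde\mu_{m-1}}$ (fine for bounded $\beta$).

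The paper's proof has the same weak point. In Lemma~\ref{lem::pq-bound-zCDP} it bounds $\Pr(k>k_2)$ by the probability that query $k_2$ fails, ``assuming $k_2$ is an integer.'' That assumption means a grid point sits exactly at $\tilde\xi_{\tau'-t,\tilde\mu_m}$, which Condition~\ref{cond::distinct} excludes. For non-integer $k_2$, that argument controls only the event $\{k>\lceil k_2\rceil\}$, not $\{k=\lceil k_2\rceil\}$.
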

\begin{proof}
Assume without loss of generality that $\mu=0$ and let $\rho=\rho_{\mathrm{bal},m}$ for the remainder of the proof. 
Now, define $E_1$ to be the event such that the following inequalities hold
\begin{equation}
\label{eqn::step_2}
\xi_{2\tau',\tilde \mu_{m}}\leq \tilde \xi_{3\tau'/2,\tilde \mu_{m}}\leq\xi_{5\tau'/4,\tilde \mu_{m}}\leq  \xi_{3\tau'/4,\tilde \mu_{m}}\leq  \tilde \xi_{\tau'/2,\tilde \mu_{m}}\leq   \xi_{\tau'/4,\tilde \mu_{m}}.
\end{equation}
Equation \eqref{eqn::step_2} combined with the following inequality
\begin{align}\label{eqn::ineq-1}
    &\tilde \xi_{3\tau'/2,\tilde \mu_{m}}\leq \tilde R_m\leq \tilde \xi_{\tau'/2,\tilde \mu_{m}},
\end{align}
yields the final result. 
It remains to prove $E_1$ holds with high probability and, on $E_1$, \eqref{eqn::ineq-1} holds with high probability. 
We start with the latter. 

For $r>0$ and $c\in\rdd$, let $\tilde G(r,c)=\tilde\nu(B_{r,\Sigma}(c)^c)$. One may recall that $\tilde\nu$ is the empirical measure of the contaminated sample. 
The inequality \eqref{eqn::ineq-1} is implied by 
$\tau'/2\leq \tilde G(\tilde R_m,\tilde \mu_{m})\leq 3\tau'/2$. 
Equivalently, \eqref{eqn::ineq-1} it holds if $\tilde G(\tilde R_m,\tilde \mu_{m})$ is within $\tau'/2$ of $\tau'$. 
To prove that this inequality holds with high probability, we can apply Lemma \ref{lem::pq-bound-zCDP} taking $\tau'=\tau'$ and $ t = \tau'/2$. 
In order to apply Lemma \ref{lem::pq-bound-zCDP}, two conditions must be satisfied. 
The first is that $R_{\min}\leq \tilde \xi_{3\tau'/2,\tilde \mu_{m}}$. 
Condition~\ref{cond::beta} part (b) yields that on $E_1$, $R_{\min}\leq \inf_{c\in\rdd}\tilde \xi_{3\tau'/2,c}\leq \tilde \xi_{3\tau'/2,\tilde \mu_{m}}$ on $E_1$. 
The second condition needed to apply Lemma \ref{lem::pq-bound-zCDP} is
\begin{align}
\label{eqn::c1_E1}
    1<\beta&\leq \frac{\tilde \xi_{\tau'/2,\tilde \mu_{m}}-R_{\min}+1}{\tilde \xi_{3\tau'/2,\tilde \mu_{m}}-R_{\min}+1}.
\end{align}
Note that on $E_1$, we have 
\begin{equation*}
         \frac{\tilde \xi_{\tau'/2,\tilde \mu_{m}}-R_{\min}+1}{\tilde \xi_{3\tau'/2,\tilde \mu_{m}}-R_{\min}+1} =1+\frac{\tilde \xi_{\tau'/2,\tilde \mu_{m}}-\tilde \xi_{3\tau'/2,\tilde \mu_{m}}}{\tilde \xi_{3\tau'/2,\tilde \mu_{m}}-R_{\min}+1} >1+\inf_{|c|_{\Sigma}\leq |\tilde \mu_{m}|_\Sigma}\frac{\xi_{3\tau'/4,c}-\xi_{5\tau'/4,c}}{\xi_{5\tau'/4,c}-R_{\min}+1}\geq 1+b_n. 
\end{equation*}
Condition \ref{cond::beta} part (a) directly says that $\beta<1+b_n$, yielding \eqref{eqn::c1_E1}. 

We can now apply Lemma \ref{lem::pq-bound-zCDP}, which yields that 
\begin{multline*}
    \Prr{\left|\tau' - \tilde\nu(B_{\tilde R_m,\Sigma}(\tilde \mu_m)^c)\right| > \tau'/2|E_1}\leq \left(
\frac{\log\!\bigl(\tilde \xi_{\tau'/2,\widetilde \mu_m}+1-R_{\min}\bigr)}{\log \beta}
+ 1
\right)   \\
\times \exp\!\left(
-\frac{n ((\tau'-2/n) \vee (\tau'-2/n)^2)}{16}\,(\sqrt{\rho} \vee \rho)
\right).
\end{multline*}
We simplify the right-hand side.  
Observe that since $|\tilde\mu_{m}|_{\Sigma}\leq 3\tilde R_0/4$ and Condition~\ref{cond::beta} part (b) holds, we have from Lemma~\ref{lemma::quantile bound}, and the fact that $\tau'>16384d/n>16d/n$: 
\begin{equation*}
    \xi_{\tau'/4,\tilde\mu_{m}}\leq 2\sqrt{\frac{d}{\tau'}}+|\tilde\mu_{m}|_{\Sigma}\leq \sqrt{n}/2+|\tilde\mu_{m}|_{\Sigma}\leq \sqrt{n}/2+3\tilde R_0/4\leq \tilde R_0. 
\end{equation*}  
This fact, combined with \eqref{eqn::step_2} yields that $\tilde \xi_{\tau'/2,\tilde \mu_{m}}<\xi_{\tau'/4,\tilde \mu_{m}}<\tilde R_0 $ on $E_1$. 
It follows that on $E_1$,
\begin{align*}
\frac{\log( \tilde \xi_{\tau'/2,\tilde \mu_{m}} + 1-R_{\min})}{\log\beta}&\leq \frac{\log( \tilde R_0 + 1-R_{\min})}{\log\beta}=h(\beta).
\end{align*}
Thus, noting that $(\tau'-2/n)>(\tau'-2/n)^2$, we have
\begin{align*}
     \Prr{\left|\tau' - \tilde\nu(B_{\tilde R_m,\Sigma}(\tilde \mu_{m})^c)\right| > \tau'/2|E_1}
&\leq \left(
h(\beta)+ 1
\right)   
 \exp\!\left(
-\frac{n (\tau'-2/n) }{16}\,(\sqrt{\rho} \vee \rho)
\right).
\end{align*}
Now, using the fact that $\rho\geq 8^2\times 1/986^2=16/24349$ and $$\tau'\geq \frac{988}{n}\log\left(\left[3\frac{M^*}{\alpha}\left[h(\beta)+ 1\right]\right]  \vee e\right) \geq \frac{986}{n}\log\left(\left[3\frac{M^*}{\alpha}\left[h(\beta)+ 1\right]\right]  \vee e\right)+\frac{2}{n},$$
we have that 
\begin{equation*}
    \Prr{\left|\tau' - \tilde\nu(B_{\tilde R_m,\Sigma}(\tilde \mu_m)^c)\right| > \tau'/2|E_1}\leq \frac{\alpha}{3M^*}.
\end{equation*}
Equivalently, on $E_1$, we have that \eqref{eqn::ineq-1} holds with probability at least $ 1-\alpha/3M^*.$

The next step is to show that $E_1$ occurs with high probability. In other words, the inequalities \eqref{eqn::step_2} hold simultaneously with high probability. 
Now, we have that the inequalities \eqref{eqn::step_2} are implied by the two events
\begin{align*}  
    \bigcap_{|c|_{\Sigma}\leq |\tilde \mu_{m}|_\Sigma}\left\{\xi_{2\tau',c}\leq \tilde \xi_{3\tau'/2,c}\leq  \xi_{5\tau'/4,c}\right\} \text{ and }
    \bigcap_{|c|_{\Sigma}\leq |\tilde \mu_{m}|_\Sigma}\left\{ \xi_{3\tau'/4,c}\leq \tilde \xi_{\tau'/2,c}\leq  \xi_{\tau'/4,c}\right\}.
\end{align*} 
As a result, we aim to bound the probability of the following events above: 
\begin{align*}  
    \bigcup_{|c|_{\Sigma}\leq |\tilde \mu_{m}|_\Sigma}\left\{\tilde \xi_{3\tau'/2,c}\leq \xi_{2\tau',c} \right\}
   &\text{ and } \bigcup_{|c|_{\Sigma}\leq |\tilde \mu_{m}|_\Sigma}\left\{ \tilde \xi_{3\tau'/2,c}\geq  \xi_{5\tau'/4,c}\right\},\\
   \bigcup_{|c|_{\Sigma}\leq |\tilde \mu_{m}|_\Sigma}\left\{ \tilde \xi_{\tau'/2,c}\leq \xi_{3\tau'/4,c} \right\} &\text{ and }
   \bigcup_{|c|_{\Sigma}\leq |\tilde \mu_{m}|_\Sigma}\left\{ \tilde \xi_{\tau'/2,c}\geq  \xi_{\tau'/4,c}\right\} .
\end{align*}
above. 

Define $\hat\nu$ to be the empirical measure of the uncontaminated sample and define $\hat G(r,c)=\hat\nu(B_{r,\Sigma}(c)^c)$ for $r>0$ and $c\in\rdd$. 
We next relate $\tilde G$ to $\hat G$, where recall $\tilde  G(r,c)=\tilde\nu(B_{r,\Sigma}(c)^c)$ for $r>0$ and $c\in\rdd$. 
Observe that by the definition of $\tau'$, $\sup_{r\in\re,|c|_{\Sigma}\leq |\tilde \mu_{m}|_\Sigma}|\tilde G(r,c)-\hat G(r,c)|\leq \eta\leq \tau'/8$. 
It follows that 
\begin{align*}
    \tilde \xi_{3\tau'/2,c}\leq \xi_{2\tau',c}&\implies 3\tau'/2\geq \tilde G(\xi_{2\tau',c},c)\\
    &\implies 3\tau'/2\geq \hat G(\xi_{2\tau',c},c)-\tau'/8 \\
    &\iff 3\tau'/2+\tau'/8 \geq \hat G(\xi_{2\tau',c},c)\\
    &\iff (3/2+1/8-2)\tau' \geq \hat G(\xi_{2\tau',c},c)-2\tau'\\
    &\iff 2\tau'- \hat G(\xi_{2\tau',c},c)\geq 3\tau'/8.
\end{align*}
Similarly, we have that 
\begin{align*}
    \tilde \xi_{\tau'/2,c}\geq  \xi_{\tau'/4,c}&\implies \hat G(\xi_{\tau'/4,c},c)- \tau'/4\geq \tau'/8.
\end{align*}
Using the same logic we get that 
$$ \tilde \xi_{3\tau'/2,c}\geq  \xi_{5\tau'/4,c} \implies \hat G(\xi_{5\tau'/4,c},c)- 5\tau'/4\geq \tau'/8,$$
and that 
$$  \xi_{3\tau'/4,c}\leq \tilde \xi_{\tau'/2,c} \implies  3\tau'/4-\hat G( \xi_{3\tau'/4,c},c)\geq \tau'/8.$$
With these final four inequalities in mind, it then suffices to bound 
$$\Prr{\sup_{\substack{|c|_{\Sigma}\le |\tilde \mu_{m}|_\Sigma \\ q\in\{1/4,\,5/4,\,3/2,\,2\}}}|G(\xi_{q\tau',c},c)- \hat G(\xi_{q\tau',c},c)|\geq \tau'/8},$$
where $  G(r,c)=\nu(B_{r,\Sigma}(c)^c)$ for $r>0$ and $c\in\rdd$. 
The above probability can be bounded via empirical process theory. 
The class of functions
$$\cG=\{g(x)=\ind{\norm{x-c}_\Sigma\leq \xi_{q\tau',c}}\colon |c|_{\Sigma}\le |\tilde \mu_{m}|_\Sigma, q\in\{1/4,\,5/4,\,3/2,\,2\}\}$$
is contained in the class of functions $$\cF=\{g(x)=\ind{\norm{x-c}_\Sigma\leq t}\colon c\in\rdd, t\in\re\},$$
which has Vapnik-Chervonenkis (VC) dimension $d+1$. 
Now, noting that for any $|c|_{\Sigma}\le |\tilde \mu_{m}|_\Sigma, q\in\{1/4,\,5/4,\,3/2,\,2\}$ we have that $\var({\ind{|X_i-c|_\Sigma\geq \xi_{q\tau',c}}})\leq 2\tau'$ for $i\in[n]$. 
Therefore, using standard empirical process bounds, (e.g., see \citep{Sen2022} Theorem 7.13), we have that
\begin{align*}
    \E\sup_{\substack{|c|_{\Sigma}\le |\tilde \mu_{m}|_\Sigma \\ q\in\{1/4,\,5/4,\,3/2,\,2\}}}|G(\xi_{q\tau',c},c)- \hat G(\xi_{q\tau',c},c)|&\leq  8\sqrt{\tau' (d+1)/n}.
\end{align*}
Now, observe that by the definition of $\tau'$ 
we have that $\tau'>16384(d+1)/n$, which yields 
$$\E\sup_{\substack{|c|_{\Sigma}\le |\tilde \mu_{m}|_\Sigma \\ q\in\{1/4,\,5/4,\,3/2,\,2\}}}|G(\xi_{q\tau',c},c)- \hat G(\xi_{q\tau',c},c)|\leq \tau'/16.$$ 
We now apply Talagrand's inequality, see Lemma~\ref{lemma::talagrand} in the supplementary material for convenience, with $U\leq 1/n$ and $\sigma^2\leq 2\tau'/n^2$ to get that,
\begin{align*}  
\Prr{\sup_{\substack{|c|_{\Sigma}\le |\tilde \mu_{m}|_\Sigma \\ q\in\{1/4,\,5/4,\,3/2,\,2\}}}|G(\xi_{q\tau',c},c)- \hat G(\xi_{q\tau',c},c)|\geq \tau'/8}&\leq 
\exp\left(\frac{-3n\tau'}{16(96\sqrt{2}+49)}\right) \leq\exp\left(\frac{-n\tau'}{986}\right).
\end{align*}
Recall that $$\tau'\geq 986\frac{\log(3M^*/\alpha)}{n}.$$
Using this fact, we have that $\exp\left(-n\tau'/986\right)\leq \alpha/3M^*$. 
Therefore, we have shown that $E_1$ occurs with probability at least $1-\alpha/3M^*$ and the proof is complete.
\end{proof}

We now prove the second lemma. 
Recall that the corrupted sample $X_1',\ldots,X_n'$ has empirical measure denoted by $\tilde\nu$. We define $\tilde \xi_{q,c}=\inf\{R\colon \tilde\nu(B_{R,\Sigma}(c)^c)\leq q\}$ for $q\in[0,1]$ and $c\in\rdd$.  
\begin{lemma}\label{lem::pq-bound-zCDP}
Assume that Condition~\ref{cond::distinct} holds.  
For all $0<\tau'\leq 1$ and $1/n<t<\tau'\wedge(1-\tau')$, $0<R_{\min}<\tilde \xi_{\tau'+t,\tilde \mu_{m}}$, $m\in[M-1]$, $\rho_{\mathrm{bal},m}>0$ and $1< \beta \leq ({\tilde \xi_{\tau'-t,\tilde \mu_{m}} -R_{\min}+1})/({\tilde \xi_{\tau'+t,\tilde \mu_{m}} -R_{\min}+1}),$ we have that 
\begin{multline}
    \Prr{
\bigl| \tau' - \tilde\nu\!\bigl(B_{\widetilde R_m,\Sigma}(\widetilde \mu_m)^c\bigr) \bigr| > t
}
\leq 
\left(
\frac{\log\!\bigl(\tilde \xi_{\tau'-t,\widetilde \mu_m}+1-R_{\min}\bigr)}{\log \beta}
+ 1
\right) \notag \\
 \times 
\exp\!\left(
-\frac{n ((t-1/n) \vee (t-1/n)^2)}{4}\,(\sqrt{\rho_{\mathrm{bal},m}} \vee \rho_{\mathrm{bal},m})
\right).
\end{multline}
\end{lemma}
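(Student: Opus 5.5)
This is an \texttt{AboveThreshold}-type accuracy statement, and I would prove it in three steps. Throughout, the data and the centre $\tilde\mu_m$ are held fixed; $\tilde\mu_m$ is independent of the fresh noise $V,V_0,V_1,\ldots$ used in \texttt{BalloonUpdate}, so I condition on it.

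\textbf{Setup.} Write $\rho=\rho_{\mathrm{bal},m}$, $r_i=\beta^i+R_{\min}-1$ and $N_i=\#\{j\colon X_j'\in B_{r_i,\Sigma}(\tilde\mu_m)\}$. The counts $N_i$ are deterministic and nondecreasing in $i$, and $N_i/n=1-\tilde\nu(B_{r_i,\Sigma}(\tilde\mu_m)^c)$. Define two indices:
\begin{itemize}
\item $i_-$, the last index with $r_{i}<\tilde\xi_{\tau'+t,\tilde\mu_m}$;
\item $i_+$, the first index with $r_{i}\geq \tilde\xi_{\tau'-t,\tilde\mu_m}$.
\end{itemize}
Condition~\ref{cond::distinct} means no point lies exactly on a grid sphere. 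Hence every $i\le i_-$ has $N_i/n<1-(\tau'+t)+1/n$, and $N_{i_+}/n\ge 1-(\tau'-t)$.

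\textbf{Grid spacing.} The assumption $R_{\min}<\tilde\xi_{\tau'+t}$ makes $i_-$ well defined. The assumed bound on $\beta$ says that consecutive grid values in the shifted scale $r_i-R_{\min}+1=\beta^i$ can jump by at most a factor $\beta\le(\tilde\xi_{\tau'-t}-R_{\min}+1)/(\tilde\xi_{\tau'+t}-R_{\min}+1)$. Therefore some grid radius falls in $[\tilde\xi_{\tau'+t},\tilde\xi_{\tau'-t}]$, and $i_+\le i_-+1$ up to that interval. Moreover, $i_+\le \log(\tilde\xi_{\tau'-t}+1-R_{\min})/\log\beta+1$, which is exactly the number of queries appearing in the prefactor.

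\textbf{Bad event.} The event $|\tau'-\tilde\nu(B^c_{\tilde R_m})|>t$ requires one of two failures:
\begin{enumerate}
\item the algorithm stops at some $i\le i_-$, i.e.\ $\hat N_i\ge\hat\tau n$ even though $N_i\le (1-\tau'-t)n+1$;
\item it passes $i_+$, i.e.\ $\hat N_{i_+}<\hat\tau n$ even though $N_{i_+}\ge(1-\tau'+t)n$.
\end{enumerate}
Here $\hat\tau n=\tau n+Vn/(n\sqrt\rho)$, so I read $\tau=1-\tau'$ as the target fraction. Split the total margin $nt-1$ evenly between the threshold noise and the query noise. Failure 1 then requires $V/\sqrt\rho\le-(nt-1)/2$ or some $V_i/\sqrt\rho\ge (nt-1)/2$ with $i\le i_-$. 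Failure 2 requires the symmetric deviation for $V$ or for $V_{i_+}$.

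\textbf{Probability bound.} The Gaussian tail gives $\Pr(|V|/\sqrt\rho\ge (nt-1)/2)\le \exp(-(nt-1)^2\rho/8)$. A union bound over at most $i_++1$ noise variables produces the stated prefactor times an exponential. The stated rate, $\tfrac n4\,((t-1/n)\vee(t-1/n)^2)(\sqrt\rho\vee\rho)$, is weaker than this Gaussian tail once $n(t-1/n)\geq 1$, because
\[
(nt-1)^2\rho/8\;\ge\; n(t-1/n)\sqrt\rho/4 \quad\text{or}\quad \ge n(t-1/n)^2\rho/4
\]
in the relevant regimes; I would verify this by cases on $\rho\lessgtr1$ and $n(t-1/n)\sqrt\rho\lessgtr 2$. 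The constants can then be absorbed (the factor $1/2$ from splitting the margin, and the factor $2$ from two-sidedness).

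The main obstacle is the index bookkeeping in the grid-spacing step: showing from the $\beta$ condition that there is no gap in which the stopping index lands strictly between $i_-$ and $i_+$ while the mass is off by more than $t$. The shifted parametrization $\beta^i=r_i-R_{\min}+1$ turns the spacing condition into a ratio condition, and I would use that to handle it.
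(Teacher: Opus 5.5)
\textbf{The gap: the late side, in your ``Bad event'' step.} Your route is the paper's route. You split into stopping too early and stopping too late, union-bound the early queries, control the late side by a single query, and use the $\beta$-hypothesis to place a grid radius between $\tilde\xi_{\tau'+t,\tilde\mu_m}$ and $\tilde\xi_{\tau'-t,\tilde\mu_m}$.

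Write $g(r)=\tilde\nu(B_{r,\Sigma}(\tilde\mu_m)^c)$. The late failure $g(\widetilde R_m)<\tau'-t$ does not require the algorithm to pass $i_+$, because it can stop exactly at $i_+$. The fact $r_{i_+}\ge\tilde\xi_{\tau'-t,\tilde\mu_m}$ only gives $g(r_{i_+})\le\tau'-t$. This can be far below $\tau'-t$ when sample points lie between $\tilde\xi_{\tau'-t,\tilde\mu_m}$ and $r_{i_+}$.

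To bound the probability of reaching $i_+$, you need an earlier query that succeeds with a positive margin. That means a grid radius with $\tau'-t\le g(r)\le\tau'-s$ for some $s>0$. The $\beta$-hypothesis only gives a grid radius in $(\tilde\xi_{\tau'+t,\tilde\mu_m},\tilde\xi_{\tau'-t,\tilde\mu_m})$; incidentally this gives $i_+\ge i_-+2$, not $i_+\le i_-+1$. At that radius $g$ may be as large as $\tau'+t$, so its query has margin about $-t$ and fails with overwhelming probability.

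\textbf{A counterexample to the statement.} This cannot be repaired, because the statement is false as written. Take the following setup:
\begin{itemize}
\item $n=1000$, $\rho_{\mathrm{bal},m}=1$, $R_{\min}=1$, $\beta=2$ (grid $1,2,4,8,16,32,\dots$), $\tau'=0.3$, $t=0.1$;
\item Mahalanobis distances to $\tilde\mu_m$: $600$ in $(9,15.9]$ with maximum $15.9$, $200$ in $(17,31.8]$ with maximum $31.8$, $150$ in $(31.8,32)$, and $50$ in $(33,60)$.
\end{itemize}
Then $\tilde\xi_{0.4,\tilde\mu_m}=15.9$ and $\tilde\xi_{0.2,\tilde\mu_m}=31.8$. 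The $\beta$-condition holds with equality, and every other hypothesis holds, including Condition~\ref{cond::distinct}.

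The query at radius $16$ has $N=600$ against threshold $700$, so it succeeds only if $V_4-V\ge 100$. The query at radius $32$ has $N=950$. Hence $\widetilde R_m=32$ with probability essentially one, and $|\tau'-g(32)|=|0.3-0.05|>t$. The claimed bound, by contrast, is of order $10^{-10}$.

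\textbf{The paper's proof has the same hole.} It bounds the late term by the failure of query $k_2$, treating $k_2$ as an integer ``for readability''. Under Condition~\ref{cond::distinct} that would put a grid radius exactly on the sample distance $\tilde\xi_{\tau'-t,\tilde\mu_m}$, which is excluded. For the first grid index above $k_2$ (your $i_+$), the event of stopping there is not covered.

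A correct version needs a grid condition on the late side, for example
\begin{itemize}
\item $R_{\min}\le\tilde\xi_{\tau'-t/2,\tilde\mu_m}$, and
\item $\beta\le(\tilde\xi_{\tau'-t,\tilde\mu_m}-R_{\min}+1)/(\tilde\xi_{\tau'-t/2,\tilde\mu_m}-R_{\min}+1)$.
\end{itemize}
This yields a grid radius with $g\in(\tau'-t,\tau'-t/2]$, at the cost of $t/2$ in place of $t$ in the late-side exponent.

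\textbf{A smaller, fixable point: the constant-absorption step.} Since $t-1/n<1$, the stated exponent is $\tfrac n4(t-1/n)(\sqrt\rho\vee\rho)$. Reading the $\vee$ as letting you use $n(t-1/n)^2\rho/4$ instead is not legitimate. Your split-margin bound $\exp(-(nt-1)^2\rho/8)$ dominates the stated exponent only when $nt-1\ge 2(1\vee\rho^{-1/2})$. It therefore fails, for example, when $t<3/n$ and $\rho$ is large.

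On the early side this is repaired as follows. Apply the full margin, which exceeds $nt$ because $g(r_i)>\tau'+t$ for $i\le i_-$, to the single Gaussian $V_i-V$ of variance $2$. When $\rho<1$, also use $\Pr(Z\ge s)\le e^{-s}$ for a standard normal $Z$.
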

\begin{proof}[Proof of Lemma~\ref{lem::pq-bound-zCDP}]
For notational simplicity, we let $\rho=\rho_{\mathrm{bal},m}$ for the remainder of the proof. 
Recall $\hat \tau'=\tau'-V/(n\sqrt{\rho})\eqd \tau'-V/(n\sqrt{\rho})$, where $\eqd$ refers to `is equal in distribution'. 
Observe that 
\begin{equation*}
            \Prr{|\tau' -\tilde\nu(B_{\widetilde R_m,\Sigma}(\tilde \mu_{m})^c)|> t}=\Prr{\tilde\nu(B_{\widetilde R_m,\Sigma}(\tilde \mu_{m})^c)-\tau'> t}    + \Prr{\tau'-\tilde\nu(B_{\widetilde R_m,\Sigma}(\tilde \mu_{m})^c)> t} \coloneqq I+II.
\end{equation*}
First, for $\{|\tau' -\tilde\nu(B_{\widetilde R_m,\Sigma}(\tilde \mu_{m})^c)|\leq t\}$ to be non-empty, we must ensure that there is some $N\in\bbN$ such that $\tilde\nu(B_{\beta^N-1+R_{\min},\Sigma}(\tilde \mu_{m})^c)$ is within $t$ of $\tau'$. 
Let $k$ be the integer at which the process terminates. 
We have that  
\begin{align*}
    \tilde\nu(B_{\widetilde R_m,\Sigma}(\tilde \mu_{m})^c)-\tau'> t
    &\implies      k< \frac{\log(\tilde \xi_{\tau'+t,\tilde \mu_{m}} -R_{\min}+1)}{\log \beta}\coloneqq k_1,\\
    \tau'- \tilde\nu(B_{\widetilde R_m,\Sigma}(\tilde \mu_{m})^c)> t
    &\implies     k> \frac{\log(\tilde \xi_{\tau'-t,\tilde \mu_{m}} -R_{\min}+1)}{\log \beta}\coloneqq k_2.
\end{align*}
Thus, as long as there is an integer between $k_1$ and $k_2$, then $\{|\tau' -\tilde\nu(B_{\widetilde R_m,\Sigma}(\tilde \mu_{m})^c)|\leq t\}$ is non-empty. This is implied by the condition $1 \leq k_2 - k_1$, which is equivalent to
\begin{equation*}
        1 \leq \frac{\log(\tilde \xi_{\tau'-t,\tilde \mu_{m}} -R_{\min}+1)}{\log \beta}- \frac{\log(\tilde \xi_{\tau'+t,\tilde \mu_{m}} -R_{\min}+1)}{\log \beta} 
    \implies \beta \leq \frac{\tilde \xi_{\tau'-t,\tilde \mu_{m}} -R_{\min}+1}{\tilde \xi_{\tau'+t,\tilde \mu_{m}} -R_{\min}+1}. 
\end{equation*}
The above inequality holds by assumption, and so then $\{|\tau' -\tilde\nu(B_{\widetilde R_m,\Sigma}(\tilde \mu_{m})^c)|\leq t\}$ is non-empty.

We now bound $I$ above. 
Then, using the definition of $k_1$ and $k$, the fact that $V$ and $\{V_i\}_{i=1}^\infty$ are standard normal random variables and a union bound, it holds that
\begin{align*}
I\leq \Prr{k<k_1}&\leq \sum_{i=0}^{\floor{k_1}}\Prr{V_{i}/(n\sqrt{\rho})\geq 1-\hat{\tau'}-\tilde\nu(B_{\beta^{i}+R_{\min}-1,\Sigma}(\tilde \mu_{m}))}\\
&\leq  \sum_{i=0}^{\floor{k_1}}\Prr{V_{i}/(n\sqrt{\rho})\geq 1-\tau' + V/(n\sqrt{\rho}) - \tilde\nu(B_{(\tilde \xi_{\tau'+t}-R_{\min}+1)+R_{\min}-1,\Sigma}(\tilde \mu_{m}))}\\
&\leq (k_1+1)\Prr{V_{\floor{k_1}}-V\geq (t-1/n)n\sqrt{\rho}}\\
 &\leq \left(\frac{\log(\tilde \xi_{ \tau'+t,\tilde \mu_{m}}+1-R_{\min})}{\log\beta}+1\right)\exp\left(-n((t-1/n)\vee (t-1/n)^2)\sqrt{\rho\vee \rho^2}/4 \right).
\end{align*}
Here we use an exponential tail bound for the standard Gaussian, as we will be applying this result when $t<1$. We also make use of Condition~\ref{cond::distinct} on line 3. 
Now, we consider $II$. We assume that $k_2$ is an integer here for readability, where we note that the non-integer case results in the same bound. The non-integer case instead uses the first grid index above $k_2$. 
Observe that 
\begin{align*}
II\leq \Prr{k> k_2}&\leq \Prr{\text{Query } k_2 \text{ fails}}\\
&=\Prr{V_{k_2}/(n\sqrt{\rho})+\tilde\nu(B_{\beta^{k_2}+R_{\min}-1,\Sigma}(\tilde \mu_{m}))\leq 1-\hat \tau'}\\
&= \Prr{\frac{V_{k_2}}{n\sqrt{\rho}}+\tilde\nu(B_{\tilde \xi_{\tau'-t}-R_{\min}+1+R_{\min}-1,\Sigma}(\tilde \mu_{m}))\leq 1-\tau'+\frac{V}{n\sqrt{\rho}}}\\
&\leq\Prr{tn\sqrt{\rho} \leq V-V_{k_2}}\\
&\leq \exp\left(-n(t\vee t^2)\sqrt{\rho\vee \rho^2}/4 \right).
\end{align*}
Note that we make use of Condition~\ref{cond::distinct} on line 4. 
\end{proof}
\subsection{Bound on $I_{2}$ via shelling and empirical process techniques}
For a function class $\cF$, $\epsilon>0$ and a norm $\norm{\cdot}$, let $\sH(\epsilon,\cF,\norm{\cdot})$ be the bracketing entropy of $\cF$.
\begin{lemma}\label{lemma:entropy-g}
Given any probability measure $P$ on $\rdd$, the class of functions $\cH=\{h_{R,c,u}(x)=\ip{\Proj(x,R,c)-\Proj(x,(R-|c|)_+,0),u} \colon\ (R,c,u)\in [0,R^*]\times  B_{r} \times  S_1\} $ with $r\leq R^*$ satisfies  
$$\int_0^z\sH^{1/2}(\epsilon,\cH,L_2(P))d\epsilon\lesssim \sqrt{d}z\sqrt{\log(e+8e(2+r+R^*)^2/z)}.$$
\end{lemma}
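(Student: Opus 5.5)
The plan is to treat $\cH$ as a class indexed by a finite-dimensional parameter. I will show that $h_{R,c,u}(x)$ is Lipschitz in the parameter $\theta=(R,c,u)$, uniformly in $x$. I will then convert Euclidean covering numbers of the parameter set $\Theta=[0,R^*]\times B_r\times S_1\subset\re^{2d+1}$ into $L_2(P)$-bracketing numbers, and integrate.

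\textbf{Step 1 (uniform Lipschitz bound).} Write $\Proj(x,R,c)=c+\Proj(x-c,R,0)$. Projection onto the centered ball of radius $R$ is $1$-Lipschitz in its argument (Lemma~\ref{lemma:proj-lip}), which gives
\[
|\Proj(x,R,c)-\Proj(x,R',c')|\leq 2|c-c'|+|R-R'|.
\]
The bound $|\Proj(y,R,0)-\Proj(y,R',0)|\leq|R-R'|$ holds because $\Proj(y,R,0)=y\min(1,R/|y|)$. The map $(R,c)\mapsto(R-|c|)_+$ is $1$-Lipschitz in each argument, so the same argument gives
\[
|\Proj(x,(R-|c|)_+,0)-\Proj(x,(R'-|c'|)_+,0)|\leq|R-R'|+|c-c'|.
\]
For the envelope, note that $\Proj(x,R,c)\in B_R(c)$ and $\Proj(x,(R-|c|)_+,0)\in B_R(0)$. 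Hence
\[
|\Proj(x,R,c)-\Proj(x,(R-|c|)_+,0)|\leq 2R^*+r\quad\text{for all } x.
\]
Combining these bounds with $|\ip{a,u}-\ip{b,v}|\leq|a-b|+|b||u-v|$, I obtain, for every $x$,
\[
|h_\theta(x)-h_{\theta'}(x)|\leq L\,|\theta-\theta'|,\qquad L\lesssim 2+r+R^*.
\]

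\textbf{Step 2 (brackets from a parameter net).} Because the Lipschitz bound is uniform in $x$, I take a Euclidean $\epsilon'$-net $\{\theta_j\}$ of $\Theta$. The functions $[h_{\theta_j}-L\epsilon',\,h_{\theta_j}+L\epsilon']$ form brackets covering $\cH$, each of $L_2(P)$-size $2L\epsilon'$, for any $P$. The set $\Theta$ has diameter $\lesssim 2+r+R^*$ in dimension $2d+1$. The standard volumetric bound therefore gives
\[
\sH(\epsilon,\cH,L_2(P))\leq (2d+1)\log\bigl(1+C(2+r+R^*)^2/\epsilon\bigr),
\]
where the squared factor comes from multiplying $L$ by the diameter. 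I expect this step to be the main point requiring care. The reason is that the envelope must not depend on $x$: both projections stay in bounded balls even though $x$ is arbitrary, so no moment assumption on $P$ is needed, and the bound holds for every $P$ as claimed.

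\textbf{Step 3 (entropy integral).} Set $A=e+8e(2+r+R^*)^2$, adjusting constants so that $\log(1+C(\cdot)/\epsilon)\leq\log(A/\epsilon)$ on $(0,z]$. It remains to bound
\[
\int_0^z\sqrt{d\log(A/\epsilon)}\,d\epsilon.
\]
For this I use the elementary estimate $\int_0^z\sqrt{\log(A/\epsilon)}\,d\epsilon\leq 2z\sqrt{\log(A/z)}$ when $A/z\geq e$. It follows by substituting $\epsilon=zs$ and using $\sqrt{a+b}\leq\sqrt a+\sqrt b$ together with $\int_0^1\sqrt{\log(1/s)}\,ds<1$. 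Since $\log(A/z)\lesssim\log(e+8e(2+r+R^*)^2/z)$, this yields the stated bound $\sqrt d\,z\sqrt{\log(e+8e(2+r+R^*)^2/z)}$, up to a universal constant.
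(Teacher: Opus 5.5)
Your proposal is correct and is essentially the paper's argument: a Lipschitz bound in $(R,c,u)$ that holds uniformly in $x$, converted into $L_2(P)$-bracketing entropy through a Euclidean net of the parameter set (the paper cites Lemma 2.7.11 of van der Vaart and Wellner for exactly this step), followed by the same elementary entropy integral. The one difference is the Lipschitz constant: the paper's is $4\sqrt{2+r^2}$ and does not involve $R^*$, because $|\Proj(x,R,c)-\Proj(x,(R-|c|)_+,0)|\leq 3|c|\leq 3r$ rather than merely $2R^*+r$, but your cruder $L\lesssim 2+r+R^*$ still gives the same $(2+r+R^*)^2$ inside the logarithm (and if you write $\log(e+\cdot/\epsilon)$ instead of $\log(A/\epsilon)$ in Step 3, the restriction $A/z\geq e$ disappears).
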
 
\begin{proof}

First note that, given $x\in\rdd$, the map $h_x((R,c,u))=h_{R,c,u}(x)=\ip{\Proj(x,R,c)-\Proj(x,(R-|c|)_+,0),u}$ is  $4\sqrt{2+r^2}$-Lipschitz in $(R,c,u)$ with respect to the Euclidean metric. (For a long form proof of this, see the supplementary material.) 
It follows from Lemma 2.7.11 of \citet{vanDerVaartWellner1996} and standard bounds on the covering number of $d-1$-dimensional balls and spheres that
$$\sH^{1/2}(\alpha ,\cH,L_2(P))\lesssim \sqrt{d\log(1+8(1+r+R^*)\sqrt{2+r^2}/\alpha)}.$$
Let $a=8(1+r+R^*)\sqrt{2+r^2}$. 
Thus, 
\begin{align*}
    \int_0^z\sH^{1/2}(\alpha,\cH,L_2(P))d\alpha&\lesssim \sqrt{d}\int_0^z\sqrt{\log(1+a/\alpha)}d\alpha
\lesssim\sqrt{d}z\sqrt{\log(e+ea/z)}.
\end{align*}
Lastly, we use the fact that $r\leq R^*$ and apply $a\leq 8(2+r+R^*)^2$.
\end{proof}
\begin{lemma}\label{lemma::em-pro-bound}
Given $\cH=\{h(x)=h_{\mathbf{R},c,u}(x)=\ip{\Proj(x,\mathbf{R},c)-\Proj(x,(\mathbf{R}-|c|)_+,0),u} \colon\ (\mathbf{R},c,u)\in [0,R^*]\times  B_{r} \times  S_1\} $ so that $r\leq R^*$, there exists constants $C_0,C_1,C_2,C_3>0$, such that for $C_0e^{-C_1n}<\alpha<1$, if
$n\geq C_2d\log\left(e+(4e(2+r+R^*)^2)/r\right),$ 
then with probability at least $1-\alpha$,
\begin{equation*}
    \sup_{h\in \cH}\frac{1}{n}\left|\sum_{i=1}^n  [h( \Sigma^{-1/2}X_i)-\E h( \Sigma^{-1/2}X_i)]\right|\lesssim r\sqrt{\frac{d\log\left(e+\frac{4e(2+r+R^*)^2}{r}\right)\vee \log(C_3/\alpha)}{n}} .  
\end{equation*}
\end{lemma}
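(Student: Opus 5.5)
The plan is to apply a Bernstein-type uniform concentration inequality for empirical processes indexed by a class with a bracketing entropy integral, namely Theorem 5.11 of \citet{vanDeGeer2000}, to the class $\cH$. The inputs are an envelope, a variance bound, and the entropy integral of Lemma~\ref{lemma:entropy-g}. The key structural fact is that, although $R$ ranges up to $R^*$, every $h\in\cH$ is uniformly bounded by a multiple of $r$ rather than of $R^*$.

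\textbf{Step 1 (envelope).} Fix $x$, $R\le R^*$, $|c|\le r$ and set $R'=(R-|c|)_+$. The ball $B_{R'}(0)$ is contained in $B_R(c)$, so $x$ can fall in one of three regions.
\begin{itemize}
\item If $|x|\le R'$, both projections equal $x$ and $h=0$.
\item If $|x|>R'$ but $|x-c|\le R$, then $\Proj(x,R,c)=x$. Here $\Proj(x,R',0)$ is the radial shrink of $x$ to length $R'$, so the difference is at most $|x|-R'\le |x-c|+|c|-R+|c|\le 2|c|$.
\item If $|x-c|>R$, write $\Proj(x,R,c)-\Proj(x,R',0)$ as the sum of $(\Proj(x,R,c)-c)-R\,(x-c)/|x-c|$ and $c+R(x-c)/|x-c|-R'x/|x|$. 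The first term vanishes. The second is controlled by $|c|+|R-R'|+R'\bigl|(x-c)/|x-c|-x/|x|\bigr|$.
\end{itemize}
The angular term in the third case is the delicate part. I would bound it by $2R'|c|/|x-c|\le 2|c|$, using $R'\le R<|x-c|$ and the elementary inequality $|a/|a|-b/|b||\le 2|a-b|/\max(|a|,|b|)$. All three cases then give $\sup_x|h(x)|\le 4r$, with no dependence on $R^*$. The same bound yields $\E h^2\le 16r^2$ under $P$, the law of $\Sigma^{-1/2}X$.

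\textbf{Step 2 (apply the inequality).} Rescale $\tilde h=h/(4r)$, so that $\|\tilde h\|_\infty\le 1$ and the $L_2(P)$ radius is at most $1$. Theorem 5.11 of \citet{vanDeGeer2000} requires that the entropy integral $\int_0^1\sH^{1/2}(\epsilon,\tilde\cH,L_2(P))\,d\epsilon$ be at most of order $\sqrt n\,t$ for the deviation level $t$. It then delivers
$$\Pr\Bigl(\sup_{\tilde h}\Bigl|n^{-1/2}\textstyle\sum_i(\tilde h-\E\tilde h)\Bigr|\ge \sqrt n\,t\Bigr)\le C\exp(-nt^2/C).$$
By Lemma~\ref{lemma:entropy-g}, the entropy of $\tilde\cH$ at scale $\epsilon$ equals that of $\cH$ at scale $4r\epsilon$. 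Taking $z=4r$ there gives an integral of order $\sqrt{d\log(e+ 2e(2+r+R^*)^2/r)}$, which is dominated by the stated logarithmic factor. I would then choose
$$t\asymp \sqrt{\frac{d\log(e+4e(2+r+R^*)^2/r)\vee\log(C_3/\alpha)}{n}}.$$
The theorem's side condition that $t$ not exceed a constant (the Bernstein regime) is guaranteed by $n\ge C_2d\log(\cdots)$ and by $\alpha>C_0e^{-C_1n}$. Multiplying back by $4r$ gives the claim with probability at least $1-\alpha$.

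\textbf{Expected obstacle.} Step 1 is the crux. It is the only place the $r$-scaling, rather than an $R^*$-scaling, enters, and the case $|x-c|>R$ needs care with the angular term. A secondary point is to check that the constants in van de Geer's theorem, which involve a lower bound on $t$ in terms of the entropy integral, can be absorbed into $C_0,\dots,C_3$.
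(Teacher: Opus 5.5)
Your proposal is correct and follows the paper's route: apply Theorem 5.11 of \citet{vanDeGeer2000} with envelope and variance parameters of order $r$ (the paper takes $R=K=2r$), and use the entropy integral from Lemma~\ref{lemma:entropy-g}. Your Step~1 case analysis supplies the $R^*$-free bound $|h|\le 4|c|\le 4r$ that the paper only asserts; the constant $4r$ is cruder than the paper's $2r$ but harmless.
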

\begin{proof}
We are working with a supremum of an empirical process, by which we denote
$$\zeta_n=\sup_{h\in \cH}\left|\sum_{i=1}^n  [h( \Sigma^{-1/2}X_i)/n-\E(h( \Sigma^{-1/2}X_i))/n]\right|.$$ 
We apply Theorem 5.11 of \citet{vanDeGeer2000}, with $K$ and $R$ in their notation as follows.  
Take $\{Y_i\}_{i=1}^n\coloneqq \{\Sigma^{-1/2}X_i\}_{i=1}^n$. 
In the notation of \citet{vanDeGeer2000}, we have that $R=2r\geq\sup_{h\in\cH}\sqrt{\E  h( Y_i)^2}) ,$
and 
$K=2r$, so that $\rho_K(h)\leq R$, where $\rho_K$ is an extension of the $L_2(P)$ norm. Here, $P$ is the probability measure associated with $Y_i$, see page 72 of \citet{vanDeGeer2000}. 



Applying Theorem 5.11 of \citet{vanDeGeer2000}, being careful about satisfying the conditions, where here $a$ in their notation is the upper bound on $\zeta_n$ up to a constant, yields that for $Ce^{-C_2n}<\alpha<1$, if
$n\geq C_0^2d\log\left(e+(4e(2+r+R^*)^2)/r\right)/C_1^2,$ then with probability at least $1-\alpha$,
\begin{equation*}
   \zeta_n\lesssim r\left(\sqrt{\frac{d\log\left(e+\frac{4e(2+r+R^*)^2}{r}\right)}{n}}  
  +\sqrt{\frac{\log(C_2/\alpha)}{n}}\right).
\end{equation*}
\end{proof}
\begin{lemma}\label{lemma::dyadic-shell}
Under the conditions of Lemma~\ref{lemma:one-iteration}, there exists universal constants $C_0,C_1,C_3>0$ and $C_2(w)>0$, such that for $C_0e^{-C_1n}\log n<\alpha<1$, if
$$n\geq C_2(w)d\log n,$$
then with probability at least $1-\alpha$,
\begin{equation}\label{eqn::shellbound}
    I_{2,2}\lesssim 1/n+\norm{\tilde \mu_{m-1}}_\Sigma\left(\sqrt{\frac{d\log n}{n}}  
  +\sqrt{\frac{\log(C_3 \log n/\alpha)}{n}}\right).   
\end{equation}
\end{lemma}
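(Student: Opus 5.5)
The plan is to reduce $I_{2,2}$ to a supremum over the class $\cH$ of Lemma~\ref{lemma::em-pro-bound} and then remove the dependence on the random radius $r=\norm{\hat\mu_{m-1}}=\norm{\tilde\mu_{m-1}}_\Sigma$ by peeling over dyadic shells. The centering is legitimate: $\E\Proj(Y_i,(R-|c|)_+,0)=0$ by spherical symmetry, so $\E[\Proj(Y_i,R,c)]=\E[\Proj(Y_i,R,c)-\Proj(Y_i,(R-|c|)_+,0)]$. Taking the supremum over unit $u$, we get
\[
I_{2,2}\le \sup_{h\in\cH_{r}}\frac1n\Bigl|\sum_{i=1}^n [h(Y_i)-\E h(Y_i)]\Bigr|,
\]
where $\cH_{r}$ is the class $\cH$ with centre radius $r$ and maximal projection radius $R^*$. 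For $R^*$ we use $R_2\le \xi_{\tau'/4,c}\lesssim\sqrt{d/\tau'}+r$ from Lemma~\ref{lemma::quantile bound}, together with $r\le 3\tilde R_0/4\le n^{w}$ and $\tau'\gtrsim d/n$. This gives $R^*\lesssim \sqrt n+n^w\lesssim n^{w}$, a polynomial in $n$.

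The key point is that the envelope of $\cH_r$ is of order $r$ and does not grow with $R^*$. If $|x|\le (R-|c|)_+$, then $x$ lies in both balls and the two projections coincide. Otherwise both projections lie within distance of order $|c|$ of each other: both sit on spheres of radii $R$ and $R-|c|$ about centres differing by $c$, and points outside the balls are shrunk radially. So $|h|\le 2r$, which is what Lemma~\ref{lemma::em-pro-bound} uses with $K=R=2r$.

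Since $r$ is data dependent, I would not apply Lemma~\ref{lemma::em-pro-bound} at $r$ directly. Instead I would fix a deterministic grid.
\begin{itemize}
\item \textbf{Small $r$.} If $r\le 1/n$, the trivial envelope bound gives $I_{2,2}\le 4r\lesssim 1/n$. This is the source of the $1/n$ term.
\item \textbf{Shells.} Otherwise $r\in(2^{j-1}/n,2^{j}/n]$ for some $j=1,\dots,J$ with $J\lesssim \log_2(n^{w+1})\lesssim w\log n$.
\item \textbf{Per-shell bound.} On shell $j$ the class $\cH_r$ is contained in $\cH_{r_j}$ with $r_j=2^j/n\le 2r$. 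Apply Lemma~\ref{lemma::em-pro-bound} with $r=r_j$, deterministic $R^*\lesssim n^w$, and failure probability $\alpha/J$.
\end{itemize}
A union bound over the $J$ shells then holds simultaneously with probability $1-\alpha$. On the shell actually containing $\norm{\tilde\mu_{m-1}}_\Sigma$ this yields
\[
I_{2,2}\lesssim r\sqrt{\frac{d\log(e+4e(2+r_j+R^*)^2/r_j)\vee\log(C_3J/\alpha)}{n}}.
\]

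Finally I would simplify. Since $r_j\ge 1/n$ and $R^*\lesssim n^w$, the argument of the logarithm is at most $\mathrm{poly}(n)$, so that logarithm is $\lesssim (2w+1)\log n$; the factor depending on $w$ is absorbed into $C_2(w)$ and the implicit constants. The sample-size requirement of Lemma~\ref{lemma::em-pro-bound} becomes $n\ge C_2(w)d\log n$. Its probability restriction $C_0e^{-C_1n}<\alpha/J$ becomes $\alpha>C_0e^{-C_1n}\log n$ after absorbing $w$. Also $\log(C_3J/\alpha)\lesssim\log(C_3'\log n/\alpha)$. This gives \eqref{eqn::shellbound}.

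I expect the main obstacle to be the envelope and independence issues. First, the bound $|h|\lesssim r$ must hold uniformly over $R\in[0,R^*]$, including when $R<|c|$. Second, $\tilde\mu_{m-1}$ and $\tilde R_{m-1}$ depend on the same sample, so the argument needs the supremum over deterministic parameter sets; the shelling construction is designed to provide exactly that. The sign conventions in the geometric estimate for the projections need care, and I would check that first.
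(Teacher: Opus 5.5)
Your proposal is correct and follows the paper's proof essentially step for step. The ingredients are: dyadic peeling of $\norm{\tilde\mu_{m-1}}_\Sigma$ over shells $(2^{j-1}/n,2^j/n]$ with $J\approx(w+1)\log_2 n$; the trivial envelope bound when $\norm{\tilde\mu_{m-1}}_\Sigma\le 1/n$, which gives the $1/n$ term; Lemma~\ref{lemma::em-pro-bound} on each shell with $r_j=2^j/n$, a deterministic $R^*$ polynomial in $n$, and failure probability $\alpha/J$; then a union bound. The only detail to make explicit is that $R^*$ must also dominate $r_J$ (the paper takes $R^*=2n^w$), so that the hypothesis $r\le R^*$ of Lemma~\ref{lemma::em-pro-bound} holds on every shell; this affects only the logarithm.
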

\begin{proof}
We need to handle the randomness in $\norm{\tilde \mu_{m-1}}_\Sigma$, for which we use a peeling argument. Let $c_m=\norm{\tilde \mu_{m-1}}_\Sigma$. We know for sure that $c_m\leq 3n^w/4$ by Condition~\ref{cond::beta} and the assumptions of the theorem. 
We consider the following set of shell events $\cS_{j}=\{2^{j-1}/n<c_m\leq 2^j/n \}$, for $j=1,\ldots, \ceil{(w+1)\log_2 n}\coloneqq 1,\ldots, J$. 
We have that either exactly one $\cS_{j}$ has to occur or $\norm{\tilde \mu_{m-1}}_\Sigma\leq 1/n$. 
If $\norm{\tilde \mu_{m-1}}_\Sigma\leq 1/n$, then result holds immediately. 

We turn to the case that $\norm{\tilde \mu_{m-1}}_\Sigma> 1/n$. 
We define $\cH_j=\{h(x)=h_{\mathbf{R},c,u}(x)=\ip{\Proj(x,\mathbf{R},c)-\Proj(x,(\mathbf{R}-|c|)_+,0),u} \colon\ (\mathbf{R},c,u)\in [0,R^*]\times  B_{r_j} \times  S_1\} $, so that $r_j=2^j/n\leq 2n^w=R^*$. 
Now, we let $E_j$ be the event that the desired bound holds over $\cH_j$, that is, 
for $$\zeta_{n,j}=\sup_{h\in \cH_j}\frac{1}{n}\left|\sum_{i=1}^n  [h( \Sigma^{-1/2}X_i)-\E(h( \Sigma^{-1/2}X_i))]\right|,$$
we define
\begin{equation*}
       E_j =  \Bigg\{   \zeta_{n,j} \leq K r_j\left(\sqrt{\frac{d\log n}{n}}  
  +\sqrt{\frac{\log(C_3 \log n/\alpha)}{n}}\right)  \Bigg\},
\end{equation*}
for some large enough constant $K>0$. 
Now, if $\cS_{j}$ occurs, then the random function class that is present in $I_{2,2}$ is contained in $\cH_j$, and so $I_{2,2}\leq \zeta_{n,j}$. Here, one notes that $R_2< R^*$. 

Now, note that on $\cS_{j}$, $r_j/2<c_m\leq r_j\implies r_j<2c_m$, so then $E_j$ would imply that 
$$E=  \Bigg\{I_{2,2}\leq 2Kc_m\left(\sqrt{\frac{d\log n}{n}}  
  +\sqrt{\frac{\log(C_3 \log n/\alpha)}{n}}\right) \Bigg\},$$
occurs. 
Thus, $\Prr{E}\geq  \Prr{\bigcap_{j=1}^J E_j}\geq 1-\sum_{j=1}^J  \Prr{ E_j^c}.$ 
Therefore, if $E_j^c$ has probability at most $\alpha/J$, then the result holds. 

To show that each $E_j^c$ has probability at most $\alpha/J$, we apply Lemma~\ref{lemma::em-pro-bound} to each shell with failure probability $\alpha/J$, $r=2^j/n$, and $2n^w=R^*$. 
This yields that if
$$n\geq C_2d\log\left(e+4en(2+4n^w)^2\right),$$
then with probability at least $1-\alpha/J$,
\begin{equation*}
   \zeta_{n,j}\lesssim \frac{2^{j}}{n}\left(\sqrt{\frac{d\log\left(e+4en(2+4n^w)^2\right)}{n}}  
  +\sqrt{\frac{\log(C_3 J/\alpha)}{n}}\right).
\end{equation*}
This bound on $n$ is implied by the lemma assumptions, and the bound on $\zeta_{n,j}$ above is stronger than that of $E_j$. 
This completes the proof. 
\end{proof}
\subsection{Proof of Proposition~\ref{prop::priv-compu}}
\begin{proof}[Proof of Proposition~\ref{prop::priv-compu}]
The privacy guarantee follows directly from composition (Proposition~\ref{prop::dp}), the Gaussian mechanism (Proposition~\ref{prop::adm}), and the privacy guarantee of (zero-concentrated) \texttt{AboveThreshold} for quantiles \citep{Ramsay2025}. The computation time can be proven as follows. The initial covariance matrix inversion and decomposition takes $O(d^3)$ time. Then, we can whiten the data once, taking $O(nd^2)$ time. For each iteration, we compute the relevant Euclidean norm for each data point, $O(nd)$. The radius computation can be implemented in $O(n\log n )$ time \citep{Durfee2024}. The projections take a worst case $O(nd)$ time, and the noise mean also takes $O(nd)$ time. Therefore, the final time is $O(d^3+nd^2+Mnd+Mn\log n)$. 
\end{proof}


\bibliographystyle{cas-model2-names}

\bibliography{main}

\newpage
\section{Supplementary proofs}
We state the following lemma, whose proof is straight-forward and is included in the supplementary material. 
\begin{lemma}\label{lemma:proj-ip-lip-2}
Given $x\in\rdd$, the map $h_x((R,c,u))=\ip{\Proj(x,R,c)-\Proj(x,(R-|c|)_+,0),u}$, where $h_x: [0,R^*]\times B_{r} \times  S_1\to \re $ is  $4\sqrt{2+r^2}$-Lipschitz in $(R,c,u)$ with respect to the Euclidean metric. 
\end{lemma}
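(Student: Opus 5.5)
Work in whitened coordinates. Since $h_x$ depends on $x$ only through $\Proj$, I may take $\Sigma=I$ and write $\Proj(x,R,c)$ for the Euclidean projection onto the closed ball $B_R(c)$. By the triangle inequality,
\[
|h_x(R,c,u)-h_x(R',c',u')|\le |h_x(R,c,u)-h_x(R',c',u)|+|h_x(R',c',u)-h_x(R',c',u')|.
\]
I would bound the Lipschitz constant in $(R,c)$ and the one in $u$ separately, and then combine them. The combination uses $a|\Delta_1|+b|\Delta_2|\le\sqrt{a^2+b^2}\,|(\Delta_1,\Delta_2)|$, which is where a factor of the form $\sqrt{2+r^2}$ should come from.

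\textbf{Dependence on $u$.} For fixed $(R,c)$, the map $u\mapsto \ip{v,u}$ with $v=\Proj(x,R,c)-\Proj(x,(R-|c|)_+,0)$ is $|v|$-Lipschitz. I would bound $|v|$ uniformly in $x$. Both projections are $1$-Lipschitz in $x$. Moreover, $B_{(R-|c|)_+}(0)\subseteq B_R(c)$, since $|y-c|\le |y|+|c|\le R$. So for $|x|\le (R-|c|)_+$ the two projections coincide and $v=0$. Outside this ball, both projections lie in $B_R(c)$. I would aim to show directly that $|v|\le 2|c|\le 2r$, for instance by comparing each projection with the radial point $x\cdot R/|x|$. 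If that comparison turns out to be messy, I would settle for the cruder bound $|v|\le 2R+|c|$; this gives a Lipschitz constant depending on $R^*$, which would still be enough for the entropy bound in Lemma~\ref{lemma:entropy-g} up to constants inside the logarithm.

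\textbf{Dependence on $(R,c)$.} I would use the explicit formula
\[
\Proj(x,R,c)=c+\min\{1,R/|x-c|\}(x-c).
\]
From it I would show:
\begin{itemize}
\item $|\Proj(x,R,c)-\Proj(x,R',c)|\le |R-R'|$, because the projection moves along a fixed ray and its distance from $c$ is $\min(R,|x-c|)$;
\item $|\Proj(x,R,c)-\Proj(x,R,c')|\le 2|c-c'|$. This follows from the translation identity $\Proj(x,R,c)=c+\Proj(x-c,R,0)$ and the $1$-Lipschitz property of $\Proj(\cdot,R,0)$.
\end{itemize}
For the second projection, $\Proj(x,s,0)$ is $1$-Lipschitz in $s$, and $s=(R-|c|)_+$ is $1$-Lipschitz in $(R,|c|)$, hence in $(R,c)$. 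Summing these pieces, $v$ changes by at most $2|R-R'|+3|c-c'|$, and $|u|=1$ passes this bound to $h_x$.

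\textbf{Combining.} Adding the $u$-part with constant $\lesssim r$ and the $(R,c)$-part, then applying Cauchy--Schwarz over the three coordinate blocks, gives a constant of the form $\sqrt{2^2+3^2+(2r)^2}\le 4\sqrt{2+r^2}$.

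\textbf{Main obstacle.} The step I expect to be hardest is the uniform bound $|v|\lesssim r$ independent of $R$. The natural approach is to show $|\Proj(x,R,c)-\Proj(x,R-|c|,0)|\le 2|c|$ by a case split:
\begin{itemize}
\item $|x|\le R-|c|$: here $v=0$.
\item $R-|c|<|x|$ but $|x-c|\le R$: the first projection is $x$ itself, and its distance from $\frac{(R-|c|)}{|x|}x$ is $|x|-(R-|c|)\le 2|c|$.
\item $|x-c|>R$: both projections are points near the sphere, and I would compare both with $x(R/|x|)$ using the triangle inequality. I expect a direction-error term controlled by $|c|$ to appear in this case.
\end{itemize}
When $R<|c|$, the second projection is $0$ and the first has norm at most $R+|c|\le 2|c|$, so the bound holds trivially.
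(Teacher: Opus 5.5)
Your plan is correct and uses the same ingredients as the paper: the translation identity $\Proj(x,R,c)=c+\Proj(x-c,R,0)$ with $1$-Lipschitzness of projection, and $1$-Lipschitzness of $\Proj(x,s,0)$ in $s$. You organize them differently, though.

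The paper inserts the intermediate point $\Proj(x,R,0)$ and writes $h_x=h_{x,1}+h_{x,2}$, with $h_{x,1}=\ip{\Proj(x,R,c)-\Proj(x,R,0),u}$ and $h_{x,2}=\ip{\Proj(x,R,0)-\Proj(x,(R-|c|)_+,0),u}$. It bounds the per-block constants of each piece and adds them, $2\sqrt{2+r^2}+\sqrt{8+r^2}\le 4\sqrt{2+r^2}$. What that decomposition buys is exactly the step you flag as the main obstacle: the uniform bound on $|v|$ needs no case split. Indeed $|\Proj(x,R,c)-\Proj(x,R,0)|\le |c|+|(x-c)-x|=2|c|$ and $|\Proj(x,R,0)-\Proj(x,(R-|c|)_+,0)|\le R-(R-|c|)_+\le |c|$, so $|v|\le 3r$. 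Both facts are already in your $(R,c)$ bullets. Moreover $3r$ suffices, because $\sqrt{2^2+3^2+(3r)^2}=\sqrt{13+9r^2}\le 4\sqrt{2+r^2}$. Your triangle inequality through $xR/|x|$ in case 3 naturally produces this $3|c|$, not $2|c|$, and that is fine.

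Your sharper target $|v|\le 2|c|$ is also true. In case 3 with $R\ge |c|$, write $v=c+|c|e_2+R(e_1-e_2)$ with $e_1=(x-c)/|x-c|$ and $e_2=x/|x|$, which do not depend on $R$. So $|v|$ is convex in $R$ on $[|c|,|x-c|]$, and its endpoint values $|c+|c|e_1|$ and $|x|-|x-c|+|c|$ are both at most $2|c|$.

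Do not use the fallback $|v|\le 2R+|c|$. It gives a constant depending on $R^*$, so it does not prove the lemma as stated, whatever it would do downstream.

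In the other direction, your direct treatment of the $(R,c)$-variation is slightly tighter than the paper's. Bounding the change in $v$ by $2|\Delta R|+3|\Delta c|$ avoids charging the $R$-dependence of the intermediate point to both pieces.
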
 
\begin{proof}

We have that 
\begin{multline*}
    h_x((R,c,u))= \\ \hspace{2em}\ip{\Proj(x,R,c)-\Proj(x,R,0),u}+\ip{\Proj(x,R,0)-\Proj(x,(R-|c|)_+,0),u} \\ \coloneqq h_{x,1}((R,c,u))+h_{x,2}((R,c,u)).
\end{multline*}
We first show that $h_{x,1}$ is $2\sqrt{2+r^2}$-Lipschitz in the vector $(R,c,u)$.

We consider each of $R,c,u$ separately, and then combine with the triangle inequality. 
First, note that for any $z\in B_{r}$ and $R_1>R_2>0$, we have that $\ip{\Proj(x,R_1,z),u}-\ip{\Proj(x,R_2,z),u})=0$ if $|x-z|\leq R_2$. Otherwise, note that the projection lies along $x-c$.  the maximum distance between the projected points is $|R_2-R_1|$. 
Accordingly, the Cauchy-Schwarz inequality yields that
\begin{align*}
    |h_{x,1}((R_1,c,u))-h_{x,1}((R_2,c,u))|&\leq 2|R_2-R_1|,
\end{align*}
and the function is 2-Lipschitz in $R$.

Using Lemma~\ref{lemma:proj-lip}, we have that
\begin{align*}
    |h_{x,1}((R,c_1,u))-h_{x,1}((R,c_2,u))|&\leq |\Proj(x,R,c_1)-\Proj(x,R,c_2)|\\ &=|c_1-c_2+\Proj(x-c_1,R,0)-\Proj(x-c_2,R,0)|\leq 2|c_1-c_2|.
\end{align*}
Therefore, the function $h_{x,1}$ is 2-Lipschitz in $c$. 

It remains to consider $u$, from which it is easy to see that
\begin{align*}
    |h_{x,1}((R,c,u))-h_{x,1}((R,c,v))|\leq \norm{\Proj(x-c,R,0)-\Proj(x,R,0)+c}|u-v| &\leq  2r|u-v|.
\end{align*}

Thus, combining these bounds with the Cauchy-Schwarz inequality yields : 
\begin{align*}
    |h_{x,1}((R_2,c_2,u))-h_{x,1}((R_1,c_1,v))|&\leq  |h_{x,1}((R_2,c_1,v))-h_{x,1}((R_1,c_1,v))|\\
    &\hspace{2em} +|h_{x,1}((R_2,c_2,u))-h_{x,1}((R_2,c_2,v))|\\ &\hspace{2em} +|h_{x,1}((R_2,c_2,v))-h_{x,1}((R_2,c_1,v))|\\
    &\leq 2|R_2-R_1|+2|c_1-c_2|+2r|u-v|\\
    &\leq \sqrt{4+4+4r^2}\norm{(R_2,c_2,u)-(R_1,c_1,v)}.
\end{align*}
that the function is $2\sqrt{2+r^2}$-Lipschitz in the vector $(R,c,u)$.

Now, we use similar logic to analyze $h_{x,2}$. 
We have
\begin{multline*}
|h_{x,2}((R_2,c,u))-h_{x,2}((R_1,c,u))|\\
   \leq  |\ip{\Proj(x,R_2,0)-\Proj(x,R_1,0)-\Proj(x,(R_2-|c|)_+,0)+\Proj(x,(R_1-|c|)_+,0),u}| \leq  2|R_2-R_1|,
\end{multline*}
and,
$$|h_{x,2}((R,c_1,u))-h_{x,2}((R,c_2,u))|\leq |\ip{\Proj(x,(R-|c_1|)_+,0)-\Proj(x,(R-|c_2|)_+,0),u}|\leq 2|c_2-c_1|,$$
and,
$$|h_{x,2}((R,c,u))-h_{x,2}((R,c,v))|\leq|\ip{\Proj(x,R,0)-\Proj(x,(R-|c|)_+,0),u-v}|\leq r|u-v|.$$
Therefore, the second term is $\sqrt{8+r^2}$-Lipschitz in the vector $(R,c,u)$. 
Thus, the function is $4\sqrt{2+r^2}$-Lipschitz in the vector $(R,c,u)$.
\end{proof}
\section{Useful existing results}
\begin{lemma}[\cite{Talagrand1994,Sen2022}]\label{lemma::talagrand}
Given a separable class of measurable, real functions $\cF$ such that for $f\in \cF$ we have that $\norm{f}_\infty<U$,  $\sigma^2=\sup_{f\in\cF} \E f^2(X_i)$, and $\E(f(X_i))=0$, it holds that for $t>0$ 
\begin{multline*} 
    \Prr{\sup_{f\in\cF}\sumn f(X_i) \geq  \E \sup_{f\in\cF}\sumn f(X_i)+t}
    \leq \exp\left(\frac{-t^2/2}{2U\E \sup_{f\in\cF}\sumn f(X_i)+n\sigma^2+Ut/3}\right)  \\ \leq\exp\left(\frac{-t^2/2}{2U\sqrt{n\VC(\cF)}+n\sigma^2+Ut/3}\right).
\end{multline*}
\end{lemma}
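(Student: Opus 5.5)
Lemma~\ref{lemma::talagrand} is Talagrand's concentration inequality for suprema of empirical processes, in the Bousquet-type form, followed by a cruder bound on the expected supremum. The plan is to quote the first inequality and derive only the second.

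For the first inequality I would apply Bousquet's version of Talagrand's inequality to $Z=\sup_{f\in\cF}\sumn f(X_i)$. The hypotheses are met: the class is separable, so $Z$ is measurable; the functions are centered; they are bounded by $U$; and the weak variance is $v=n\sigma^2+2U\E Z$. Bousquet's bound reads
\begin{equation*}
\Prr{Z\geq \E Z+t}\leq \exp\left(-\frac{t^2}{2(v+Ut/3)}\right),
\end{equation*}
and substituting $v$ gives exactly the displayed expression, with $t^2/2$ in the numerator and $2U\E Z+n\sigma^2+Ut/3$ in the denominator. This step is essentially a citation; I would not reprove the entropy method.

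For the second inequality, note that the right-hand side $\exp(-a/(b+x))$ is increasing in $x$. So it suffices to show $\E Z\le \sqrt{n\VC(\cF)}$, up to the normalization implicit in the statement. I would proceed in three steps.
\begin{itemize}
\item Symmetrize, giving $\E Z\le 2\E\sup_f\sumn \epsilon_i f(X_i)$.
\item Condition on the data and apply Dudley's entropy integral with respect to the empirical $L_2$ norm.
\item Bound the empirical covering numbers using Haussler's bound for VC classes, $\log N(\epsilon)\lesssim \VC(\cF)\log(1/\epsilon)$.
\end{itemize}
Together these give $\E Z\lesssim U\sqrt{n\VC(\cF)}$.

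In the paper's application the functions are centered indicators scaled by $1/n$, so $U\le 1/n$, and the constants are absorbed there. The main obstacle is that the bare inequality $\E Z\le\sqrt{n\VC(\cF)}$ carries no constant and no factor of $U$. I would therefore state it with a universal constant, or with the $8\sqrt{\tau'(d+1)/n}$-type bound from \citet{Sen2022} Theorem 7.13 that is actually used, and check that this only changes numerical constants. To handle the centering, I would note that subtracting means preserves the VC-type entropy bounds up to constants.
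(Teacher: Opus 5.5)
Your proposal follows the paper's proof. The first bound is Bousquet's form of Talagrand's inequality with $v=n\sigma^2+2U\E Z$; your version $\exp(-t^2/(2(v+Ut/3)))$ is the correct one and gives exactly the displayed bound, whereas the paper's intermediate display has an extra factor of $2$ in the exponent that Bousquet's inequality does not provide. The second bound comes from inserting a VC-class bound on $\E Z$: the paper simply cites (85) of \citet{Sen2022}, and your symmetrization--Dudley--Haussler sketch reproduces that bound. Your caveat is accurate and applies equally to the paper, whose cited bound is only $\E Z\lesssim\sqrt{n\VC(\cF)}$. So in both arguments the second inequality is established only up to a universal constant (and a factor of $U$ for general envelopes), not literally as stated.
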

\begin{proof}
Let $Z=\sup_{f\in\cF}\sumn f(X_i)$, $\sigma^2=\sup_{f\in\cF} \E f^2(X_i)$ and $V=2U\E Z+n\sigma^2$. We have that applying Talagrand's/Bousquet's inequality \citep{Talagrand1994}, it holds that 
    \begin{align*}
        \Prr{Z\geq \E Z+t}\leq \exp\left(\frac{-2t^2}{2V+2Ut/3}\right).
\end{align*}
Now, we have that from (85) of \cite{Sen2022}, it holds that 
$\E Z/n\lesssim \sqrt{\VC(\cF)/n}.$ 
\begin{align*}
        \Prr{Z\geq \E Z+t}\leq \exp\left(\frac{-t^2/2}{2U\sqrt{n\VC(\cF)}+n\sigma^2+Ut/3}\right).&\qedhere
\end{align*}
\end{proof}
\begin{lemma}\label{lemma:proj-lip}
Consider a Hilbert space $\cH$ with norm induced by the inner product $\langle \cdot,\cdot\rangle$. For nonempty, closed, and convex $C\subset \cH$, define the projection of $x\in\cH$ onto $C$ as
\[
\Proj_C(x)\in \argmin_{z\in C}\norm{x-z}.
\]
Then $\Proj_C(x)$ is 1-Lipschitz. 
\end{lemma}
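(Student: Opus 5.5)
The plan is the standard variational characterization of the metric projection onto a closed convex set, followed by Cauchy--Schwarz. Write $p_x=\Proj_C(x)$ for $x\in\cH$. Existence and uniqueness of $p_x$ come from the Hilbert-space projection theorem: a minimizing sequence is Cauchy by the parallelogram law, and closedness and convexity of $C$ make the limit lie in $C$ and be unique. So $\Proj_C$ is a well-defined single-valued map and the $\argmin$ in the statement is a singleton.

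The first key step is the obtuse-angle characterization. For every $z\in C$,
\[
\ip{x-p_x,\,z-p_x}\le 0 .
\]
To prove it, use convexity: $p_x+t(z-p_x)\in C$ for $t\in[0,1]$. The function $t\mapsto \norm{x-p_x-t(z-p_x)}^2$ is therefore minimized at $t=0$ over $[0,1]$. Its right derivative at $0$ is $-2\ip{x-p_x,z-p_x}$, and this must be nonnegative, which gives the inequality.

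Next, fix $x,y\in\cH$. Apply the inequality at $x$ with $z=p_y$, and at $y$ with $z=p_x$, and add the two results. This gives
\[
\ip{(x-y)-(p_x-p_y),\,p_y-p_x}\le 0,
\qquad\text{i.e.}\qquad
\norm{p_x-p_y}^2\le \ip{x-y,\,p_x-p_y}.
\]
So $\Proj_C$ is firmly nonexpansive. Cauchy--Schwarz then gives $\norm{p_x-p_y}^2\le\norm{x-y}\,\norm{p_x-p_y}$. Dividing by $\norm{p_x-p_y}$ when it is nonzero (the other case is trivial) yields $\norm{p_x-p_y}\le\norm{x-y}$, which is the claimed 1-Lipschitz property.

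The only step needing care is the variational inequality, specifically differentiating the squared norm at the endpoint $t=0$. I would handle it by expanding directly:
\[
\norm{x-p_x-t(z-p_x)}^2=\norm{x-p_x}^2-2t\ip{x-p_x,z-p_x}+t^2\norm{z-p_x}^2\ge\norm{x-p_x}^2 .
\]
Dividing by $t>0$ and letting $t\downarrow 0$ gives the inequality. For the paper's use, $\cH=\rdd$ with the inner product $\ip{\cdot,\cdot}_{A}$ and $C$ a Mahalanobis ball. This is the setting in which Lemma~\ref{lemma::pme} and Lemma~\ref{lemma:proj-ip-lip-2} invoke the result.
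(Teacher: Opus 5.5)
Your proposal is correct and follows the paper's proof essentially step for step. The paper likewise tests the obtuse-angle inequality at $x$ against $p_y$ and at $y$ against $p_x$, adds them to get $\norm{p_x-p_y}^2\le\ip{x-y,\,p_x-p_y}$, and concludes by Cauchy--Schwarz; the only difference is that you also derive the variational inequality (and existence and uniqueness of the projection), which the paper simply asserts from closedness and convexity of $C$.
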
 
\begin{proof}
Let $p=\Proj_C(x)$ and $q=\Proj_C(y)$. Given that $C$ is closed and convex, we have that all $z\in C$,
\[
\langle x-p,\, z-p\rangle \le 0
\andd
\langle y-q,\, z-q\rangle \le 0.
\]
Taking $z=q$ in the first inequality and $z=p$ in the second yields
\[
\langle x-p,\, q-p\rangle \le 0
\andd
\langle y-q,\, p-q\rangle \le 0.
\]
Adding them gives
\[
\langle x-p,\, q-p\rangle +
\langle q-y,\, q-p\rangle \le 0 \iff \langle x-y+q-p,\, q-p\rangle \le 0,
\]
which then implies
\[
\langle x-y-(p-q),\, p-q\rangle \ge 0
\iff
\langle x-y,\, p-q\rangle \ge \norm{p-q}^2.
\]
By the Cauchy--Schwarz inequality,
\[
\norm{p-q}^2 \le \langle x-y,\, p-q\rangle \le \norm{x-y}\,\norm{p-q}.
\]
\end{proof}

\section{Simulation results from sensitivity study}\label{app::tuning}
We conducted a sensitivity analysis to assess how the performance of the balloon mean depends on the tuning parameters. 
We then varied one tuning parameter at a time while fixing the others at default values. The default values were given by $\tilde \mu_0=(0,\ldots,0)^\top$, $\tilde R_0=50\sqrt d$, $\beta=1.01$, $M=2$, $\tau_{1}=\tau_2=0.9$ for the high-$\tau$ variant, and $\tau_1=0.8$ and $\tau_2=0.6$ for the low-$\tau$ variant. When assessing each tuning parameter, we varied the tuning parameters as follows, $\tilde R_0=c\sqrt d$ with $c\in\{2,4,5,10,15,30\}$, $\beta\in\{1.001,1.01,1.05,1.1\}$,  $\tau\in\{0.7,0.8,0.85,0.9,0.95,0.99\}$, $M\in\{2,3,4,5,6\}$. When we varied $M$, we set $\tau_{1}=\ldots=\tau_{M-1}=0.9$ for the high-$\tau$ variant, and $\tau_{M-2}=0.8$ and $\tau_{M-1}=0.6$, otherwise $\tau_i=0.9$ for the low-$\tau$ variant. 
For $\tilde \mu_0$, we considered $\tilde\mu_0 = c\mu$, with $c \in \{1.0,1.5,2.0,4.0,10.0\}$.  
This moves $\tilde \mu_0$ along the line running from the origin to $\mu$ and beyond.  
For each parameter, we considered Gaussian, contaminated Gaussian, $t_3$, and skewed distributions, dimensions $d\in\{5,20,50,100\}$, sample sizes $n\in\{500,2000,5000\}$, and privacy budgets $\rho\in\{0.1,1\}$, repeating each configuration for 200 trials. Performance was measured using Mahalanobis error with respect to the true mean. 
The full results are given below. 
\subsection{Effect of initial mean}
We observed minimal sensitivity of the method to the initial mean $\tilde \mu_0$, see below. 
\begin{figure}[htbp]
\centering
  \includegraphics[width=\linewidth]{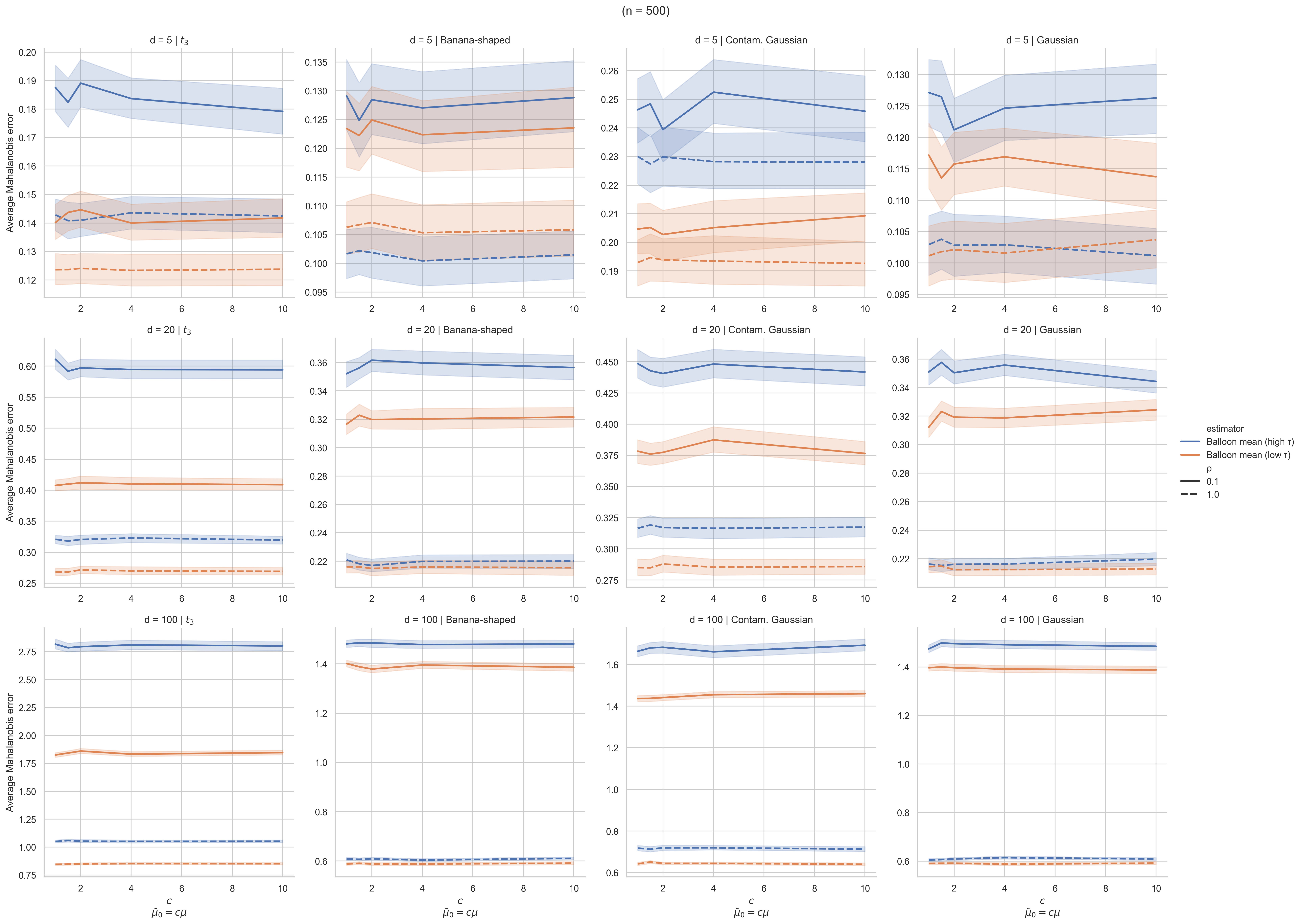}
\caption{Average Mahalanobis error as a function of $c$, where $\tilde\mu_0=c\mu$ under four distributions and differing dimensions $d$ at $n=500$. Solid lines show the mean error over repetitions, with shaded bands indicating a 95\% confidence interval over the simulation runs. We report the balloon mean with high-$\tau$ and low-$\tau$ settings, across privacy levels $\rho \in \{0.1,1.0\}$ (linetypes).}
\label{fig:sweep_mu0_by_n_500}
\end{figure}
\begin{figure}[htbp]
\centering
  \includegraphics[width=\linewidth]{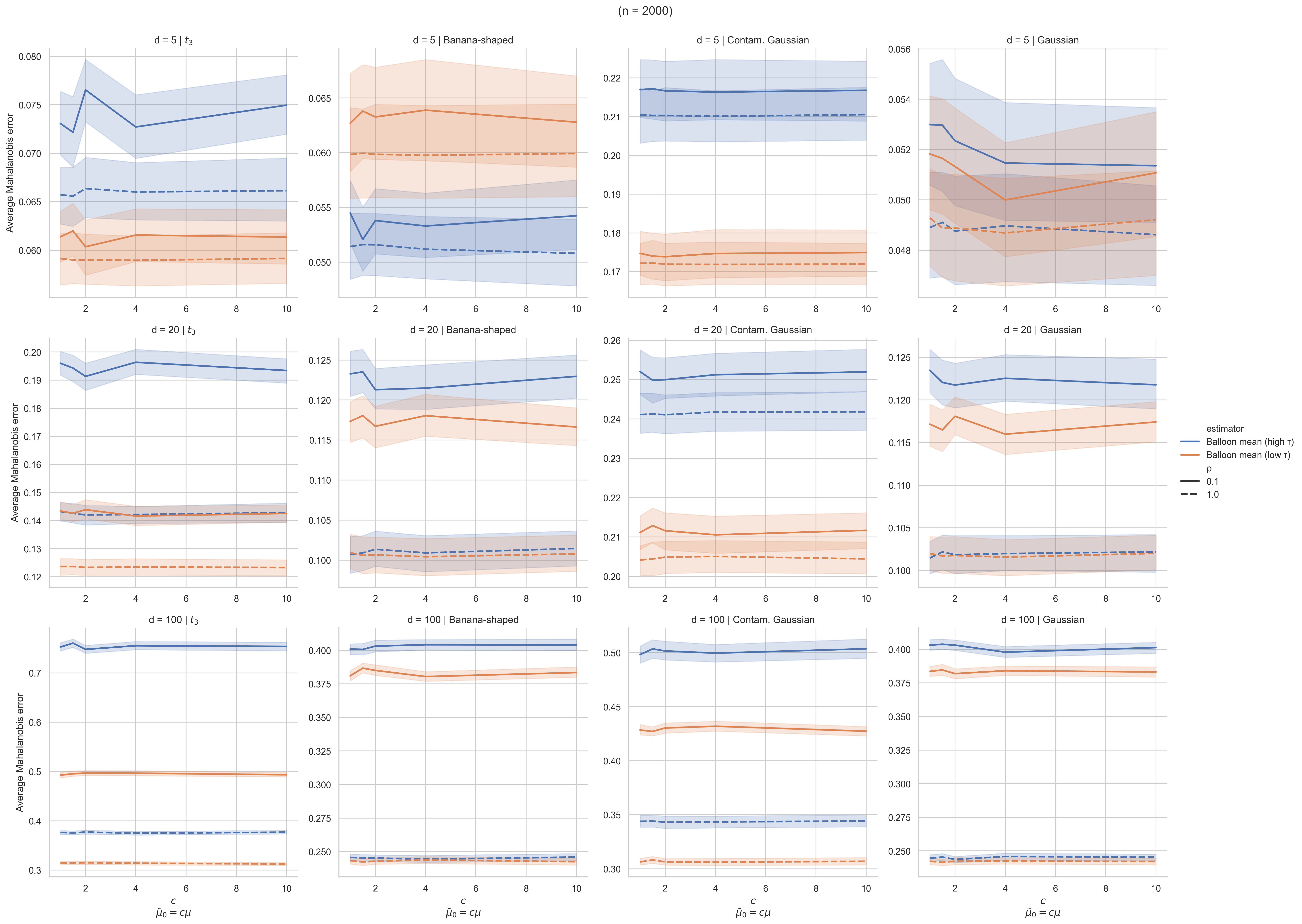}
\caption{Average Mahalanobis error as a function of $c$, where $\tilde\mu_0=c\mu$ under four distributions and differing dimensions $d$ at $n=2000$. Solid lines show the mean error over repetitions, with shaded bands indicating a 95\% confidence interval over the simulation runs. We report the balloon mean with high-$\tau$ and low-$\tau$ settings, across privacy levels $\rho \in \{0.1,1.0\}$ (linetypes).}
\label{fig:sweep_mu0_by_n_2000}
\end{figure}
\FloatBarrier
\subsection{Effect of initial radius}
We observed minimal sensitivity of the method to the radius $\tilde R_0$, see below. 
\begin{figure}[htbp]
\centering
\includegraphics[width=\linewidth]{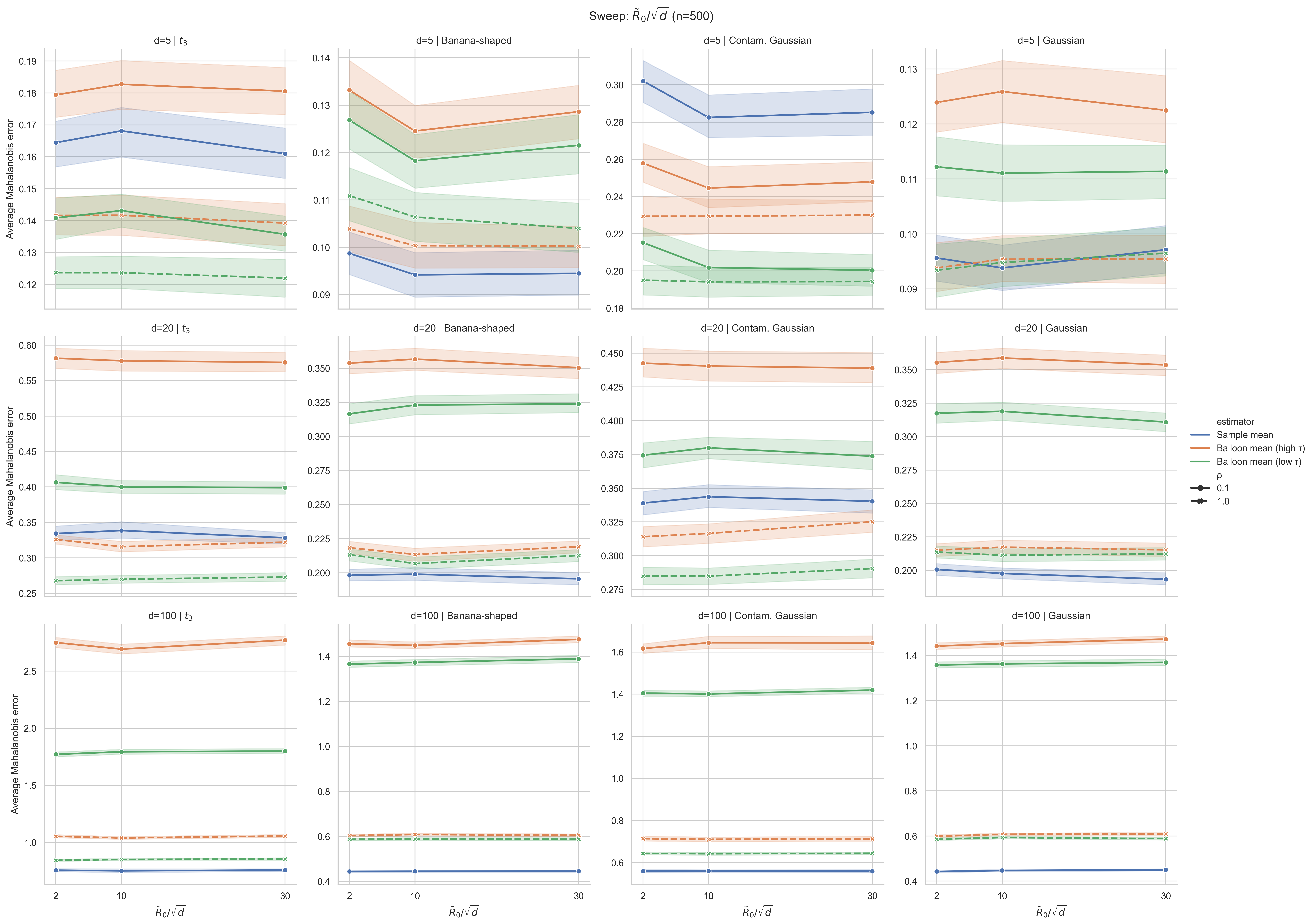}
\caption{Average Mahalanobis error as a function of $\tilde R_0/\sqrt{d}$ for mean estimation under four distributions and differing dimensions $d$ at $n=500$. Solid lines show the mean error over repetitions, with shaded bands indicating a 95\% confidence interval over the simulation runs. We report the balloon mean with high-$\tau$ and low-$\tau$ settings, across privacy levels $\rho \in \{0.1,1.0\}$ (linetypes).}
\label{fig:sweep_Rmax_by_n_500}
\end{figure}
\begin{figure}[htbp]
\centering
\includegraphics[width=\linewidth]{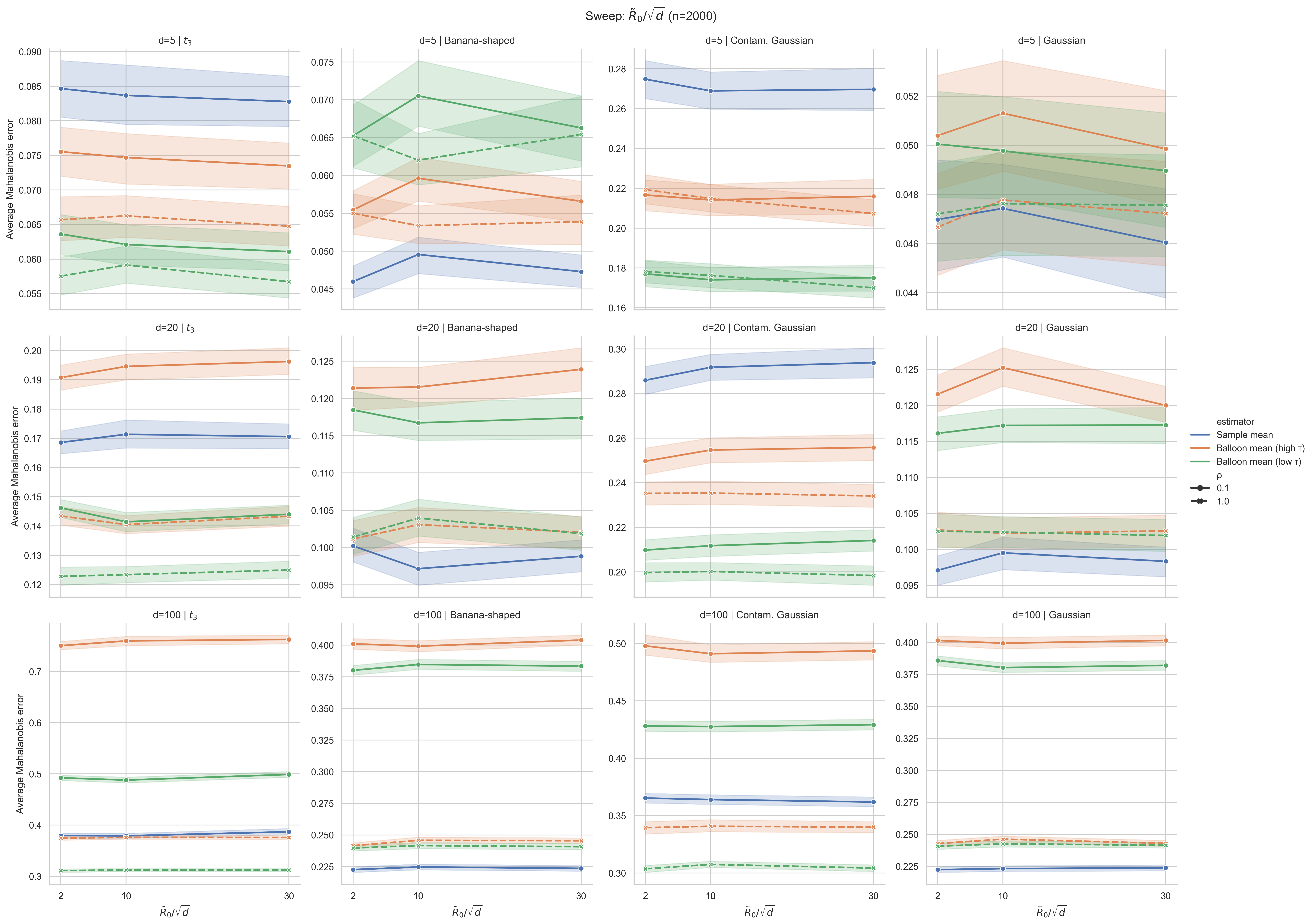}
\caption{Average Mahalanobis error as a function of $\tilde R_0/\sqrt{d}$ for mean estimation under four distributions and differing dimensions $d$ at $n=2000$. Solid lines show the mean error over repetitions, with shaded bands indicating a 95\% confidence interval over the simulation runs. We report the balloon mean with high-$\tau$ and low-$\tau$ settings, across privacy levels $\rho \in \{0.1,1.0\}$ (linetypes).}
\label{fig:sweep_Rmax_by_n_2k}
\end{figure}
\FloatBarrier

\subsection{Effect of grid size}
We observed minimal sensitivity of the method to the grid size $\beta$, see below. 
\begin{figure}[htbp]
\centering
    \includegraphics[width=\linewidth]{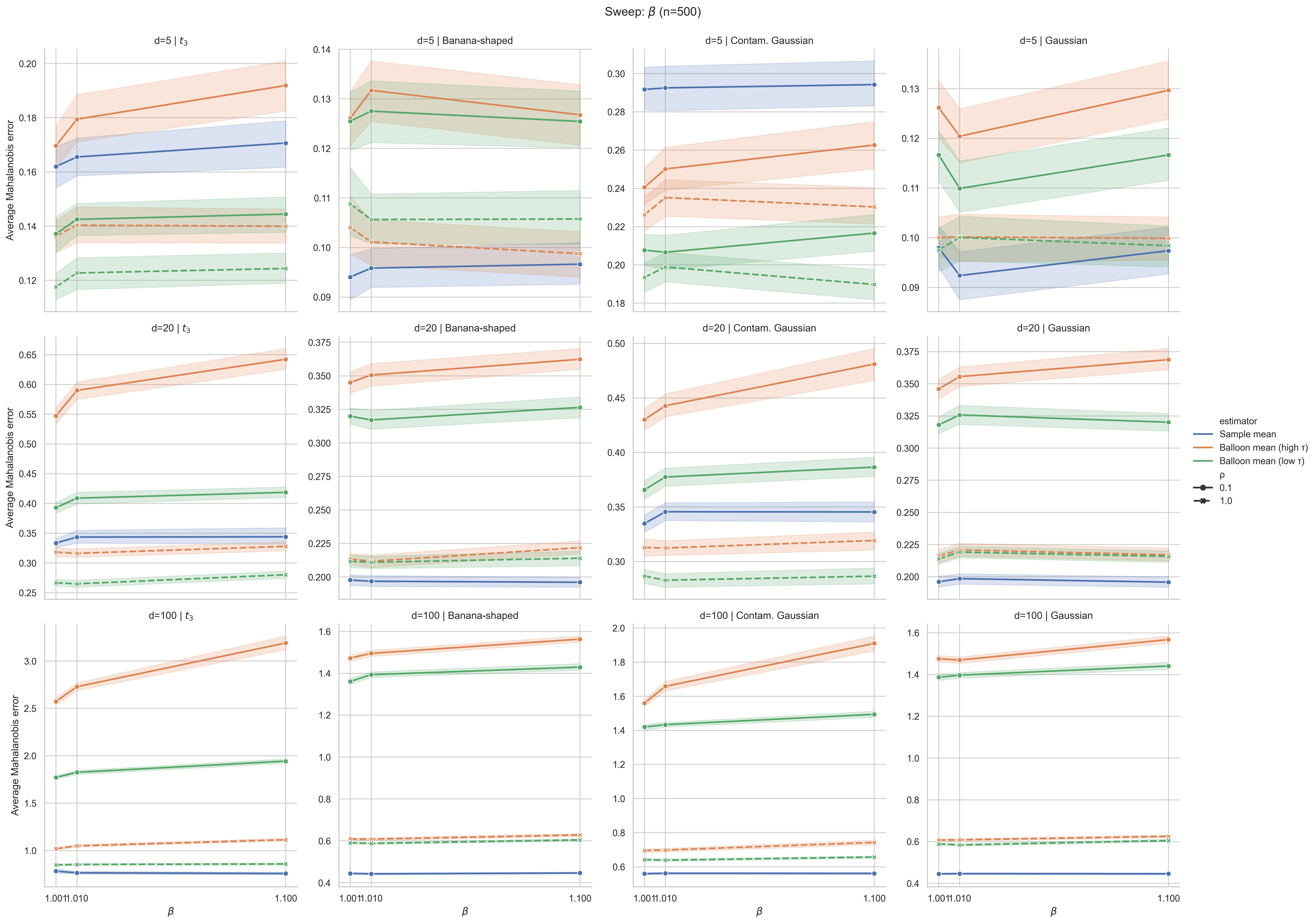}
    \caption{Average Mahalanobis error as a function of $\beta$ for mean estimation under four distributions and differing dimensions $d$ at $n=500$. Solid lines show the mean error over repetitions, with shaded bands indicating a 95\% confidence interval over the simulation runs. We report the balloon mean with high-$\tau$ and low-$\tau$ settings, across privacy levels $\rho \in \{0.1,1.0\}$ (linetypes).}
    \label{fig:beta-500}
\end{figure}
\begin{figure}[htbp]
\centering
    \includegraphics[width=\linewidth]{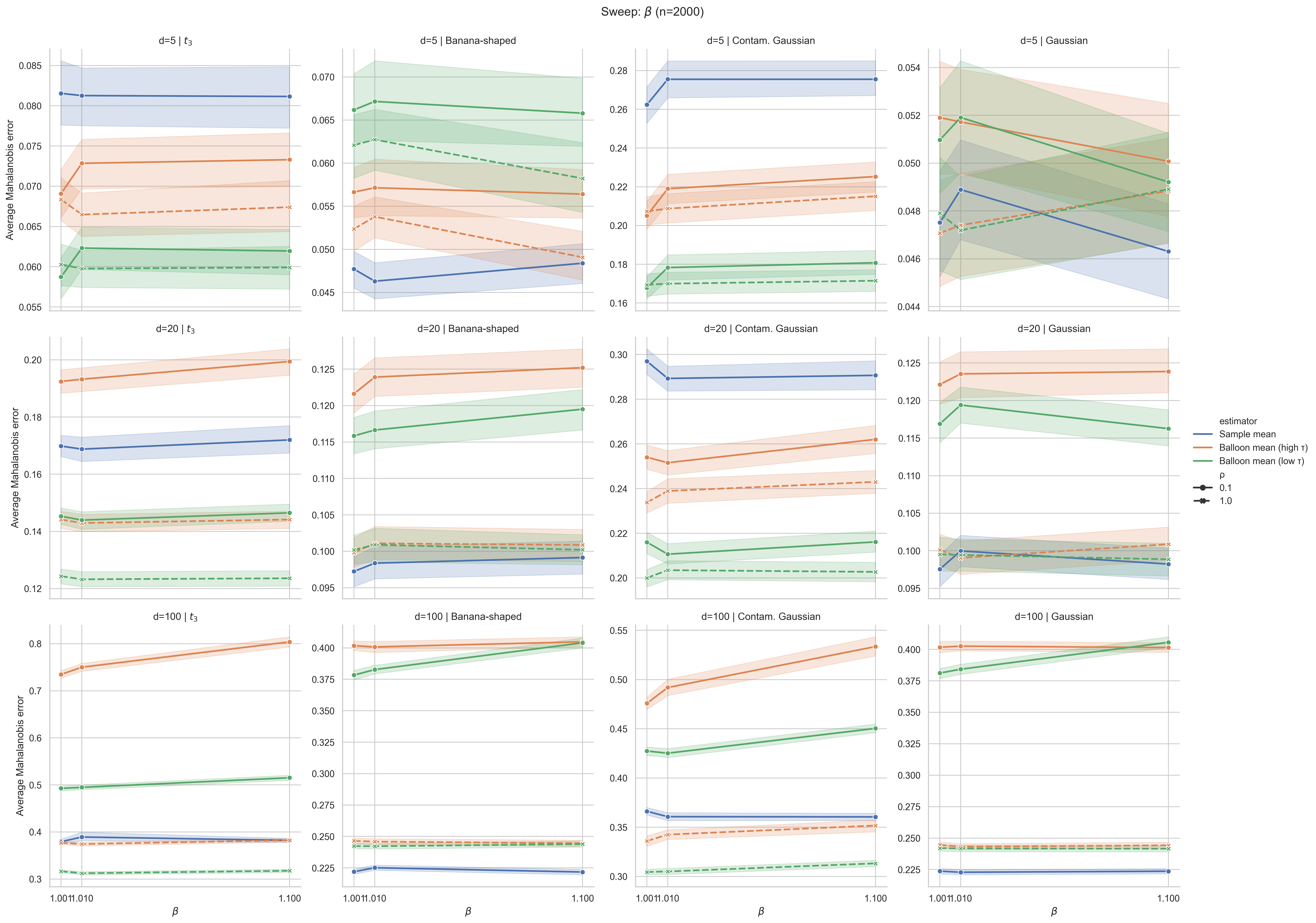}
        \caption{Average Mahalanobis error as a function of $\beta$ for mean estimation under four distributions and differing dimensions $d$ at $n=2000$. Solid lines show the mean error over repetitions, with shaded bands indicating a 95\% confidence interval over the simulation runs. We report the balloon mean with high-$\tau$ and low-$\tau$ settings, across privacy levels $\rho \in \{0.1,1.0\}$ (linetypes).}
    \label{fig:beta-2000}
\end{figure}
\FloatBarrier
\subsection{Effect of the target fraction $\tau$}
Here we present the results for the effect of the target fraction $\tau$. In high-dimensions smaller $\tau$ values generally perform better. This is consistent across generating distribution and $\tau$ variant. In the contaminated settings (heavy-tailed and contaminated Gaussian) smaller $\tau$ values perform better, as expected. Looking at the clean settings in low-moderate dimensions, we first consider the low-$\tau$-variant. When $n=500$, $\tau$ has relatively minimal effect on the low-$\tau$ variant. When $n$ is large, and the dimension is very small, larger $\tau$ appears to be preferred, but this effect disappears when $d$ becomes moderate and again lower $\tau$ values are preferred. Looking at the high-$\tau$ variant, when $n=500$, lower $\tau$ values are preferred. When $n=2000$, $\tau$ has little effect in the Gaussian setting when $d=2$, and smaller $\tau$ is preferred when $d=20$. For the banana-shaped distribution, when $d=2$ larger $\tau$ is preferred, and smaller $\tau$ is preferred when $d=20$. There is a visible effect of $\tau$ , but it is not large; across all settings, varying 
$\tau$ only moderately changes the error.
\begin{figure}[htbp]
\centering
    \includegraphics[width=\linewidth]{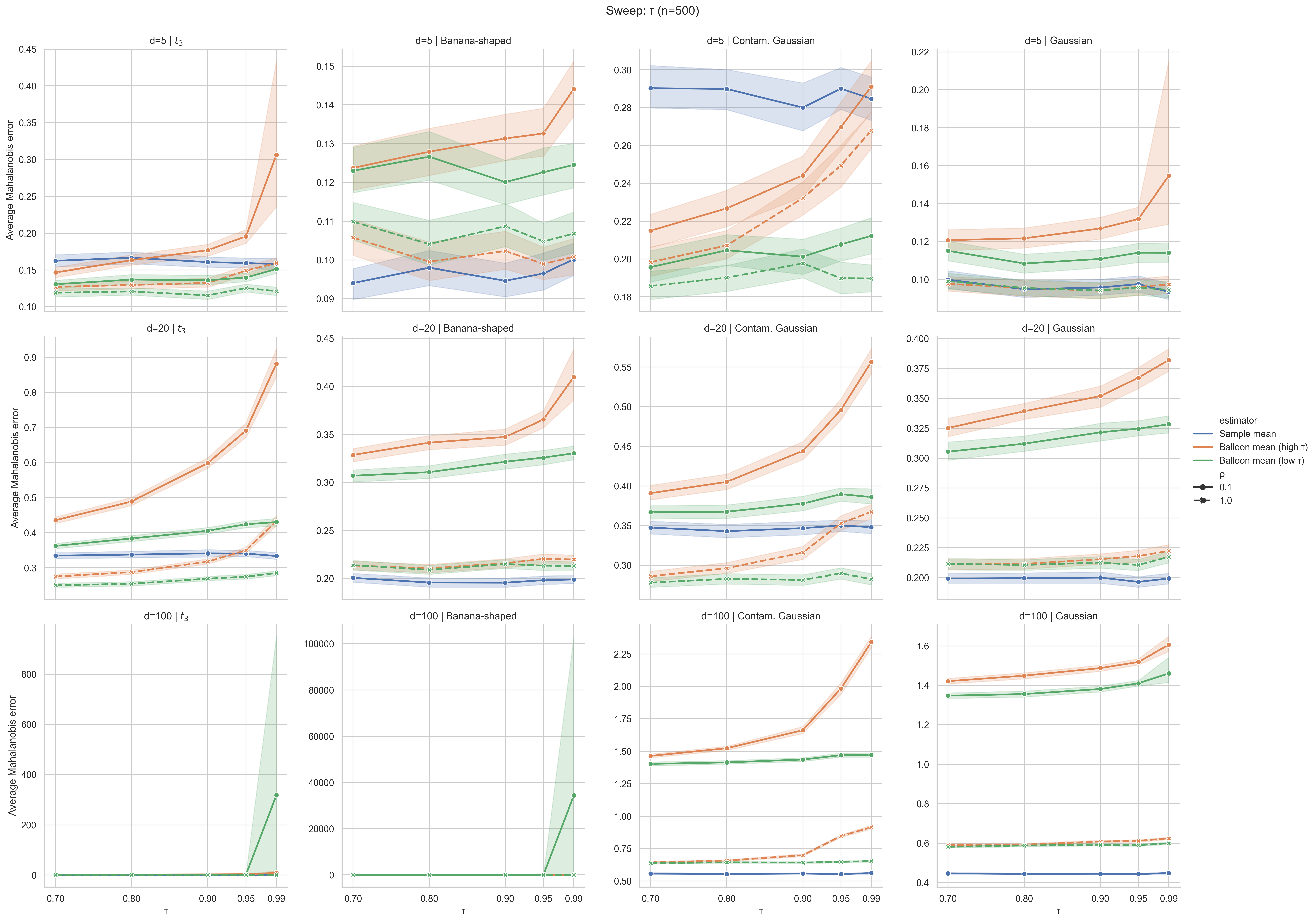}
        \caption{Average Mahalanobis error as a function of $\tau$ for mean estimation under four distributions and differing dimensions $d$ at $n=500$. Solid lines show the mean error over repetitions, with shaded bands indicating a 95\% confidence interval over the simulation runs. We report the balloon mean with high-$\tau$ and low-$\tau$ settings, across privacy levels $\rho \in \{0.1,1.0\}$ (linetypes).}
    \label{fig:tau-500}
\end{figure}
\begin{figure}[htbp]
\centering
    \includegraphics[width=\linewidth]{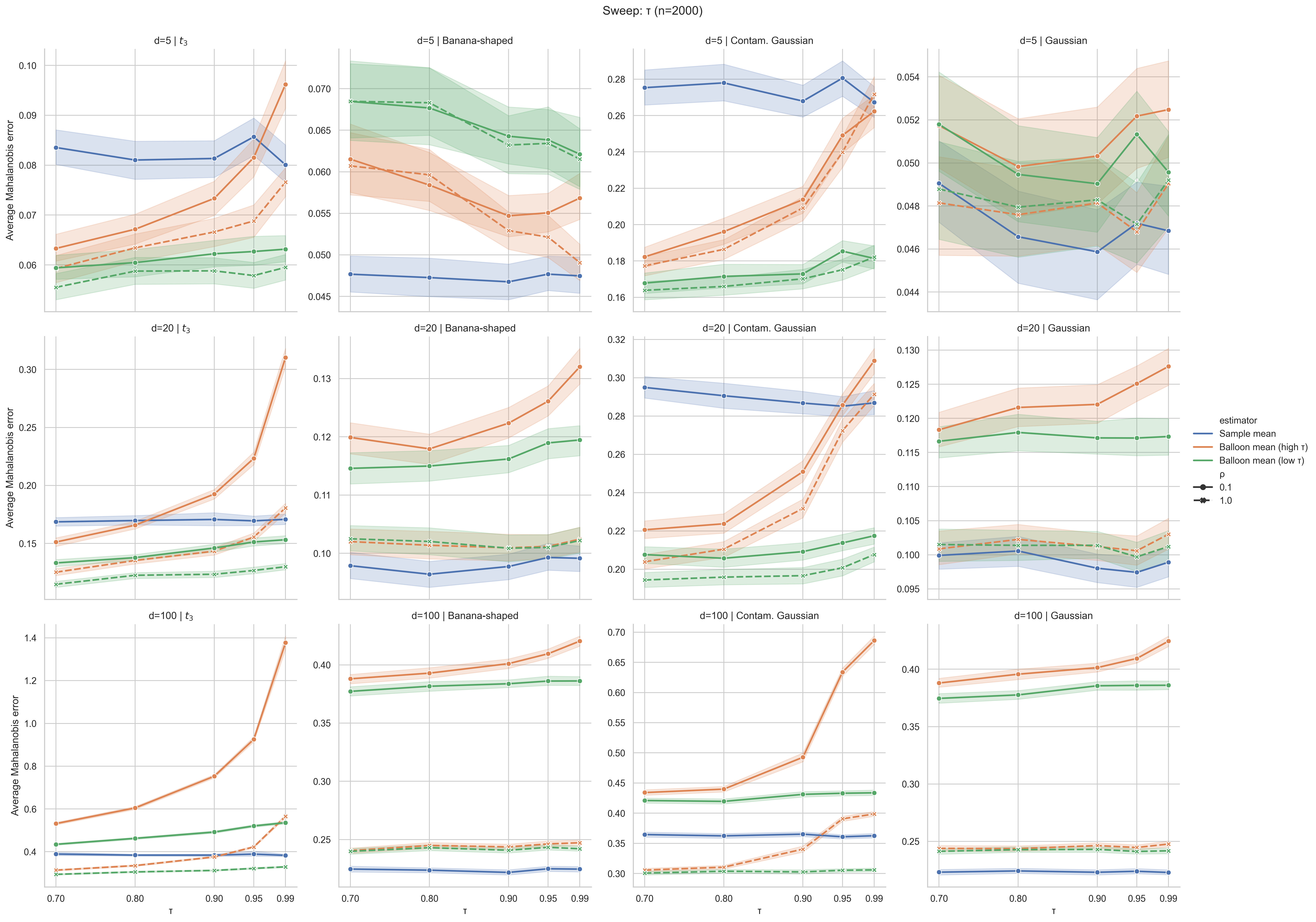}
        \caption{Average Mahalanobis error as a function of $\tau$ for mean estimation under four distributions and differing dimensions $d$ at $n=2000$. Solid lines show the mean error over repetitions, with shaded bands indicating a 95\% confidence interval over the simulation runs. We report the balloon mean with high-$\tau$ and low-$\tau$ settings, across privacy levels $\rho \in \{0.1,1.0\}$ (linetypes).}
    \label{fig:tau-2000}
\end{figure}
\FloatBarrier
\subsection{Effect of the number of iterations $M$}
Here we present the results for the effect of the number of iterations $M$. In low dimensions, the error appears to stabilize at three iterations ($M=3$), whereas in high dimensions, the error stabilizes at four iterations ($M=4$). This is consistent across generating distribution, sample size $n$, and does not depend on the $\tau$ variant. 
\begin{figure}[htbp]
\centering
    \includegraphics[width=\linewidth]{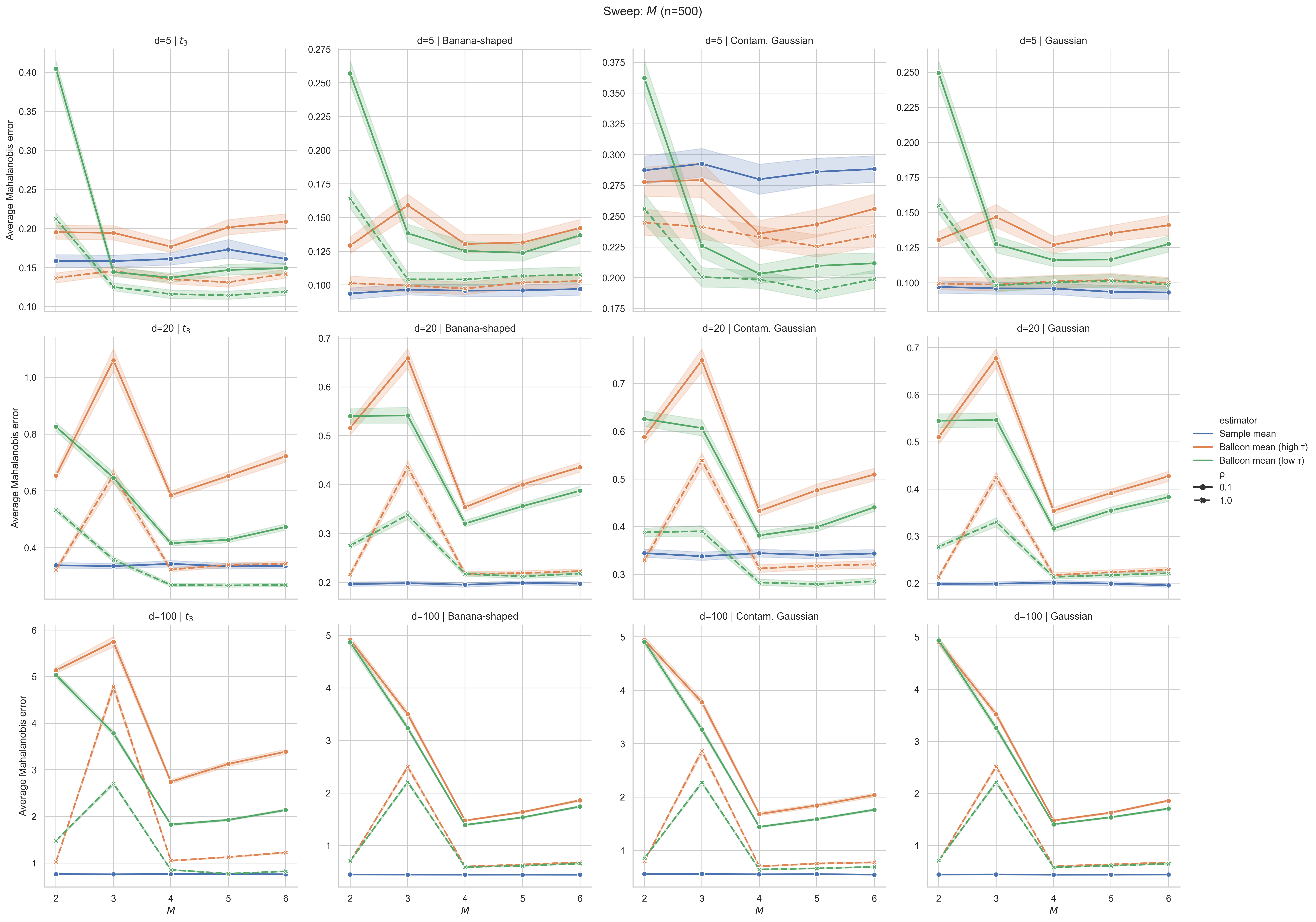}
            \caption{Average Mahalanobis error as a function of $M$ for mean estimation under four distributions and differing dimensions $d$ at $n=2000$. Solid lines show the mean error over repetitions, with shaded bands indicating a 95\% confidence interval over the simulation runs. We report the balloon mean with high-$\tau$ and low-$\tau$ settings, across privacy levels $\rho \in \{0.1,1.0\}$ (linetypes).}
    \label{fig:iterations-500}
\end{figure}
\begin{figure}[htbp]
\centering
    \includegraphics[width=\linewidth]{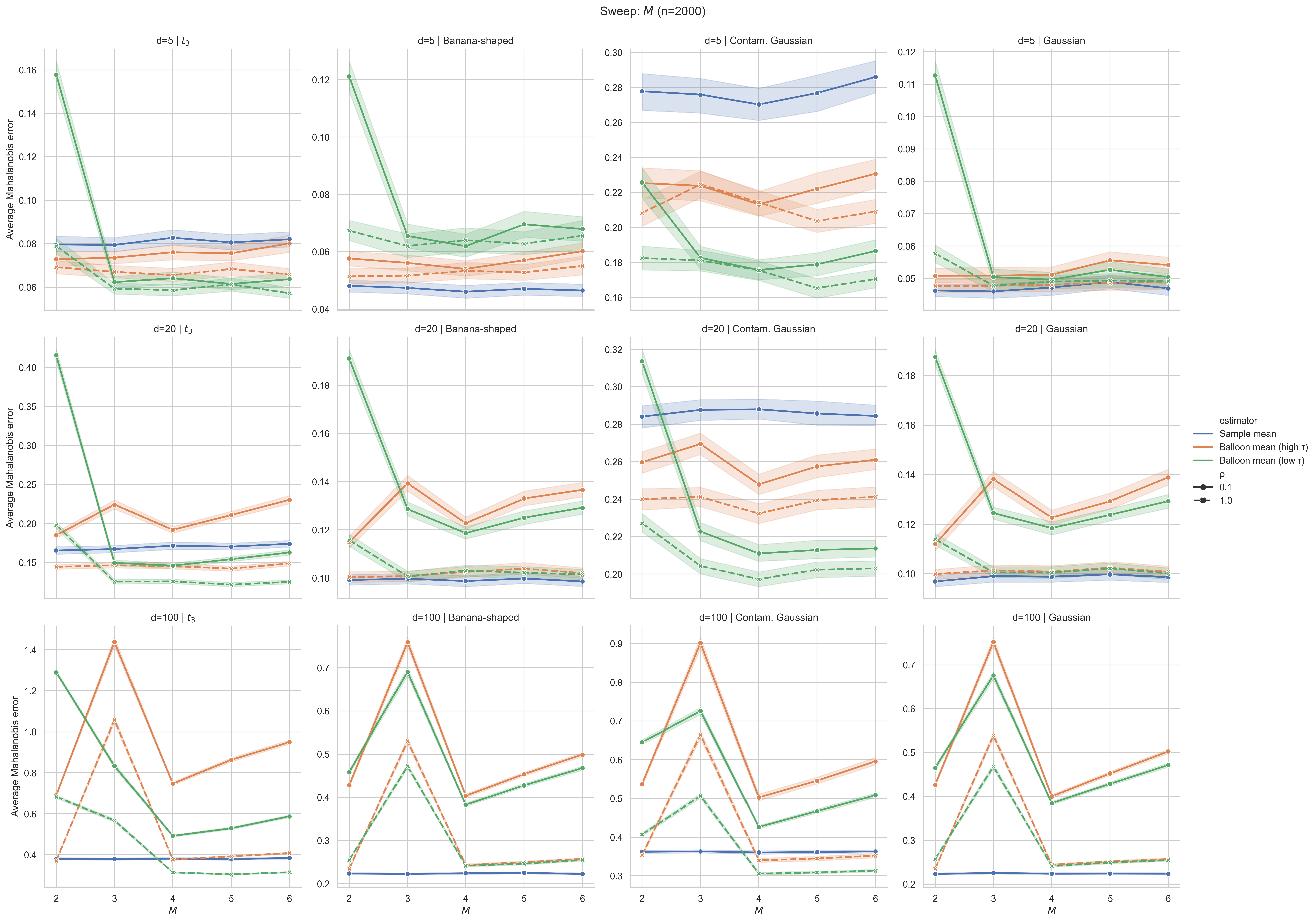}
           \caption{Average Mahalanobis error as a function of $M$ for mean estimation under four distributions and differing dimensions $d$ at $n=2000$. Solid lines show the mean error over repetitions, with shaded bands indicating a 95\% confidence interval over the simulation runs. We report the balloon mean with high-$\tau$ and low-$\tau$ settings, across privacy levels $\rho \in \{0.1,1.0\}$ (linetypes).}
    \label{fig:iterations-2000}
\end{figure}
\FloatBarrier
\section{Investigating different $\tau$ schedules}\label{app::tau-schedule}
To study the effect of the $\tau$-schedule across iterations, we fixed the number of iterations to $M=4$ and compared three types of schedules: constant, increasing, and decreasing. We centered these schedules around $\tau=0.9$. For the high-$\tau$ variant, we set $(\tau,\tau,\tau)$ for the constant schedule, $(\tau-0.2,\tau-0.1,\tau)$ for the increasing schedule, and $(\tau,\tau-0.1,\tau-0.2)$ for the decreasing schedule. For the low-$\tau$ variant, we did the same thing, instead with $\tau=0.6$. All experiments used $\beta=1.01$ and $\tilde R_0=50\sqrt d$. We varied the distribution (Gaussian, contaminated Gaussian, $t_3$, and banana), the dimension $d\in\{5,20,50,100\}$, the sample size $n\in\{500,2000\}$, and the privacy budget $\rho\in\{0.1,1\}$, and repeated each configuration for 50 trials. The results can be seen below in Figures~\ref{fig:tau-nonrobust-smallrho}--\ref{fig:tau-robust-largerho}. 

We first consider the high-$\tau$ (non-robust) variant. In contaminated and heavy-tailed settings, the decreasing-$\tau$ schedule consistently yields lower error, which is a result of the robustness awarded by smaller $\tau$ values. In the Gaussian and banana-shaped settings, decreasing $\tau$ is also preferred in higher-privacy regimes (smaller $\rho$) and in higher dimensions. Outside of these regimes, the three $\tau$ schedules perform similarly, and no strong separation is observed.

We next consider the low-$\tau$ (robust) variant. Here, a decreasing-$\tau$ schedule provides some additional robustness beyond the already conservative choice of $\tau$, but less so than in the high-$\tau$ variant. In the Gaussian and banana-shaped settings, however, the three $\tau$ schedules behave comparably, and we do not observe a consistent advantage for any single schedule.
\FloatBarrier
\begin{figure}[htbp]
    \centering
    \includegraphics[width=\textwidth]{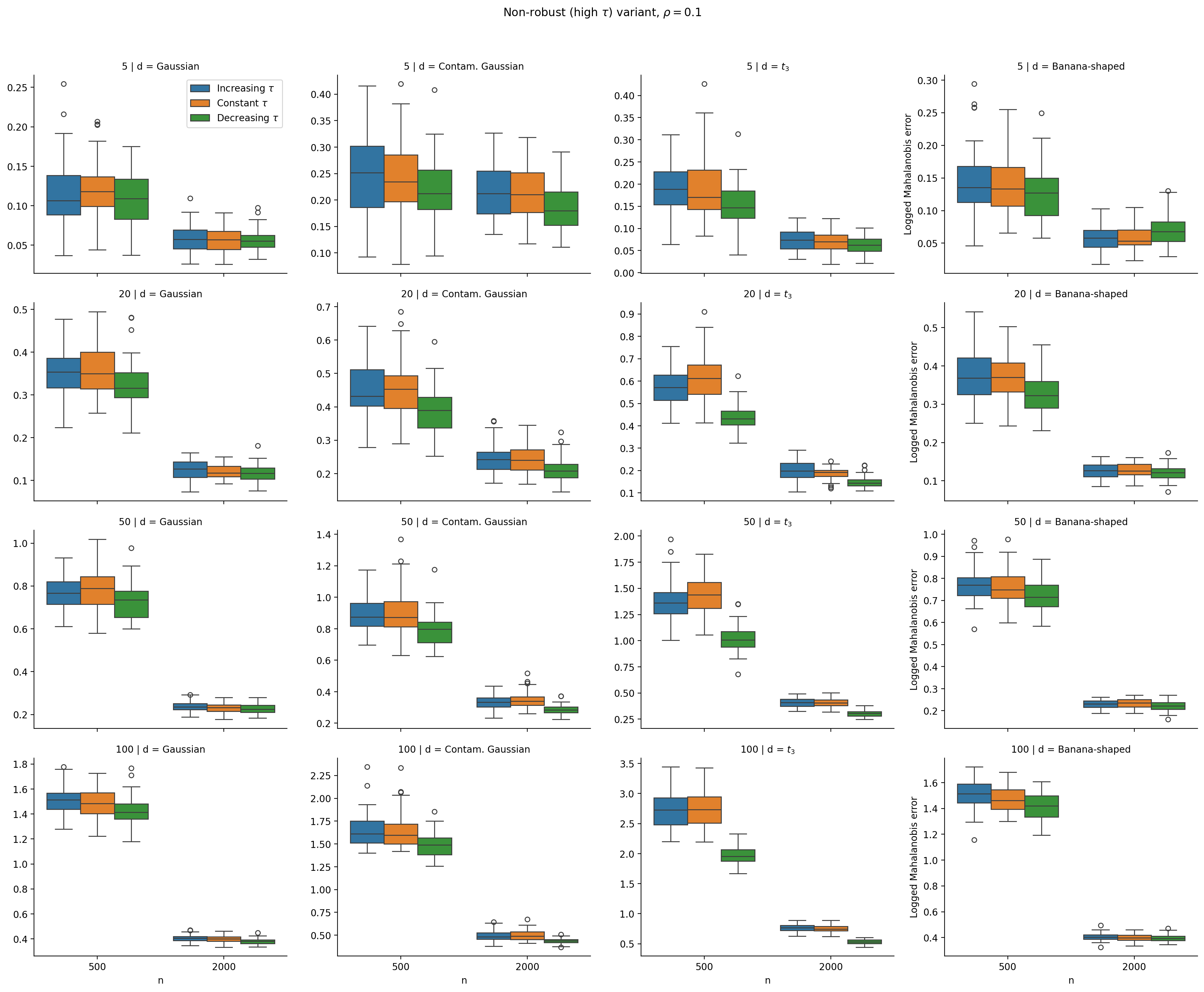}
    \caption{
    Non-robust balloon mean variants ($\rho = 0.1$). 
    Boxplots of the Mahalanobis error across distributions and dimensions.
    The three colors correspond to increasing, constant, and decreasing $\tau$ schedules.
    }
    \label{fig:tau-nonrobust-smallrho}
\end{figure}
\begin{figure}[htbp]
    \centering
    \includegraphics[width=\textwidth]{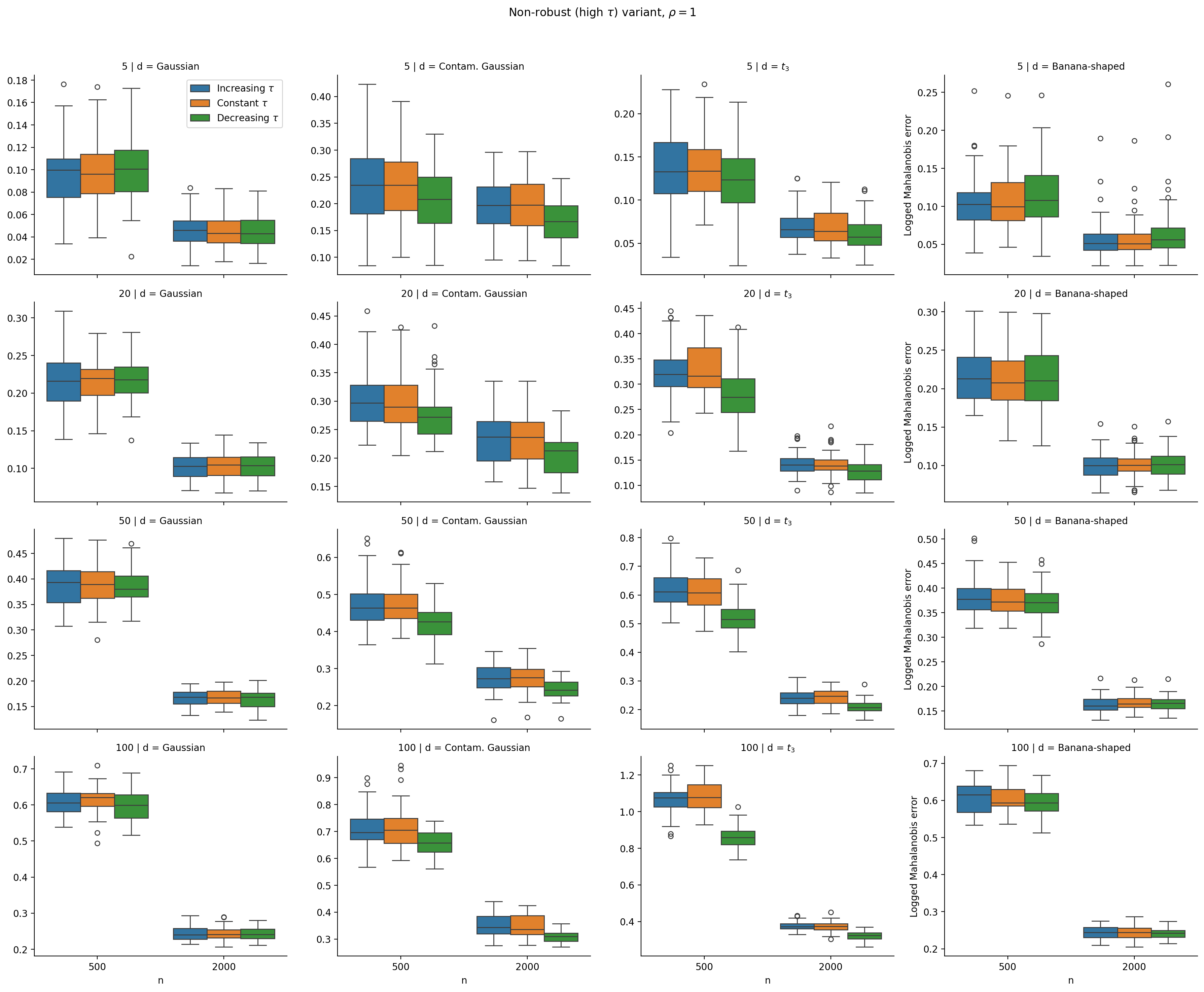}
    \caption{
    Non-robust balloon mean variants ($\rho = 1$). 
    Boxplots of the Mahalanobis error across distributions and dimensions.
    The three colors correspond to increasing, constant, and decreasing $\tau$ schedules.
    }
    \label{fig:tau-nonrobust-largerho}
\end{figure}
\begin{figure}[htbp]
    \centering
    \includegraphics[width=\textwidth]{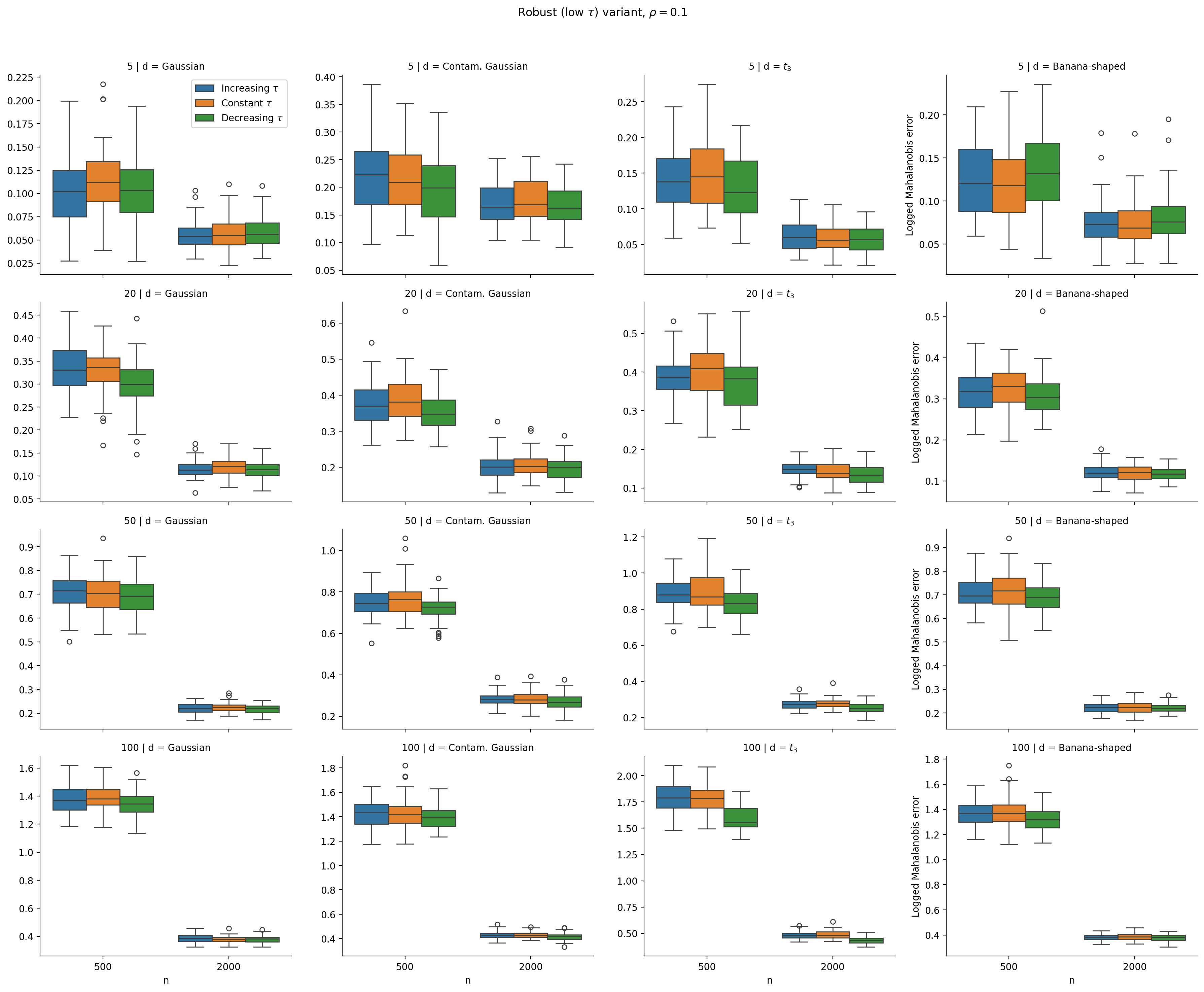}
    \caption{
    Robust balloon mean variants ($\rho = 0.1$). 
    Boxplots of the Mahalanobis error across distributions and dimensions.
    The three colors correspond to increasing, constant, and decreasing $\tau$ schedules.
    }
    \label{fig:tau-robust-smallrho}
\end{figure}
\begin{figure}[htbp]
    \centering
    \includegraphics[width=\textwidth]{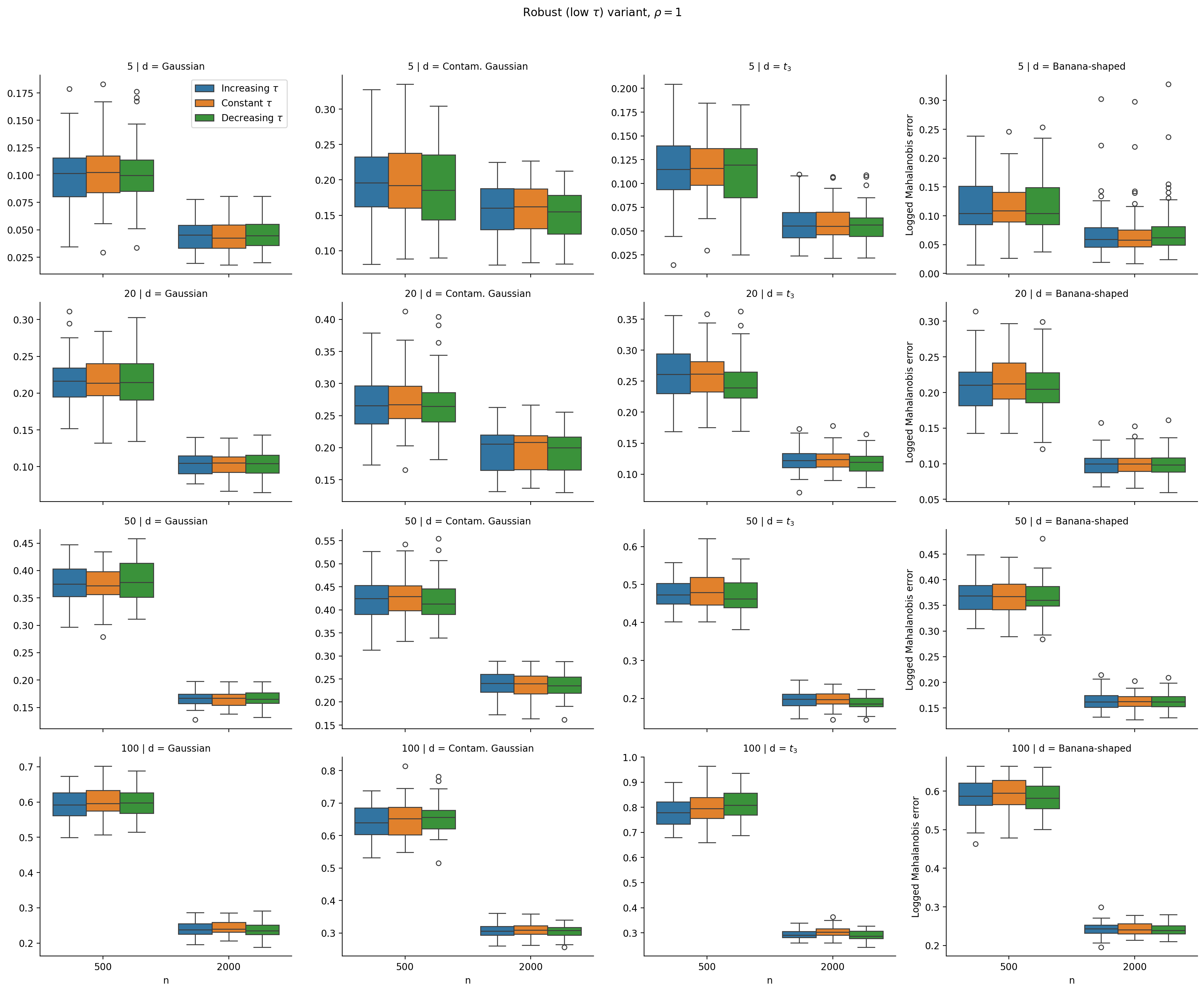}
    \caption{
    Robust balloon mean variants ($\rho = 1$). 
    Boxplots of the Mahalanobis error across distributions and dimensions.
    The three colors correspond to increasing, constant, and decreasing $\tau$ schedules.
    }
    \label{fig:tau-robust-largerho}
\end{figure}
\FloatBarrier



\end{document}